\newtheorem{thm}{Theorem}
\newtheorem{prop}{Proposition}
\newtheorem{cor}{Corollary}
\newtheorem{rmk}{Remark}
\title{Answering Count Queries for Genomic Data with Perfect Privacy}
\author{\IEEEauthorblockN{Bo~Jiang \quad Mohamed Seif \quad Ravi~Tandon \quad Ming~Li\\
}
\IEEEauthorblockA{Department of Electrical and Computer Engineering}\\
\IEEEauthorblockA{University of Arizona, Tucson, AZ, 85721\\
Email: $\{\textit{bjiang, mseif, tandonr, lim}\}$@email.arizona.edu}
}
\begin{document}
\maketitle
\begin{abstract}
In this paper, we consider the problem of answering count queries for genomic data subject to perfect privacy constraints. 
Count queries are often used in applications that collect aggregate (population-wide) information from biomedical Databases (DBs) for analysis, such as Genome-wide association studies. Our goal is to design mechanisms for answering count queries of the following form: \textit{How many users in the database have a specific set of genotypes at certain locations in their genome?} At the same time, we aim to achieve perfect privacy (zero information leakage) of the sensitive genotypes at a pre-specified set of secret locations. The sensitive genotypes could indicate rare diseases and/or other health traits one may want to keep private. We present both local and central count-query mechanisms for the above problem that achieves perfect information-theoretic privacy for sensitive genotypes while minimizing the expected absolute error (or per-user error probability, depending on the setting) of the query answer. We also derived a lower bound of the per-user probability of error for an arbitrary query-answering mechanism that satisfies perfect privacy. We show that our mechanisms achieve error close to the lower bound, and match the lower bound for some special cases. We numerically show that the performance of each mechanism depends on the data prior distribution, the intersection between the queried and sensitive 
genotypes, and the strength of the correlation in the genomic data sequence.
\end{abstract}

\section{Introduction} \label{sec:introduction}

Genetic research is experiencing an unprecedented growth in terms of the amount of personal data collected in large repositories \cite{ohno2017finding}. The human genome is the complete set of genetic information, which is composed of four different bases (A, C, G, T). Genetic information is encoded inside chromosomes, and each chromosome contains genes responsible for various functions controlling the human body all together\cite{AKGUN2015103}. They provide medical researchers invaluable information that can play a role in different interesting applications such as sequencing \cite{motahari2013information}, genome sequence assembly \cite{shomorony2016information, si2017information} and Genome-Wide Association Studies (GWAS) \cite{cho2018secure}.
The main objective of these applications is to enable the investigation of variants in human genomes with different biological or health-related traits through interactive queries with biomedical databases (DBs) or directly from each record. For example, medical researchers might want to query how many users in a dataset have specific genotypes at certain locations in their genome. This can help to reveal relationships that exist in certain genotypes and biological traits or discover particular diseases like cancer \cite{easton2008genome}, diabetes, and even more complex ones \cite{hirschhorn2005genome}. Such systems can also be used as a powerful tool for finding new drug targets \cite{cao2014gwas}.

\begin{figure*}[t]
    \centering
    {\includegraphics[width=1.5\columnwidth]{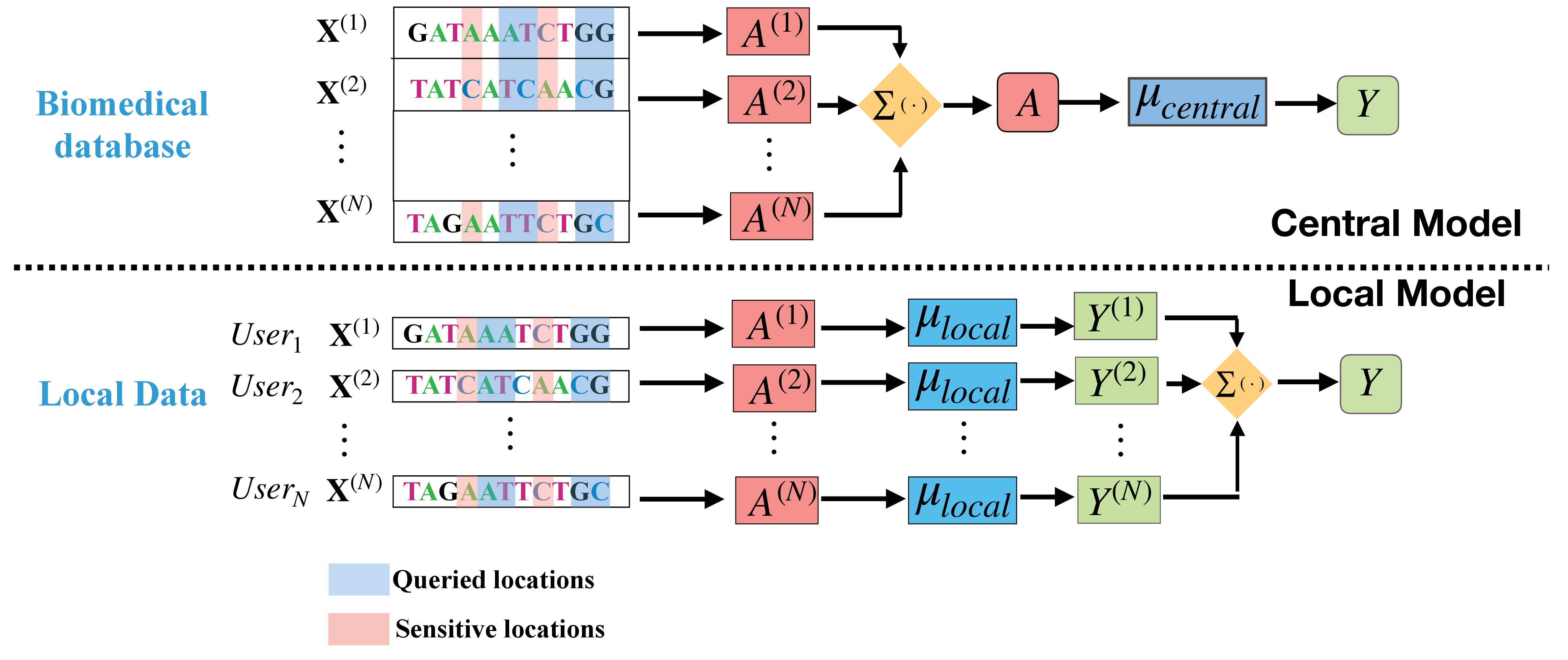}
    \caption{The system model. The queried sub-sequence $X_{\mathcal{L}}^{(m)}$ for the $m$-th user  and the query $v_{\mathcal{L}}$ altogether pass through a deterministic equality test function with a true answer $A^{(m)}$, followed by a private release mechanism. 
    The perturbed answer $Y$ guarantees perfect privacy for sensitive genotypes.
    }
    \vspace{-10pt}
    \label{fig:genome_system_model}}
\end{figure*}

Despite the impact of these applications on health services,  privacy remains a fundamental concern that needs to be addressed. Indeed, existing query-answering systems such as STRIDE \cite{lowe2009stride} and i2b2 \cite{murphy2011strategies} can leak sensitive genotypes about individuals \cite{homer2008resolving, shringarpure2015privacy, deznabi2017inference}. Therefore, a significant research effort has been made to enable privacy-preserving access to genomic data. Recent interesting works in the literature addressed this issue in genomics under different settings, including sequential genomic data release \cite{ye2020mechanisms}, query-answering in biomedical DBs \cite{raisaro2017addressing, simmons2016enabling, cho2020privacy}. Depending on whether the mechanism relies on a trusted third party, these privacy protection models can be classified into two categories: centralized models assume there is a trusted server that processes the dataset and releases privatized answers to specific types of queries. For example, Cho et al. \cite{cho2020privacy} proposed differentially private (DP) mechanisms for count queries using geometric additive noise mechanisms. Local models, on the other hand, enables each data holder (user) to perturb his/her genomic data sequence locally before publishing it to the server, which can be used to answer any query afterward, such as \cite{ye2020mechanisms}. 

In the past two decades, DP has been widely accepted as the de facto standard in the privacy research community. However, DP only leads to a few real-world adoptions. One of the main reasons is that DP is a stringent (worst-case) privacy notion, i.e., there is no assumption on the underlying prior distribution of the data. While privacy guarantees hold for the worst-case realizations of the data, this could lead to lower data utility. Hence, it is not ideal for scenarios when the prior distribution of genomic data is available, e.g., through public datasets and/or via existing statistical models of genomic data sequence\cite{li2003modeling,jiang2021context}. In contrast, context-aware privacy notions, such as mutual information privacy, can lead to mechanism designs with a better privacy-utility tradeoff by leveraging the prior information. Note that despite the high diversity, most of the DNA Sequence is common across the whole human population. Only around $0.5\%$ of each person's DNA is different from the reference genome, owing to genetic variations\cite{venter2001sequence}. This makes it suitable to adopt context-aware privacy notions for genomic sequence. Besides, context-aware privacy notions model each genomic data sequence as a random variable, which considers the correlation inside the data sequence.

In this paper, we consider  $K$ individuals participating in a genome variants survey, and each person's $N$-length genome sequence is drawn from a known distribution. Specifically, our goal is to design mechanisms for answering count queries of the following form: \textit{How many users have a specific set of genotypes (denoted by a set $v_{\mathcal{L}}$) at certain locations (denoted by $\mathcal{L}$) in their genome?} At the same time, we aim to achieve perfect secrecy (zero information leakage)  (perfect information-theoretic privacy) about sensitive genotypes at a pre-specified set $\mathcal{S}$ of secret locations. Our main contributions are summarized as follows:

\textbf{Main Contributions:}  

1) We present two count-query mechanisms for genomic data sequences in the central and local settings, respectively. The mechanisms achieve perfect privacy for sensitive genotypes at the locations belonging to a predefined set $\mathcal{S}$  while optimizing a utility metric (i.e., minimizing the probability of error for local setting and expected absolute distance for central setting). The proposed mechanisms incur less computation complexity and achieve significantly higher utility than a previous perfect secret mechanism \cite{ye2020mechanisms} and achieve a better utility-privacy tradeoff than DP-based mechanisms\cite{cho2020privacy}. 

2) We theoretically analyze the performance of each mechanism under different cases regarding the strength of correlations within the genomic sequence and the length of the overlapped genotypes between the queried locations and sensitive locations. The proposed mechanisms only depend on the statistics of the subsequences derived from the intersected set of locations. Thus, the complexity of our mechanisms only grows with $|\mathcal{L} \cup \mathcal{S}|$, irrespective of $N$. We also derive a lower bound on the error probability ($P_e$) for an arbitrary mechanism subject to perfect privacy constraints. We show the optimality of our proposed mechanisms for several special cases. 

3) We present comprehensive numerical simulations to show the utility-privacy tradeoff of the proposed mechanisms and compare them with existing DP-based mechanisms. Results show that our mechanisms achieve perfect privacy and incur smaller errors than those based on DP under certain large privacy budgets $\epsilon$ (larger $\epsilon$ indicates larger privacy leakage). We also show that the performance of the mechanisms is largely dependent on the data prior as well as the data dependence: the performance of each mechanism can be improved when the queried locations are less correlated with the sensitive locations. 

\section{Related Works}
A fundamental need in designing genomic privacy-preserving mechanisms is to selectively limit the leakage of information about biological
or health-related traits of an individual that can be inferred from the shared genetic data. Under this context, in  \cite{bioinformati}, Thenen et al. have shown that beacons, previously deemed secure against re-identification attacks,  were vulnerable despite their stringent policy, due to the correlation and inter-dependence inside each genomic sequence. To this end, \cite{SIMMONS201654, cho2020privacy} study differential
privacy mechanisms for sharing aggregate genomic data. However, DP-based mechanisms provide worst-case privacy guarantees that incur too much noise and lead to decreased utility in terms of accuracy. Furthermore, traditional DP-based mechanisms do not leverage correlations that exist in the genomic sequence, which may degrade the privacy guarantees offered by DP. In \cite{liu2016dependence}, Nour et al. first demonstrate this drawback of DP by proposing an attack and then propose a mechanism for privacy-preserving sharing of statistics from genomic datasets to attain privacy guarantees while taking into consideration the data dependence. From there, we assert that information theoretical privacy measurements are more favorable for genomic data processing for the following reasons. 1. Different individuals' genomic sequences possess extremely similar patterns at certain locations, which provides a solid foundation for estimating the underlying data distribution. 2. Dependence and inter-correlations that exist in the genomic sequence can be explicitly measured and leveraged in the mechanism design. We next discuss some of the related works that have considered information-theoretic privacy within this context.

A privacy-preserving genomic data sequence release problem was studied in \cite{ye2020mechanisms}, where the genotypes at some specific sensitive locations and some non-sensitive locations but are correlated to sensitive locations are hidden. This model also provides a baseline approach for query answer release (perturb then query). However, the query-answering mechanism incurs much less noise than sequence release mechanisms  (either local or centralized ones). Other than that, the sequence release model was built for the local settings only. In this work, we study query-answering models in both local and central settings.

There is also a line of work on the privacy-preserving data release problem that is similar to our settings\cite{7918623,8262832, 8438536}. {The variables form a Markov chain $S\to X\to Y$, where $X$ and $Y$ represent the input and output data of the mechanism, respectively, and $S$ denotes a private latent variable that correlates to $X$}. Based on this model, \cite{7918623} proposes principal inertia components, which provide a fine-grained decomposition of the dependence between two random variables. It also proves that the smallest PIC of $P_{S,X}$ plays a central role in achieving perfect privacy (i.e. $I(S; Y) = 0$):
If $|X|\le|S|$, then perfect privacy is achievable with $I(X; Y) > 0$ if and only if the smallest PIC of $p_{S,X}$ is $0$. However, the mechanisms provided in \cite{7918623} are challenging to be implemented in genomic data, as it requires an exhausting search of functions in the defined region.

In \cite{8262832}, the system allows post-processing after taking observation on $Y$, then data utility is measured by the mean square error (MSE) between $X$ and the estimation $E[X|Y]$. Finally, the utility and privacy tradeoff can be formulated by minimizing the MSE while subject to certain privacy constraints. However, this work focuses on improving the utility-privacy tradeoff and do not provide closed-form privatization mechanisms. Also, this paper studies general privacy guarantees, not perfect privacy. 

In \cite{8438536}, privacy and utility are defined as the probabilities of correctly guessing $S$ and $X$ given $Y$, respectively. This paper also mentioned mechanism design. However, firstly, the perfect privacy measured by correct guessing cannot imply independence of the random variable (what we defined as perfect privacy in this paper), as $P_{C}(S|Y) = P_{C}(S)$ does not imply $P(S|Y) = P(S)$. Also, the modeling of the correlation between the latent variable and the input data is too simplified: the correlated latent variable considered in \cite{8438536} is also binary, and the correlation can be represented by two parameters. Finally, the mechanism is derived as a function of the correlation parameters. But for genomic data processing, the data sequence is a vector with each value having a cardinality of $4$. Thus the mechanism in \cite{8438536} does not apply to our problem. On the other hand, the model we considered is in the form of $S\to f(X) \to Y$, where $S$ and $X$ are vectors, $f(X)$ and $Y$ are binary, which is similar to, but different from related works in \cite{7918623,8262832, 8438536}.

\section{Model setup \& problem statement}

\begin{table}[t] 
\centering{
\caption{\small{List of symbols}}   
\begin{tabular}{c{l} c{l}}
\hline
$\mathbf{X}$ & Genomic data sequence \\
$N$ & Length of the genomic sequence\\
$k$ & individual index \\ 
$\mathcal{X}$ & Finite set \{A,T,G,C\}\\
${P(\cdot)}$ & Probability distribution \\
$\mathcal{S}$ & Set of indices of the sensitive locations\\
$\mathcal{Q}$ & Count query \\
$\mathcal{L}$ & Set of indices of the query locations\\
$v_{\mathcal{L}}$ & Reference sequence \\ 
$\mathcal{M}$ & Privacy protection mechanism \\ 
$Y$ & Release of the privacy protection mechanism\\
$P_e$ & Probability of error\\
$R$ & Dependence measurement\\
\hline
\end{tabular} }
\label{table:list_symbols}
\end{table}

Consider a set of $K$ sequences $\{\bold{X}^{(1)},\bold{X}^{(2)},...,\bold{X}^{(K)}\}$, where each sequence  $\bold{X}^{(k)}=\{X^{(k)}_1,X^{(k)}_2,...,X^{(k)}_N\}$,  {for $k=1,2,...,K$ is of length $N$}. The entries $X^{(k)}_n$ are drawn from a finite alphabet $\mathcal{X}$ (e.g. for genomic data, $\mathcal{X}=\{A,T,G,C\}$). 
We assume that the sequences are independent and each sequence is drawn from a known distribution  ${P}_{\bold{X}}(\bar{x})={P}_{X_1,X_2,...,X_N}(x_1,x_2,...,x_N)$ (${P}_{\bold{X}}(\bar{x})$ can be different from user to user, we simplify by removing users' index). In users' genomic sequence, some locations are private and need to be protected while others are nonsensitive. Denote $\mathcal{S}=\{s_1,s_2,...,s_{|\mathcal{S}|}\}$ as the set of indices of the sensitive locations of the genomic data sequence (each $X^{(k)}_{n}$ is sensitive $\forall{k\in\{1,2,...,K\}, \forall{n}\in \mathcal{S}}$), and let $\bold{X}^{(k)}_{\mathcal{S}}=({X}^{(k)}_{s_1},{X}^{(k)}_{s_2},...,{X}^{(k)}_{s_{|\mathcal{S}|}})$ denote the sensitive genomic subsequence of the $k$-th user. 

A count query can be defined as follows: the number of users in the dataset with a specific sequence at some particular locations. Mathematically, we denote the query as $\mathcal{Q}=\{\mathcal{L},v_{\mathcal{L}}\}$, where $ \mathcal{L} =\{l_1,l_2,...,l_{|\mathcal{L}|}\}$ denotes the indices of the queried locations and $\bold{X}^{(k)}_{\mathcal{L}}=({X}^{(k)}_{l_1},{X}^{(k)}_{l_2},...,{X}^{(k)}_{l_{|\mathcal{L}|}})$ denotes the corresponding queried sequence for the $k$-th user; $v_{\mathcal{L}}$ denotes the reference sequence and $v_j$ denotes the reference value at location $j$. Then the local query of the $k$-th user is $Q^{(k)}=\{\bold{X}^{(k)}_{\mathcal{L}}=v_{\mathcal{L}}\}$. Note that, the sensitive locations $\mathcal{S}$ and the queried locations $\mathcal{L}$ may or may not have common indices. Denote $\bar{\mathcal{L}} = \mathcal{L} \backslash (\mathcal{L} \cap \mathcal{S})$ as the non-sensitive query locations, and $\bar{\mathcal{S}} = \mathcal{S} \backslash (\mathcal{L} \cap \mathcal{S})$.

Let $A^{(k)}$ be the true query answer calculated from the $k$-th user. Then $A^{(k)} =\mathbbm{1}_{\{\bold{X}^{(k)}_{\mathcal{L}} =v_{\mathcal{L}}\}}=\prod_{j=1}^{|\mathcal{L}|}\mathbbm{1}_{\{\bold{X}^{(k)}_{l_j} =v_j\}}$ and the final count can be expressed as $A=\sum_{k=1}^KA^{(k)}$, i.e., the aggregated answer is decomposable. To avoid leaking sensitive information about each $\bold{X}^{(k)}_{\mathcal{S}}$, we design private mechanisms and release a perturbed version while achieving high utility. Depending on whether there exists a trusted server, there are two different settings for the data protection mechanisms: for the local setting, each user privatizes $A^{(k)}$ independently and publishes a perturbed version $Y^{(k)}$, then $Y=\sum_{k=1}^KY^{(k)}$ is the aggregated result; For the central setting, the data server first aggregates $A$ from each $A^{(k)}$, then perturbs $A$ and releases $Y$ as a privatized version. Each of the local and central mechanisms has its pros and cons. Generally speaking, the central model guarantees a comparable level of privacy protection as the local model with a smaller amount of noise. The local model, on the other hand, provides a customizable privacy budget for each individual without relying on any third party. The system models considered are depicted in Fig. \ref{fig:genome_system_model}.

 {
\textbf{Data Privacy:} In  this paper, we aim to satisfy perfect privacy of sensitive genotypes, which is defined as:
\begin{equation}\label{PP_l}
    P_{Y^{(k)}|\bold{X}^{(k)}_{\mathcal{S}}}\left(y|\bold{x}_{\mathcal{S}}\right)=P_{Y^{(k)}}(y), \forall {k\in\{1,2,...,K\}},
\end{equation}
for local model, and 
\begin{equation}\label{PP_l2}
P_{Y|\bold{X}^{(k)}_{\mathcal{S}}}\left(y|\bold{x}_{\mathcal{S}}\right)=P_Y(y), \forall {k\in\{1,2,...,K\}}
\end{equation}}
 {for central model. The privacy-preserving mechanism must guarantee that the output of the privacy protection mechanism ($Y$ for central model, $Y^{(k)}$ for the local model) is independent of the genomic data $\bold{X}^{(k)}_{\mathcal{S}}$ at sensitive locations  $\mathcal{S}$ for every user $k$}. In the following, we refer to the conditions in \eqref{PP_l} as perfect privacy constraints. 

\textbf{Performance (Utility):} 
The performance of the mechanism is measured by the Expected Average Error (EAE) between $A$ and $Y$. Mathematically,
\begin{equation} \label{utility}
    \text{EAE}=E[|A-Y|].
\end{equation}
In this paper, we aim to maximize the utility (minimize the EAE) subject to perfect privacy.  {We summarized symbols used throughout this paper in Table \ref{table:list_symbols}.} 

\section{Main Results}
In this paper, we aim to minimize the EAE subject to perfect local privacy. In this Section, we investigate privacy-preserving query answer mechanisms for local and central settings, respectively.  
\subsection{Local Mechanism Design}
Next, we present two mechanisms based on different data processing settings which satisfy perfect local privacy for sensitive genotypes. To measure the performance of the proposed mechanisms for aggregated query, we define the utility for the aggregated query as the Expected Absolute Error (EAE):
\begin{equation}
    E[|Y-A|]={E}\left[\left|\sum_{k=1}^K \big(Y^{(k)} - A^{(k)}\big) \right|\right].
\end{equation}
 {
Note that the EAE can be further upper-bounded as follows:
\begin{small}
\begin{align}
    {E}[|Y-A|] 
    & \leq{\sum_{k=1}^K {E}\left[|Y^{(k)} - A^{(k)}|\right]} \triangleq \sum_{k = 1}^{K}P_{e}^{(k)}, \nonumber
\end{align}
\end{small}
where $P_e^{(k)}$ denotes the per-user error probability. }Given a query, we will design binary perturbation mechanisms for each sequence to minimize the local query error while satisfying per-user perfect privacy constraints. Formally, the problem can be described as 
    \begin{align}
    \min P_{e}^{(k)},~ \text{s.t.}~ \eqref{PP_l},~\forall{k=1,2,...,K}. \label{eqn:perfect_secrecy_condition}
    \end{align}
    
A local privacy-protection mechanism $\mathcal{M}^{(k)}$ with parameter $\mu(\bold{X}^{(k)},\mathcal{Q})$ can be described as follows:  $\mathcal{M}^{(k)}$ takes as input the local genomic sequence $\bold{X}^{(k)}$ of a user and the query $\mathcal{Q}=\{\mathcal{L},v_{\mathcal{L}}\}$. Then, it extracts the sensitive data sequence $\bold{X}^{(k)}_{\mathcal{S}}$ and the queried data sequence  $\bold{X}^{(k)}_{\mathcal{L}}$ respectively. From there, $\mathcal{M}^{(k)}$ releases $Y^{(k)}=0$ or $Y^{(k)}=1$ according to the data prior distribution as well as the correlation between   $\bold{X}^{(k)}_{\mathcal{S}}$ and $\bold{X}^{(k)}_{\mathcal{L}}$. The corresponding release probabilities are 
defined as:
\begin{align}
   & Y^{(k)}=  \nonumber
    \begin{cases}1,~~~~~ \text{w.p.}~~~~~ \mu(\bold{X}^{(k)},\mathcal{Q}), \\
   0, ~~~~~\text{w.p.} ~~~~~1-\mu(\bold{X}^{(k)},\mathcal{Q}).
   \end{cases}
\end{align} 
We present two mechanisms, $\mathcal{M}^{(k)}_1$, and $\mathcal{M}^{(k)}_2$, with parameters  $\mu_{1}(\bold{X}^{(k)},\mathcal{Q})$ and $\mu_{2}(\bold{X}^{(k)},\mathcal{Q})$ given as follows: 
\begin{align}\label{eqn:proposed_mechanism_1}
& \mu_{1}(\bold{X}^{(k)}, \mathcal{Q}) \triangleq  P_{Y^{(k)}|\bold{X}^{(k)}_{\bar{\mathcal{L}}}, \bold{X}^{(k)}_{\mathcal{S}}}(1| x_{\bar{\mathcal{L}}}, x_{\mathcal{S}}) =  \nonumber \\
&     \begin{cases}\frac{\min_{w} P_{\bold{X}^{(k)}_{\bar{\mathcal{L}}}|\bold{X}^{(k)}_{\mathcal{S}}}(x_{\bar{\mathcal{L}}}| w)}{P_{\bold{X}^{(k)}_{\bar{\mathcal{L}}}|\bold{X}^{(k)}_{\mathcal{S}}}(x_{\bar{\mathcal{L}}}| x_{\mathcal{S}})}, & \text{if} \hspace{0.05in} x_{\bar{\mathcal{L}}} = v_{\bar{\mathcal{L}}}, \mathcal{E} \leq 1/2, \\
   0, & \text{otherwise},
   \end{cases}
\end{align} 
\begin{align} \label{eqn:proposed_mechanism_2}
~~~~~~& \mu_2(\bold{X}^{(k)}, \mathcal{Q})   \triangleq P_{Y^{(k)}|\bold{X}^{(k)}_{\bar{\mathcal{L}}}, X^{(k)}_{\mathcal{S}}}(1| x_{\bar{\mathcal{L}}}, x_{\mathcal{S}}) = \nonumber \\
       &    \begin{cases} 1,& \text{if} \hspace{0.05in} x_{\bar{\mathcal{L}}} = v_{\bar{\mathcal{L}}}, \mathcal{E} \leq 1/2, \\
       1- \frac{\min_{w} P_{\bold{X}^{(k)}_{\bar{\mathcal{L}}}|\bold{X}^{(k)}_{\mathcal{S}}}(x_{\bar{\mathcal{L}}}| w)}{P_{\bold{X}^{(k)}_{\bar{
       \mathcal{L}}}|\bold{X}^{(k)}_{\mathcal{S}}}(x_{\bar{\mathcal{L}}}| x_{\mathcal{S}})}, &  \text{otherwise},
       \end{cases}
\end{align}
where $\mathcal{E}$ is defined as $\mathcal{E} \triangleq \operatorname{Pr} (\bold{X}^{(k)}_{\mathcal{L} \cap \mathcal{S}} \neq v_{\mathcal{L} \cap \mathcal{S}}) $. Both of the proposed mechanisms use the following probability ratio:
\begin{equation}\label{eqr}
    R(\bold{X}^{(k)}_{\bar{\mathcal{L}}},\bold{X}^{(k)}_{\mathcal{S}})=\frac{\min_wP_{\bold{X}^{(k)}_{\bar{\mathcal{L}}}|\bold{X}^{(k)}_{\mathcal{S}}}(x_{\bar{\mathcal{L}}}| w)}{P_{\bold{X}^{(k)}_{\bar{\mathcal{L}}}|\bold{X}^{(k)}_{\mathcal{S}}}(x_{\bar{\mathcal{L}}}| x_{\mathcal{S}})},
\end{equation}
which can be (informally) used to measure the statistical dependence between $\bold{X}^{(k)}_{\bar{\mathcal{L}}}$ (genotypes in $\mathcal{L}$ which are not in the sensitive locations) and $\bold{X}^{(k)}_{\mathcal{S}}$ (genotypes in sensitive locations). In the following, we use $R$  for simplicity. Specifically, a value of $R=1$ clearly implies independence of $\bold{X}^{(k)}_{\bar{\mathcal{L}}}$ and $\bold{X}^{(k)}_{\mathcal{S}}$, whereas $R<1$ implies that $\bold{X}^{(k)}_{\bar{\mathcal{L}}}$ and $\bold{X}^{(k)}_{\mathcal{S}}$ are dependent, and the dependence grows as $R$ becomes small. 

Given a query $ Q = \{\mathcal{L}, v_{\mathcal{L}}\}$, mechanism $\mathcal{M}^{(k)}_{1}$ first looks at the sequence $x_{\bar{\mathcal{L}}}$ that
corresponds to the non-sensitive genotypes in the set $\bar{\mathcal{L}}$. If $x_{\bar{\mathcal{L}}} = v_{\bar{\mathcal{L}}}$ and $ \mathcal{E} < 1/2$, the mechanism releases $1$ with probability $R$, otherwise, it always releases $0$. Mechanism $\mathcal{M}^{(k)}_{2}$ works similarly as follows. It first looks at the sequence $x_{\bar{\mathcal{L}}}$, if $x_{\bar{\mathcal{L}}} = v_{\bar{\mathcal{L}}}$ and $\mathcal{E} < 1/2$, then $\mathcal{M}^{(k)}_2$ always releases $1$. Otherwise, it releases $1$ with probability $ 1- R$. Our first main result is stated in the following Theorem.

\begin{thm}\label{thm:privacy_mechanism_case_1}   Mechanisms $\mathcal{M}^{(k)}_{1}$ and $\mathcal{M}^{(k)}_{2}$ satisfy perfect privacy.
\end{thm}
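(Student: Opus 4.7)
The plan is to verify \eqref{PP_l} for each $i \in \{1,2\}$ by direct calculation, showing that $P_{Y^{(k)}\mid \bold{X}^{(k)}_{\mathcal{S}}}(1\mid x_{\mathcal{S}})$ is constant in $x_{\mathcal{S}}$; since $Y^{(k)}$ is binary, this is equivalent to $Y^{(k)}$ being independent of $\bold{X}^{(k)}_{\mathcal{S}}$. The randomness of $\mathcal{M}^{(k)}_i$ depends only on the pair $(\bold{X}^{(k)}_{\bar{\mathcal{L}}}, \bold{X}^{(k)}_{\mathcal{S}})$, so the law of total probability gives
\[
P(Y^{(k)} = 1 \mid x_{\mathcal{S}}) = \sum_{x_{\bar{\mathcal{L}}}} P_{\bold{X}^{(k)}_{\bar{\mathcal{L}}} \mid \bold{X}^{(k)}_{\mathcal{S}}}(x_{\bar{\mathcal{L}}} \mid x_{\mathcal{S}}) \, \mu_i(x_{\bar{\mathcal{L}}}, x_{\mathcal{S}}).
\]
The ratio $R$ in \eqref{eqr} is engineered precisely so that the factor $P_{\bold{X}^{(k)}_{\bar{\mathcal{L}}} \mid \bold{X}^{(k)}_{\mathcal{S}}}(x_{\bar{\mathcal{L}}} \mid x_{\mathcal{S}})$ in the denominator of $\mu_i$ cancels the corresponding probability weight in the sum, leaving only quantities of the form $\min_w P_{\bold{X}^{(k)}_{\bar{\mathcal{L}}} \mid \bold{X}^{(k)}_{\mathcal{S}}}(x_{\bar{\mathcal{L}}} \mid w)$ that by construction do not depend on $x_{\mathcal{S}}$.

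Concretely, for $\mathcal{M}^{(k)}_1$ in the regime $\mathcal{E} \leq 1/2$, only the term $x_{\bar{\mathcal{L}}} = v_{\bar{\mathcal{L}}}$ contributes, and the sum collapses to $\min_w P_{\bold{X}^{(k)}_{\bar{\mathcal{L}}} \mid \bold{X}^{(k)}_{\mathcal{S}}}(v_{\bar{\mathcal{L}}} \mid w)$. For $\mathcal{M}^{(k)}_2$ in the same regime, the term $x_{\bar{\mathcal{L}}} = v_{\bar{\mathcal{L}}}$ contributes $P(v_{\bar{\mathcal{L}}} \mid x_{\mathcal{S}})$, while each $x_{\bar{\mathcal{L}}} \neq v_{\bar{\mathcal{L}}}$ term contributes $P(x_{\bar{\mathcal{L}}} \mid x_{\mathcal{S}}) - \min_w P(x_{\bar{\mathcal{L}}} \mid w)$; summing over all $x_{\bar{\mathcal{L}}}$ the conditional probabilities telescope to $1$, yielding $1 - \sum_{x_{\bar{\mathcal{L}}} \neq v_{\bar{\mathcal{L}}}} \min_w P(x_{\bar{\mathcal{L}}} \mid w)$. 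Both expressions are explicit constants in $x_{\mathcal{S}}$. The regime $\mathcal{E} > 1/2$ is disposed of similarly: $\mathcal{M}^{(k)}_1$ outputs $0$ deterministically (trivial independence), and for $\mathcal{M}^{(k)}_2$ the same cancellation applies without the $x_{\bar{\mathcal{L}}} = v_{\bar{\mathcal{L}}}$ split, giving $1 - \sum_{x_{\bar{\mathcal{L}}}} \min_w P(x_{\bar{\mathcal{L}}} \mid w)$.

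The main technical point, rather than a genuine obstacle, is to justify that $\mu_1, \mu_2 \in [0,1]$ so that they define valid conditional distributions; this reduces to $0 \leq R \leq 1$, which holds because the minimum over $w$ is non-negative and cannot exceed the value of the distribution at $w = x_{\mathcal{S}}$. A minor care is needed when $P_{\bold{X}^{(k)}_{\bar{\mathcal{L}}} \mid \bold{X}^{(k)}_{\mathcal{S}}}(x_{\bar{\mathcal{L}}} \mid x_{\mathcal{S}}) = 0$ makes $R$ a $0/0$ expression, but such realizations carry zero probability weight in the marginalization and can be assigned any value in $[0,1]$ without affecting the conclusion. The argument never invokes any property of the queried sensitive sub-sequence $x_{\mathcal{L}\cap\mathcal{S}}$ beyond its appearance inside $x_{\mathcal{S}}$, which is exactly why the cancellation decouples the release from the entire sensitive block.
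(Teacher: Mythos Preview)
Your proposal is correct and follows essentially the same approach as the paper: both expand $P_{Y^{(k)}\mid \bold{X}^{(k)}_{\mathcal{S}}}(1\mid x_{\mathcal{S}})$ via the law of total probability over $x_{\bar{\mathcal{L}}}$ and use the cancellation of $P_{\bold{X}^{(k)}_{\bar{\mathcal{L}}}\mid \bold{X}^{(k)}_{\mathcal{S}}}(x_{\bar{\mathcal{L}}}\mid x_{\mathcal{S}})$ against the denominator of $R$ to obtain a constant in $x_{\mathcal{S}}$. Your write-up is in fact more complete than the paper's, since you explicitly treat both regimes of $\mathcal{E}$, carry out the $\mathcal{M}^{(k)}_2$ computation (which the paper leaves to ``similar lines''), and address the validity of $\mu_i\in[0,1]$ and the $0/0$ edge case.
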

\begin{proof}
We expand $P_{Y^{(k)}|\bold{X}^{(k)}_{\mathcal{S}}}(1| x_{\mathcal{S}} ) $ in terms of $x_{\bar{\mathcal{L}}}$ using total probability Theorem. This yields the following set of steps:
\begin{small}
\begin{align}
    & P_{Y^{(k)}|\bold{X}^{(k)}_{\mathcal{S}}}(1| x_{\mathcal{S}} )  = \sum_{x_{\bar{\mathcal{L}}}} P_{Y^{(k)}|\bold{X}^{(k)}_{\bar{\mathcal{L}}}, X^{(k)}_{\mathcal{S}}}(1|x_{\bar{\mathcal{L}}}, x_{\mathcal{S}}) P_{\bold{X}^{(k)}_{\bar{\mathcal{L}}}|X^{(k)}_{\mathcal{S}}}(x_{\bar{\mathcal{L}}}|x_{\mathcal{S}}) \nonumber \\ 
    & = P_{Y^{(k)}|\bold{X}^{(k)}_{\bar{\mathcal{L}}}, \bold{X}^{(k)}_{\mathcal{S}}}(1|v_{\bar{\mathcal{L}}}, x_{\mathcal{S}}) P_{\bold{X}^{(k)}_{\bar{\mathcal{L}}}|\bold{X}^{(k)}_{\mathcal{S}}}(v_{\bar{\mathcal{L}}}|x_{\mathcal{S}}) \nonumber \\
    & \hspace{0.2in} + \sum_{x_{\bar{\mathcal{L}}} \neq v_{\bar{\mathcal{L}}}} P_{Y^{(k)}|\bold{X}^{(k)}_{\bar{\mathcal{L}}}, \bold{X}^{(k)}_{\mathcal{S}}}(1|x_{\bar{\mathcal{L}}}, x_{\mathcal{S}}) P_{\bold{X}^{(k)}_{\bar{\mathcal{L}}}|\bold{X}^{(k)}_{\mathcal{S}}}(x_{\bar{\mathcal{L}}}|x_{\mathcal{S}}) \nonumber \\ 
    & = \mu^{(k)}_{1}(v_{\bar{\mathcal{L}}}, Q, S) P_{\bold{X}^{(k)}_{\bar{\mathcal{L}}}|X^{(k)}_{\mathcal{S}}}(v_{\bar{\mathcal{L}}}|x_{\mathcal{S}}) \nonumber \\
    & \hspace{0.2in}+ \sum_{x_{\bar{\mathcal{L}}} \neq v_{\bar{\mathcal{L}}}} \mu^{(k)}_{1}(x_{\bar{\mathcal{L}}}, Q, S)  P_{\bold{X}^{(k)}_{\bar{\mathcal{L}}}|X^{(k)}_{\mathcal{S}}}(x_{\bar{\mathcal{L}}}|x_{\mathcal{S}}) \nonumber \\ 
    & \overset{(a)} = \min_{w} P_{\bold{X}^{(k)}_{\bar{\mathcal{L}}}|X^{(k)}_{\mathcal{S}}}(v_{\bar{\mathcal{L}}}|x_{\mathcal{S}} = w), \nonumber 
\end{align}
\end{small}
where step (a) follows from the proposed mechanism in \eqref{eqn:proposed_mechanism_1}.  Observe that the conditional probability does not depend on the realizations $x_{\mathcal{S}}$. Then, it is straightforward to show that 
\begin{align}
    P_{Y^{(k)}|\bold{X}^{(k)}_{\mathcal{S}}}(1| x_{\mathcal{S}} ) =   P^{(k)}_{Y}(1).  \nonumber 
\end{align}
The calculation for $\mathcal{M}^{(k)}_2$ follows on similar lines, and hence both mechanisms satisfy perfect privacy for sensitive genomes.
\end{proof}
We next present our second result which characterizes the error probability of the proposed mechanisms. 
\begin{thm}\label{thm:Pe}
Define $P^{(k)}_{e,1}$ and $P^{(k)}_{e,2} $ as the per-user error probability for $\mathcal{M}^{(k)}_1$ and $\mathcal{M}^{(k)}_2$, respectively, then:

\begin{small}
\noindent - \textit{Case 1:} $\mathcal{E} \leq 1/2$:
\begin{align}
  &P^{(k)}_{e,1} 
   =P_{\bold{X}^{(k)}_{\mathcal{L}}}(v_{\mathcal{L}})  + (2 \mathcal{E} - 1) \min_{w} P_{\bold{X}^{(k)}_{\bar{\mathcal{L}}}|\bold{X}^{(k)}_{\mathcal{S}}}(v_{\bar{\mathcal{L}}}|  w),  \\
   &P^{(k)}_{e,2} 
    =  1 - P_{\bold{X}^{(k)}_{\mathcal{L}}}(v^{(k)}_{\mathcal{L}}) -  \sum_{x_{\bar{\mathcal{L}}} \neq v_{\bar{\mathcal{L}}} } \min_{w} P_{\bold{X}^{(k)}_{\bar{\mathcal{L}}}|\bold{X}^{(k)}_{\mathcal{S}}}(x_{\bar{\mathcal{L}}}| w). \nonumber 
\end{align}
\noindent - \textit{Case 2:} $\mathcal{E} > 1/2$:
\begin{align}
  &P^{(k)}_{e, 1} = P_{\bold{X}^{(k)}_{\mathcal{L}}}(v_{\mathcal{L}}), \nonumber \\
   &P^{(k)}_{e, 2} 
     = 1 - P_{\bold{X}^{(k)}_{\mathcal{L}}}(v_{\mathcal{L}}) - \sum_{x_{\bar{\mathcal{L}}} \neq v_{\bar{\mathcal{L}}}  } \min_{w} P_{\bold{X}^{(k)}_{\bar{\mathcal{L}}}|\bold{X}^{(k)}_{\mathcal{S}}}(x_{\bar{\mathcal{L}}} | w) \nonumber \\ 
    & \hspace{0.2in} - (2 \mathcal{E} - 1) \min_{w} P_{\bold{X}^{(k)}_{\bar{\mathcal{L}}}|\bold{X}^{(k)}_{\mathcal{S}}}(v_{\bar{\mathcal{L}}} | w). \nonumber  
\end{align}
\end{small}
\end{thm}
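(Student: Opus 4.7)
The plan is to compute $P_e^{(k)} = \Pr(Y^{(k)}\neq A^{(k)})$ directly by splitting it as $\Pr(Y^{(k)}=0, A^{(k)}=1) + \Pr(Y^{(k)}=1, A^{(k)}=0)$, since both variables are binary. I will systematically condition on the realizations of $\bold{X}^{(k)}_{\bar{\mathcal{L}}}$ and $\bold{X}^{(k)}_{\mathcal{L}\cap\mathcal{S}}$ (noting that $A^{(k)}=1$ is equivalent to $\bold{X}^{(k)}_{\bar{\mathcal{L}}}=v_{\bar{\mathcal{L}}}$ together with $\bold{X}^{(k)}_{\mathcal{L}\cap\mathcal{S}}=v_{\mathcal{L}\cap\mathcal{S}}$), then marginalize out $\bold{X}^{(k)}_{\bar{\mathcal{S}}}$ using $P_{\bold{X}_{\mathcal{L}\cap\mathcal{S}}}(v_{\mathcal{L}\cap\mathcal{S}})=1-\mathcal{E}$.

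The central identity that will drive every simplification is
\[
R(x_{\bar{\mathcal{L}}},x_{\mathcal{S}})\,P_{\bold{X}_{\bar{\mathcal{L}}},\bold{X}_{\mathcal{S}}}(x_{\bar{\mathcal{L}}},x_{\mathcal{S}}) \;=\; \Bigl(\min_{w} P_{\bold{X}_{\bar{\mathcal{L}}}\mid\bold{X}_{\mathcal{S}}}(x_{\bar{\mathcal{L}}}\mid w)\Bigr)\,P_{\bold{X}_{\mathcal{S}}}(x_{\mathcal{S}}),
\]
obtained by substituting the definition of $R$ in \eqref{eqr} and cancelling the conditional. Summing this identity over $x_{\mathcal{S}}$ restricted to $\{x_{\mathcal{L}\cap\mathcal{S}}=v_{\mathcal{L}\cap\mathcal{S}}\}$ or its complement converts the $R$-weighted contributions into $R^{*}(1-\mathcal{E})$ or $R^{*}\mathcal{E}$, where $R^{*}\triangleq\min_{w}P_{\bold{X}_{\bar{\mathcal{L}}}\mid\bold{X}_{\mathcal{S}}}(v_{\bar{\mathcal{L}}}\mid w)$; summing over $x_{\mathcal{S}}$ with $x_{\bar{\mathcal{L}}}$ fixed to a value $\neq v_{\bar{\mathcal{L}}}$ yields the quantity $\min_{w} P_{\bold{X}_{\bar{\mathcal{L}}}\mid\bold{X}_{\mathcal{S}}}(x_{\bar{\mathcal{L}}}\mid w)$ that appears in the sum in the theorem.

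With this machinery, for $\mathcal{M}^{(k)}_1$ in Case~1 the only nonzero releases occur at $x_{\bar{\mathcal{L}}}=v_{\bar{\mathcal{L}}}$, so $\Pr(Y^{(k)}=0,A^{(k)}=1)$ reduces to $P_{\bold{X}_{\mathcal{L}}}(v_{\mathcal{L}})-R^{*}(1-\mathcal{E})$ and $\Pr(Y^{(k)}=1,A^{(k)}=0)$ to $R^{*}\mathcal{E}$, whose sum gives the stated formula $P_{e,1}^{(k)}=P_{\bold{X}_{\mathcal{L}}}(v_{\mathcal{L}})+(2\mathcal{E}-1)R^{*}$. For $\mathcal{M}^{(k)}_2$ in Case~1, the event $\{Y^{(k)}=0,A^{(k)}=1\}$ is empty because $\mu_2=1$ whenever $x_{\bar{\mathcal{L}}}=v_{\bar{\mathcal{L}}}$, and $\Pr(Y^{(k)}=1,A^{(k)}=0)$ splits into the $x_{\bar{\mathcal{L}}}=v_{\bar{\mathcal{L}}}$ contribution $P_{\bold{X}_{\bar{\mathcal{L}}}}(v_{\bar{\mathcal{L}}})-P_{\bold{X}_{\mathcal{L}}}(v_{\mathcal{L}})$ plus the $x_{\bar{\mathcal{L}}}\neq v_{\bar{\mathcal{L}}}$ contribution $1-P_{\bold{X}_{\bar{\mathcal{L}}}}(v_{\bar{\mathcal{L}}})-\sum_{x_{\bar{\mathcal{L}}}\neq v_{\bar{\mathcal{L}}}}\min_{w}P_{\bold{X}_{\bar{\mathcal{L}}}\mid\bold{X}_{\mathcal{S}}}(x_{\bar{\mathcal{L}}}\mid w)$, which combine to the claimed $P_{e,2}^{(k)}$.

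Case~2 is handled similarly, observing that when $\mathcal{E}>1/2$ the definition of $\mu_1$ forces $Y^{(k)}\equiv 0$, so $P_{e,1}^{(k)}=\Pr(A^{(k)}=1)=P_{\bold{X}_{\mathcal{L}}}(v_{\mathcal{L}})$; and for $\mu_2$ the branch $x_{\bar{\mathcal{L}}}=v_{\bar{\mathcal{L}}}$ now also uses $\mu_2=1-R$, producing the extra term $-(2\mathcal{E}-1)R^{*}$ relative to Case~1. I expect the only real obstacle to be the bookkeeping across the disjoint decomposition $\mathcal{L}\cup\mathcal{S}=\bar{\mathcal{L}}\cup(\mathcal{L}\cap\mathcal{S})\cup\bar{\mathcal{S}}$ and keeping the two summation regimes (over $x_{\mathcal{L}\cap\mathcal{S}}$ equal to or different from $v_{\mathcal{L}\cap\mathcal{S}}$) straight; once that decomposition is made, the algebra collapses via the $R\cdot P=\min_{w}P_{\cdot\mid\cdot}\cdot P_{\mathcal{S}}$ identity.
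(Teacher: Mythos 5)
Your proposal is correct and follows essentially the same route as the paper's proof: both decompose $P_e^{(k)}$ into $\Pr(Y^{(k)}=0,A^{(k)}=1)+\Pr(Y^{(k)}=1,A^{(k)}=0)$, marginalize over $\bold{X}^{(k)}_{\bar{\mathcal{S}}}$, and exploit the fact that the release probability times $P_{\bold{X}_{\bar{\mathcal{L}}}|\bold{X}_{\mathcal{S}}}$ collapses to $\min_{w}P_{\bold{X}_{\bar{\mathcal{L}}}|\bold{X}_{\mathcal{S}}}(\cdot|w)$ independently of $x_{\mathcal{S}}$, splitting the sum according to whether $x_{\mathcal{L}\cap\mathcal{S}}$ equals $v_{\mathcal{L}\cap\mathcal{S}}$ to produce the $\mathcal{E}$ and $1-\mathcal{E}$ factors. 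The only cosmetic difference is that the paper carries a free function $f(y,x_{\bar{\mathcal{L}}})$ and optimizes it at the end to recover the two case distinctions, whereas you substitute the concrete mechanism parameters directly; the resulting algebra is identical.
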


The proof is presented in the Appendix.


\begin{figure*}[t]
    \centering
    {\includegraphics[width=1.8\columnwidth]{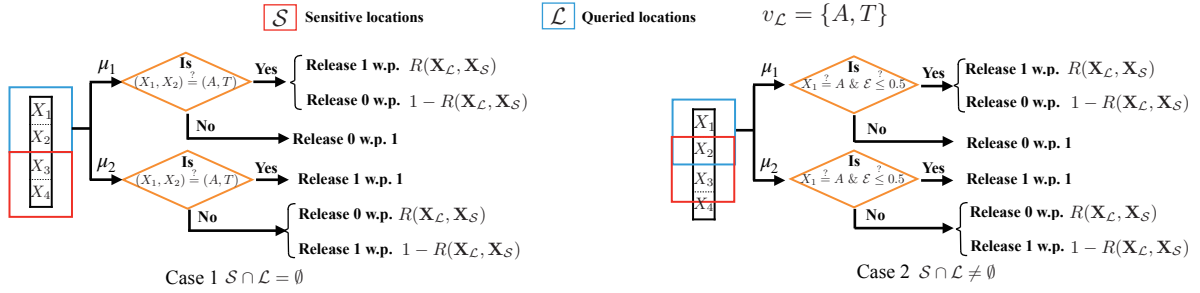}
    \caption{Description for the illustrative example for the two mechanisms $\mathcal{M}_1$ and $\mathcal{M}_2$.}
    \vspace{-10pt}
    \label{numerical}}
\end{figure*}

\subsubsection{Illustrative Example}
We next explain our mechanisms through an illustrative example. Consider a user's genome sequence  $\bold{X}=\{X_1,X_2,X_3, X_4\}$, and a target query  $\mathcal{Q}=\{\mathcal{L}=\{1,2\}$, $v_{\mathcal{L}}=\{A,T\}\}$, i.e., the query is of the form ``\textit{Is  $(X_1, X_2) = (A, T)$?}". In this example, we consider two scenarios for the set $\mathcal{S}$ of sensitive locations. Specifically, we can either have $\mathcal{S} \cap \mathcal{L} = \emptyset$ or  $\mathcal{S} \cap \mathcal{L} \neq \emptyset$.



\noindent \underline{\textbf{Case 1:  $\mathcal{S} \cap \mathcal{L} = \emptyset$}}~~  When $\mathcal{S}=\{3,4\}$, if $R(\bold{X}_{\mathcal{L}},\bold{X}_{\mathcal{S}})=0$, mechanism $\mathcal{M}_1$ always releases $0$ and $\mathcal{M}_2$ always releases $1$. When $R(\bold{X}_{\mathcal{L}},\bold{X}_{\mathcal{S}})=1$, both mechanisms release the true answer, i.e., $Y=A$.  For the case when $0< R(\bold{X}_{\mathcal{L}},\bold{X}_{\mathcal{S}})< 1$, each mechanism perturbs the true answer as follows:

$\mathcal{M}_1$ first examines whether $\{X_1,X_2\}=\{A,T\}$, if no, then it releases the correct answer $Y=0$; if yes, then it releases $Y=1$ with probability $R(\bold{X}_{\mathcal{L}},\bold{X}_{\mathcal{S}})$, and $Y=0$ with the remaining probability $1- R(\bold{X}_{\mathcal{L}},\bold{X}_{\mathcal{S}})$. On the other hand, for mechanism $\mathcal{M}_2$, if $\{X_1,X_2\}=\{A,T\}$, then it releases $Y=1$; if not, then it releases $Y=0$ with probability $R(\bold{X}_{\mathcal{L}},\bold{X}_{\mathcal{S}})$ (and $Y=1$ with the remaining probability 1- $R(\bold{X}_{\mathcal{L}},\bold{X}_{\mathcal{S}})$). The intuition behind the mechanisms is to limit the leakage which arises due to the dependence between the genotypes in non-sensitive query locations $\bar{\mathcal{L}}$ and the sensitive locations $\mathcal{S}$. 


\noindent \underline{\textbf{Case 2:  $\mathcal{S} \cap \mathcal{L} \neq \emptyset$}}~~ We next consider the case when $\mathcal{S}= \{2,3\}$, i.e.,  $\mathcal{L}\cap \mathcal{S} = \{2\}$. Similar to Case 1, when $R(\bold{X}_{\mathcal{L}},\bold{X}_{\mathcal{S}})=0$, mechanism $\mathcal{M}_1$ always releases $0$ and $\mathcal{M}_2$ always releases $1$ (which clearly satisfies perfect privacy). When $ R(\bold{X}_{\mathcal{L}},\bold{X}_{\mathcal{S}})>0$, however, since the intersection $\mathcal{L}\cap \mathcal{S}= \{2\}$ is non-empty, both mechanisms first check if $\mathcal{E}= P(X_2 \neq v_2) = P(X_2 \neq T)<1/2$, and if yes, then proceed in a similar manner as before. Specifically, each mechanism first examines whether the non-sensitive query sequence $X_{\bar{\mathcal{L}}}$ matches $v_{\bar{\mathcal{L}}}$ and $\mathcal{E}<1/2$, if yes, then $\mathcal{M}_1$ releases $Y=1$ with probability $R(\bold{X}_{\mathcal{L}},\bold{X}_{\mathcal{S}})$, and $Y=0$ with the remaining probability $1- R(\bold{X}_{\mathcal{L}},\bold{X}_{\mathcal{S}})$; if no, it releases the correct answer $Y=0$. Both cases are illustrated in Fig. \ref{numerical}.



\begin{rmk}  As an alternative baseline scheme, one can perturb the whole genome sequence first by (for instance, by applying the scheme in \cite{ye2020mechanisms}) and then answer the count. However, perturbing the whole sequence is unnecessary, and the complexity of such a scheme will grow exponentially with the genome sequence length $N$. In contrast, the complexity of our schemes grows exponentially with $|\mathcal{L} \cup \mathcal{S}|$ which can be substantially smaller than $N$. 
\end{rmk}

\begin{cor}
We can readily specialize the general result of Theorem \ref{thm:Pe} for the case when genomic data sequences are i.i.d, and  each genotype is uniformly distributed over a finite alphabet $\mathcal{X}$. Define $\lambda=\left(\frac{1}{|\mathcal{X}|}\right)^{|\mathcal{L}|}$, then for $\mathcal{M}^{(k)}_1$ and $\mathcal{M}^{(k)}_2$, if $\mathcal{E}\le{0.5}$, $\min\{P^{(k)}_{e,1},P^{(k)}_{e,2}\}=0$; if $\mathcal{E}>{0.5}$, $\min\{P^{(k)}_{e,1},P^{(k)}_{e,2}\}=\lambda$, where $\mathcal{E}=1-P_{\bold{X}^{(k)}_{\mathcal{L}\cap\mathcal{S}}}(v_{\mathcal{L}\cap\mathcal{S}})=1-\left(\frac{1}{|\mathcal{X}|}\right)^{|\mathcal{L}\cap\mathcal{S}|}$. 
\end{cor}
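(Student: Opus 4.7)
The plan is to exploit two simplifications that come from i.i.d.\ uniformity. The first is that independence across coordinates makes $\bold{X}^{(k)}_{\bar{\mathcal{L}}}$ and $\bold{X}^{(k)}_{\mathcal{S}}$ statistically independent, since $\bar{\mathcal{L}} = \mathcal{L}\setminus(\mathcal{L}\cap\mathcal{S})$ is disjoint from $\mathcal{S}$ by construction. Consequently $P_{\bold{X}^{(k)}_{\bar{\mathcal{L}}}|\bold{X}^{(k)}_{\mathcal{S}}}(\cdot|w)$ is constant in $w$ and equals its uniform marginal, so the ratio $R$ defined in \eqref{eqr} satisfies $R\equiv 1$. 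The second simplification is that $P_{\bold{X}^{(k)}_{\mathcal{L}}}(v_{\mathcal{L}}) = (1/|\mathcal{X}|)^{|\mathcal{L}|} = \lambda$ and $\mathcal{E} = 1 - (1/|\mathcal{X}|)^{|\mathcal{L}\cap\mathcal{S}|}$; for the alphabet sizes of interest ($|\mathcal{X}|\ge 3$, in particular $|\mathcal{X}|=4$ for DNA), the only way $\mathcal{E}\le 1/2$ can hold is $|\mathcal{L}\cap\mathcal{S}|=0$, which then forces $\mathcal{E}=0$ and $\bar{\mathcal{L}}=\mathcal{L}$.

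With $R\equiv 1$ in hand, I would analyze the two regimes directly from the mechanism definitions \eqref{eqn:proposed_mechanism_1}--\eqref{eqn:proposed_mechanism_2}, sidestepping the algebraic machinery of Theorem~\ref{thm:Pe}. In the regime $\mathcal{E}\le 1/2$ (so $\bar{\mathcal{L}} = \mathcal{L}$), both $\mathcal{M}^{(k)}_1$ and $\mathcal{M}^{(k)}_2$ reduce to the deterministic rule ``release $1$ iff $x_{\bar{\mathcal{L}}} = v_{\bar{\mathcal{L}}}$,'' which coincides exactly with the true answer $A^{(k)} = \mathbbm{1}_{\{\bold{X}^{(k)}_{\mathcal{L}} = v_{\mathcal{L}}\}}$. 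Hence $P^{(k)}_{e,1} = P^{(k)}_{e,2} = 0$. In the complementary regime $\mathcal{E} > 1/2$, the ``otherwise'' branch fires for both mechanisms: $\mu_1 = 0$ and $\mu_2 = 1 - R = 0$, so each mechanism deterministically emits $0$. The error is then $\Pr(A^{(k)} = 1) = P_{\bold{X}^{(k)}_{\mathcal{L}}}(v_{\mathcal{L}}) = \lambda$, giving $\min\{P^{(k)}_{e,1},P^{(k)}_{e,2}\} = \lambda$.

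As a cross-check, one could instead substitute directly into Theorem~\ref{thm:Pe} using the convenient identity $(1/|\mathcal{X}|)^{|\bar{\mathcal{L}}|} = \lambda/(1-\mathcal{E})$ that comes from $|\bar{\mathcal{L}}| = |\mathcal{L}|-|\mathcal{L}\cap\mathcal{S}|$; the Case~1 expressions both collapse to $\lambda\mathcal{E}/(1-\mathcal{E})$ (which vanishes at $\mathcal{E}=0$), and in Case~2 the three terms with denominator $1-\mathcal{E}$ telescope to show $P^{(k)}_{e,2} = \lambda = P^{(k)}_{e,1}$. The proof is essentially computational; the only real conceptual step is noticing that in the uniform i.i.d.\ setting the dichotomy $\mathcal{E}\le 1/2$ versus $\mathcal{E}>1/2$ is precisely the dichotomy $\mathcal{L}\cap\mathcal{S}=\emptyset$ versus $\mathcal{L}\cap\mathcal{S}\ne\emptyset$, which is what drives the clean collapse of the error probabilities to $0$ and $\lambda$.
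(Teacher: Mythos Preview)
Your argument is correct. The paper states this corollary without proof, so there is nothing to compare against; your two-pronged derivation (directly from the mechanism definitions with $R\equiv 1$, and separately by substituting into the Theorem~\ref{thm:Pe} formulas via $(1/|\mathcal{X}|)^{|\bar{\mathcal{L}}|}=\lambda/(1-\mathcal{E})$) is exactly the kind of verification the paper leaves to the reader.

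One point worth emphasizing: your observation that, for $|\mathcal{X}|\ge 3$, the condition $\mathcal{E}\le 1/2$ is \emph{equivalent} to $\mathcal{L}\cap\mathcal{S}=\emptyset$ is not merely convenient---it is essential. Without it (e.g.\ if one were to take $|\mathcal{X}|=2$ and $|\mathcal{L}\cap\mathcal{S}|=1$, so $\mathcal{E}=1/2$), both mechanisms output $\mathbbm{1}_{\{x_{\bar{\mathcal{L}}}=v_{\bar{\mathcal{L}}}\}}$ rather than $A^{(k)}=\mathbbm{1}_{\{x_{\mathcal{L}}=v_{\mathcal{L}}\}}$, and your own cross-check formula $P^{(k)}_{e,1}=P^{(k)}_{e,2}=\lambda\mathcal{E}/(1-\mathcal{E})$ gives $\lambda$, not $0$. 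So the corollary as written relies on the genomic alphabet size $|\mathcal{X}|=4$; you flagged this correctly, and it is a subtlety the paper does not spell out.
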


\subsubsection{Lower Bound on $P_e$}
We next derive an information-theoretic lower bound on $P_e$ for any local mechanism that satisfies perfect privacy. The goal is to examine how far our mechanisms are from optimality.

\begin{thm}\label{thm:lowerbound}
For any privacy-preserving local data release mechanisms  that provide perfect privacy for genomic  data at sensitive locations $\mathcal{S}$, the $P_e$ is lower bounded as:
\begin{equation}
    P^{(k)}_e\ge h^{-1}\left\{{h(A^{(k)})-\min\{h(A^{(k)}_{\bar{\mathcal{L}}}|A^{(k)}_{{\mathcal{L}\cap\mathcal{S}}}),h(A^{(k)}|\bold{X}^{(k)}_{{\mathcal{S}}})\}}\right\},
\end{equation}
where $h(\cdot)$ denotes the binary entropy function,  $A^{(k)}_{\bar{\mathcal{L}}} \triangleq\mathbbm{1}_{\{\bold{X}^{(k)}_{\bar{\mathcal{L}}}=v_{\bar{\mathcal{L}}}\}}$ and $A^{(k)}_{{\mathcal{L}\cap\mathcal{S}}} \triangleq\mathbbm{1}_{\{\bold{X}^{(k)}_{\mathcal{L}\cap\mathcal{S}}=v_{\mathcal{L}\cap\mathcal{S}}\}}$.
\end{thm}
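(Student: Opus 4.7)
The plan is to apply Fano's inequality to the pair of binary random variables $(A^{(k)}, Y^{(k)})$ and then upper bound the mutual information $I(A^{(k)}; Y^{(k)})$ in two complementary ways, each exploiting the perfect privacy constraint $I(Y^{(k)}; \bold{X}^{(k)}_{\mathcal{S}}) = 0$. Since $P_e^{(k)} = \Pr(Y^{(k)} \neq A^{(k)})$ with both variables binary, Fano yields $H(A^{(k)}|Y^{(k)}) \le h(P_e^{(k)})$, so
$$h(P_e^{(k)}) \ge H(A^{(k)}) - I(A^{(k)}; Y^{(k)}).$$
I would assume without loss of generality that $P_e^{(k)} \le 1/2$, since any mechanism with $P_e^{(k)} > 1/2$ can be replaced by the complement mechanism $1 - Y^{(k)}$, which still satisfies perfect privacy and achieves error $1 - P_e^{(k)} < 1/2$. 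This makes the eventual application of $h^{-1}$ unambiguous.

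For the first upper bound on $I(A^{(k)}; Y^{(k)})$, I would exploit the factorization $A^{(k)} = A^{(k)}_{\bar{\mathcal{L}}} \cdot A^{(k)}_{\mathcal{L}\cap\mathcal{S}}$ implied by $A^{(k)}$ being a product of indicators. The crucial observation is that $A^{(k)}_{\mathcal{L}\cap\mathcal{S}}$ is a deterministic function of $\bold{X}^{(k)}_{\mathcal{L}\cap\mathcal{S}} \subseteq \bold{X}^{(k)}_{\mathcal{S}}$, so perfect privacy of the full sensitive vector and the data processing inequality give $I(Y^{(k)}; A^{(k)}_{\mathcal{L}\cap\mathcal{S}}) = 0$. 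Since $A^{(k)}$ is itself a function of the pair $(A^{(k)}_{\bar{\mathcal{L}}}, A^{(k)}_{\mathcal{L}\cap\mathcal{S}})$, a second application of the data processing inequality combined with the chain rule yields
$$I(A^{(k)}; Y^{(k)}) \le I(A^{(k)}_{\bar{\mathcal{L}}}, A^{(k)}_{\mathcal{L}\cap\mathcal{S}}; Y^{(k)}) = I(A^{(k)}_{\bar{\mathcal{L}}}; Y^{(k)} \mid A^{(k)}_{\mathcal{L}\cap\mathcal{S}}) \le H(A^{(k)}_{\bar{\mathcal{L}}} \mid A^{(k)}_{\mathcal{L}\cap\mathcal{S}}).$$
For the second upper bound, I would introduce $\bold{X}^{(k)}_{\mathcal{S}}$ as a side variable and apply the chain rule in the opposite order: $I(Y^{(k)}; A^{(k)}, \bold{X}^{(k)}_{\mathcal{S}}) = I(Y^{(k)}; \bold{X}^{(k)}_{\mathcal{S}}) + I(Y^{(k)}; A^{(k)} \mid \bold{X}^{(k)}_{\mathcal{S}}) = I(Y^{(k)}; A^{(k)} \mid \bold{X}^{(k)}_{\mathcal{S}})$ by perfect privacy. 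Since $I(A^{(k)}; Y^{(k)}) \le I(Y^{(k)}; A^{(k)}, \bold{X}^{(k)}_{\mathcal{S}})$ and a conditional mutual information is bounded by the corresponding conditional entropy, this yields $I(A^{(k)}; Y^{(k)}) \le H(A^{(k)} \mid \bold{X}^{(k)}_{\mathcal{S}})$.

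Taking the minimum of the two upper bounds and substituting back into the Fano inequality gives
$$h(P_e^{(k)}) \ge H(A^{(k)}) - \min\bigl\{H(A^{(k)}_{\bar{\mathcal{L}}} \mid A^{(k)}_{\mathcal{L}\cap\mathcal{S}}),\ H(A^{(k)} \mid \bold{X}^{(k)}_{\mathcal{S}})\bigr\},$$
and inverting the binary entropy function on $[0, 1/2]$ yields the claimed bound. I do not expect any step to be technically hard; the argument is essentially Fano plus two short chain-rule manipulations. The main points that warrant care are (i) invoking data processing to turn perfect privacy of the vector $\bold{X}^{(k)}_{\mathcal{S}}$ into independence of its binary function $A^{(k)}_{\mathcal{L}\cap\mathcal{S}}$ from $Y^{(k)}$, and (ii) justifying the correct branch of $h^{-1}$, which is taken care of by the preliminary reduction to $P_e^{(k)} \le 1/2$.
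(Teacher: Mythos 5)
Your proposal is correct and follows essentially the same route as the paper's proof: Fano's inequality for the binary pair $(A^{(k)},Y^{(k)})$, followed by the two upper bounds $I(A^{(k)};Y^{(k)})\le h(A^{(k)}_{\bar{\mathcal{L}}}\mid A^{(k)}_{\mathcal{L}\cap\mathcal{S}})$ (via the factorization of $A^{(k)}$ and independence of $Y^{(k)}$ from $A^{(k)}_{\mathcal{L}\cap\mathcal{S}}$) and $I(A^{(k)};Y^{(k)})\le h(A^{(k)}\mid \bold{X}^{(k)}_{\mathcal{S}})$ (via conditioning on $\bold{X}^{(k)}_{\mathcal{S}}$ and the privacy constraint). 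Your explicit reduction to $P_e^{(k)}\le 1/2$ to fix the branch of $h^{-1}$ is a small point of added care that the paper leaves implicit.
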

This bound follows by using Fano’s inequality and the privacy constraint in \eqref{PP_l}. Detailed proof is provided in the Appendix. Denote $LB(P^{(k)}_e)$ as the lower bound of $P^{(k)}_e$ calculated from Theorem \ref{thm:lowerbound}, we next consider some special cases and compare the lower bound with the error probability of the proposed mechanisms. 
\begin{rmk} Under the following scenarios, $LB(P^{(k)}_e)$ from Theorem \ref{thm:lowerbound} matches $\min\{P^{(k)}_{e,1}, P^{(k)}_{e,2}\}$ from Theorem \ref{thm:Pe}:

(1)  When $\mathcal{L}\cap{\mathcal{S}}=\emptyset$: we consider two scenarios regarding the data correlation between $\bold{X}^{(k)}_{\mathcal{L}}$ and $\bold{X}^{(k)}_{\mathcal{S}}$.
    
    a) If $I(A^{(k)};\bold{X}^{(k)}_{\mathcal{S}})=h(A^{(k)})$,  $LB(P^{(k)}_e)=\min\{P^{(k)}_{e,1},P^{(k)}_{e,2}\}=\min\{P^{(k)}_A(0),P^{(k)}_A(1)\},$ which means when  $\bold{X}^{(k)}_{\mathcal{S}}$ and $\bold{X}^{(k)}_{\mathcal{L}}$ are fully dependent, any mechanism cannot outperform directly sampling result from the prior distribution.

b) If $I(A^{(k)};\bold{X}^{(k)}_{\mathcal{S}})=0$, $LB(P^{(k)}_e)=\min\{P^{(k)}_{e,1},P^{(k)}_{e,2}\}=0$, which implies when $\bold{X}^{(k)}_{\mathcal{L}}$ does not depend on $\bold{X}^{(k)}_{\mathcal{S}}$, our mechanisms achieve $P^{(k)}_e=0$, and matches the lower bound.

(2) When $\mathcal{L}\in\mathcal{S}$: 
$LB(P^{(k)}_e)=\min\{P^{(k)}_{e,1},P^{(k)}_{e,2}\}=0$, which is a special case of $I(A^{(k)};\bold{X}^{(k)}_{\mathcal{S}})=h(A^{(k)})$.

\end{rmk} 

In Section IV, we compare the lower bound with the error of the proposed mechanisms via simulations.

\subsubsection{Utility of Private Aggregated Query}



Next, we show that even though the proposed mechanisms are designed to minimize $P_e$, under the i.i.d setting (users are independent), using the summation of $P_e$ as a measure of EAE does not lose any optimality.

\begin{prop}
For the proposed mechanisms, if the sequences across users are i.i.d, then the EAE is given as follows:
\begin{align} 
    {E}[|Y-A|] = \sum_{k = 1}^{K} \min \left[ P_{e,1}^{(k)}, P_{e, 2}^{(k)} \right]. \nonumber
\end{align}
\end{prop}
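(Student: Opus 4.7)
The plan is to prove the equality by bounding $E[|Y-A|]$ from above via the triangle inequality and matching it from below by exploiting a monotonicity property of the proposed mechanisms under the i.i.d.\ assumption.

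Starting from $Y - A = \sum_{k=1}^{K} D^{(k)}$ with $D^{(k)} \triangleq Y^{(k)} - A^{(k)} \in \{-1, 0, 1\}$, the triangle inequality gives $|Y-A| \le \sum_k |D^{(k)}|$. Taking expectations and observing that each user is free to adopt whichever of $\mathcal{M}_1^{(k)}$ or $\mathcal{M}_2^{(k)}$ attains the smaller per-user error yields the upper bound
\begin{equation*}
  E[|Y - A|] \le \sum_{k=1}^{K} \min\bigl[P_{e,1}^{(k)},\, P_{e,2}^{(k)}\bigr].
\end{equation*}

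For the reverse direction, I would argue that the triangle inequality is tight here because, in each regime, at least one of the two mechanisms produces one-sided errors. In Case 1 ($\mathcal{E} \le 1/2$), mechanism $\mathcal{M}_2^{(k)}$ satisfies $Y^{(k)} \ge A^{(k)}$ almost surely: whenever $A^{(k)} = 1$ we have $x_{\bar{\mathcal{L}}} = v_{\bar{\mathcal{L}}}$, and the definition of $\mathcal{M}_2$ then forces $Y^{(k)} = 1$ deterministically. Symmetrically, in Case 2 ($\mathcal{E} > 1/2$), $\mathcal{M}_1^{(k)}$ always outputs $Y^{(k)} = 0 \le A^{(k)}$. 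Under the i.i.d.\ assumption, if every user applies this one-sided mechanism, the variables $D^{(k)}$ all share a common sign, so $|Y-A| = \sum_k |D^{(k)}|$ pointwise and $E[|Y-A|] = \sum_k P_e^{(k)}$ exactly.

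The main obstacle, and the step that welds the two pieces together, is verifying that the one-sided mechanism in each case coincides with the per-user $P_e$ minimizer so that the exact value matches the $\min$ in the statement. I would handle this by a short case analysis on whether $\min_w P_{\mathbf{X}^{(k)}_{\bar{\mathcal{L}}}|\mathbf{X}^{(k)}_{\mathcal{S}}}(v_{\bar{\mathcal{L}}}|w)$ is attained at some $w$ with $w_{\mathcal{L}\cap\mathcal{S}} = v_{\mathcal{L}\cap\mathcal{S}}$: when it is, the probability ratio $R$ equals $1$ whenever $A^{(k)}=1$, so both mechanisms become one-sided and we simply select the smaller $P_{e,j}^{(k)}$; when it is not, a direct manipulation of the expressions in Theorem~\ref{thm:Pe} shows that the mechanism failing to be monotone cannot strictly beat the monotone one in $P_e$, so the guaranteed one-sided mechanism is also the $\min$ attainer. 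Combining the upper and lower bounds delivers $E[|Y-A|] = \sum_k \min\bigl[P_{e,1}^{(k)}, P_{e,2}^{(k)}\bigr]$.
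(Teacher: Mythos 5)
Your upper bound is fine, and your two one-sidedness observations are correct: in Case 1 ($\mathcal{E}\le 1/2$) mechanism $\mathcal{M}_2^{(k)}$ does satisfy $Y^{(k)}\ge A^{(k)}$ almost surely, and in Case 2 mechanism $\mathcal{M}_1^{(k)}$ outputs $0$ so $Y^{(k)}\le A^{(k)}$. Your route to the matching lower bound (pointwise sign agreement of the $D^{(k)}$, making the triangle inequality an equality) is in substance the same as the paper's route (Jensen, $E[|Y-A|]\ge |E[Y-A]|$, followed by substitution of the mechanism): both are tight exactly when the per-user errors of the selected mechanism never cancel in sign, i.e., when that mechanism is one-sided.

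The genuine gap is the step you defer to a ``short case analysis'': that the one-sided mechanism always attains $\min[P_{e,1}^{(k)},P_{e,2}^{(k)}]$. This is false, and neither of your sub-arguments repairs it. First, even when $\min_w P_{\mathbf{X}_{\bar{\mathcal{L}}}|\mathbf{X}_{\mathcal{S}}}(v_{\bar{\mathcal{L}}}|w)$ is attained at a $w$ with $w_{\mathcal{L}\cap\mathcal{S}}=v_{\mathcal{L}\cap\mathcal{S}}$, the ratio $R$ equals $1$ only for that particular realization of $x_{\mathcal{S}}$, not for every $x_{\mathcal{S}}$ consistent with $A^{(k)}=1$, so $\mathcal{M}_1$ does not become one-sided. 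Second, the non-monotone mechanism can strictly win: take $|\bar{\mathcal{L}}|=1$, $0<\mathcal{E}\le 1/2$, $m_v\triangleq\min_w P(v_{\bar{\mathcal{L}}}|w)>0$ small, $m_x=0$ for $x\neq v_{\bar{\mathcal{L}}}$, and $P_{\mathbf{X}_{\mathcal{L}}}(v_{\mathcal{L}})<1/2$; Theorem~\ref{thm:Pe} then gives $P_{e,1}=P_{\mathbf{X}_{\mathcal{L}}}(v_{\mathcal{L}})+(2\mathcal{E}-1)m_v<1-P_{\mathbf{X}_{\mathcal{L}}}(v_{\mathcal{L}})=P_{e,2}$, while $\mathcal{M}_1$ makes errors of both signs, since $\Pr(Y^{(k)}=1,A^{(k)}=0)=m_v\mathcal{E}>0$ and $\Pr(Y^{(k)}=0,A^{(k)}=1)=P_{\mathbf{X}_{\mathcal{L}}}(v_{\mathcal{L}})-m_v(1-\mathcal{E})>0$. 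For $K\ge 2$ i.i.d.\ users running $\mathcal{M}_1$ one then has $E\bigl[\bigl|\sum_k D^{(k)}\bigr|\bigr]<\sum_k E\bigl[|D^{(k)}|\bigr]$ strictly, so the claimed equality cannot be recovered by your argument in this regime. (The paper's own computation quietly drops the $\Pr(Y=1,A=0)$ term for $\mathcal{M}_1$ and so only genuinely covers the situations where the selected mechanism is one-sided, e.g.\ $\mathcal{E}=0$ or the deterministic branch of Case 2; any honest proof has to either restrict to those situations or establish the crux you deferred, which does not hold in general.)
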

The proof is provided in the Appendix. 

\begin{figure*}[t]
\centering 
\subfigure[Length of intersected data is $0$] 
{ \includegraphics[width=0.30\textwidth]{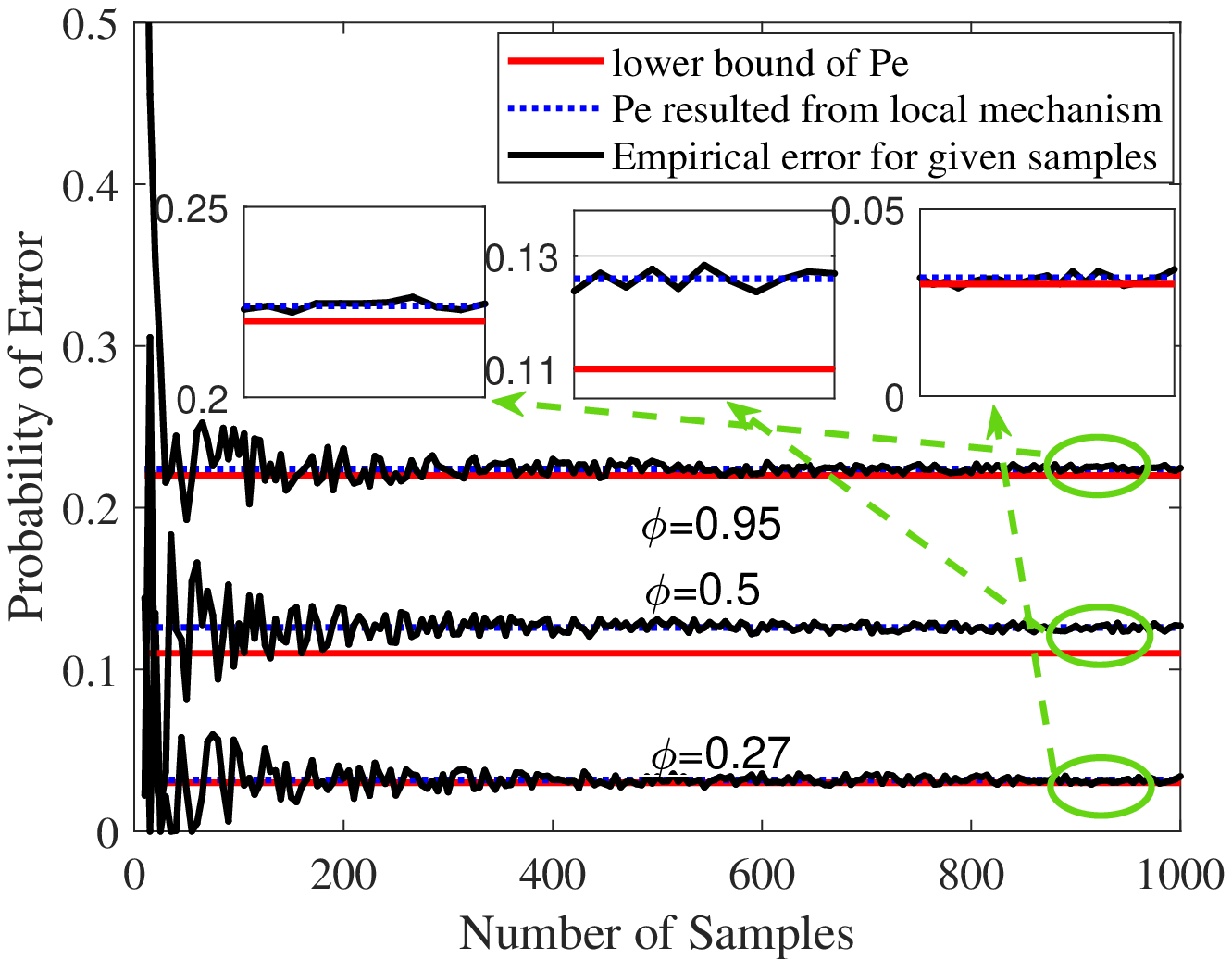}
\label{rs1} }
\subfigure[Length of intersected data is $4$] 
{ \includegraphics[width=0.3\textwidth]{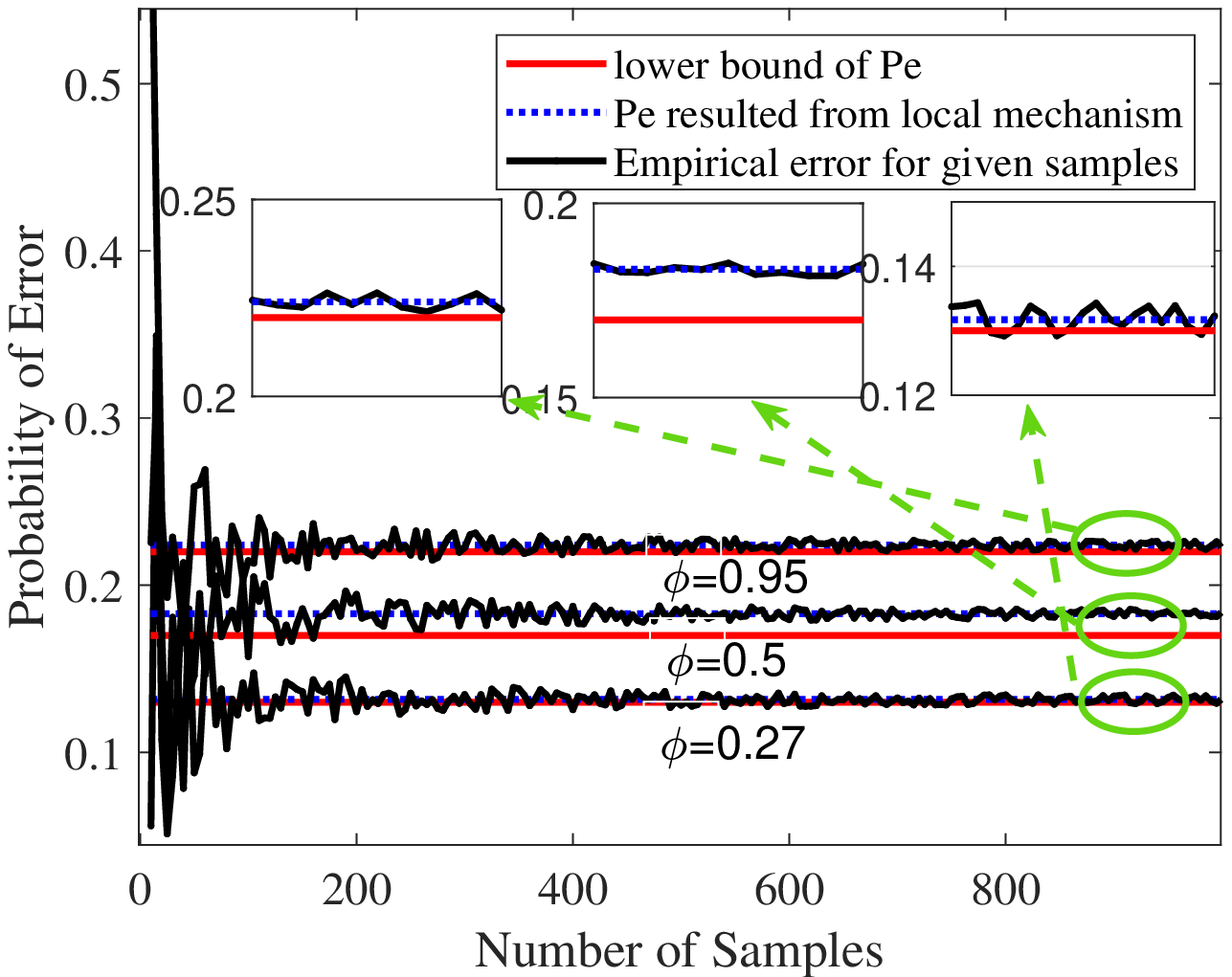}
\label{rs2} }
\subfigure[Length of intersected data is $8$] 
{ \includegraphics[width=0.30\textwidth]{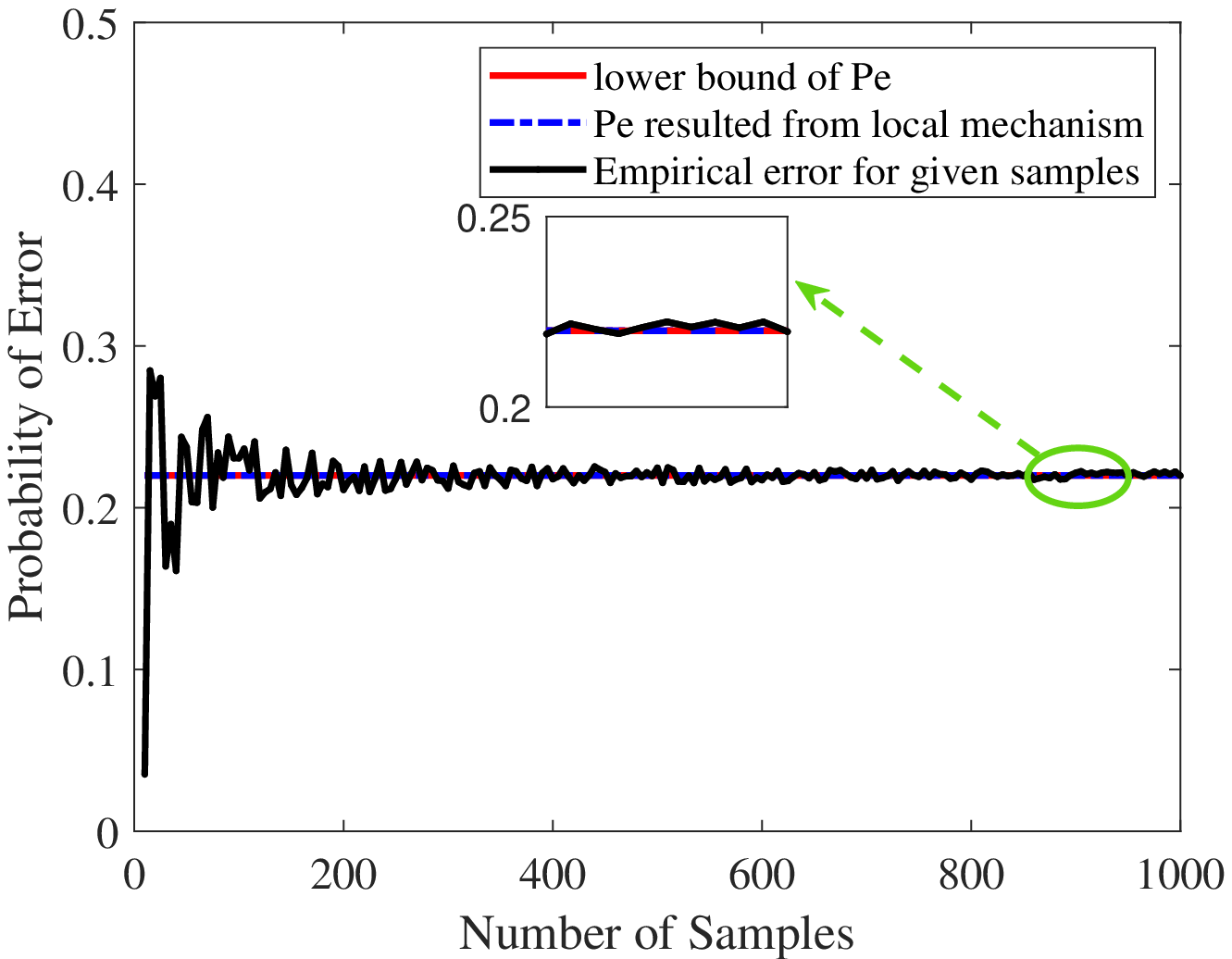}
\label{rs3} }
\caption{Comparison between $P_e$ and lower bound under different lengths of intersected data (correlation is measured by $\phi$)}
\label{lower} 
\vspace{-10pt}
\end{figure*}

\subsection{Centralized Mechanism}
In the following, we present the privacy-preserving mechanism in the centralized setting. In the centralized setting, the mechanism takes the whole dataset as input and then aggregates the true answer to some query $\mathcal{Q}=\{\mathcal{L}, v_{\mathcal{L}}\}$. The aggregated result is denoted by $A$. Finally, the mechanism perturbs $A$ and releases a privatized version $Y$ as the output.  {We denote $\bold{X}_{\mathcal{S}}=\{\bold{X}_{\mathcal{S}}^{(k)}\}^{(K)}_{k=1}$ as the sensitive data matrix}, which includes each user's genomic sequence at sensitive locations.

Problem formulation: The utility measurement in the centralized model is identical to that of \eqref{utility}. For the privacy constraints of the central model, the mechanism should provide the answer $Y$, which is independent of $\bold{X}_{\mathcal{S}}$.
It can be easily shown that, $Y$ is independent of each $\bold{X}^{(k)}_{\mathcal{S}}$ for $k\in \{1,..,K\}$. As such, the problem in the central setting can be represented as:
\begin{equation}
    \min E[|A-Y|] ~~ \text{s.t.} ~P_{Y|\bold{X}_{\mathcal{S}}}(y|\bold{x}_{\mathcal{S}}) = P_{Y}(y).
\end{equation}

The privacy-preserving mechanism in the central setting, $\mu_3$ can be specified as follows:
\begin{equation}\label{central_mec}
\begin{aligned}
 &\mu_{3}(D, \mathcal{Q})   \triangleq P_{Y|A,\bold{X}_{\mathcal{S}}}(y| a, \bold{x}_{\mathcal{S}})=\\
   &\begin{cases} P_A(y)+(1-P_A(y))\frac{\min_wP_{A|\bold{X}_{\mathcal{S}}}(a|w)}{P_{A|\bold{X}_{\mathcal{S}}}(a|\bold{x}_{\mathcal{S}})}, ~~~~~~~~~~~ \text{if} \hspace{0.05in} y = a, \\
   P_A(y)\left[1-\frac{\min_wP_{A|\bold{X}_{\mathcal{S}}}(a|w)}{P_{A|\bold{X}_{\mathcal{S}}}(a|\bold{x}_{\mathcal{S}})}\right],  ~~~~~~~~~~~~~~~~~~~~~\text{if} \hspace{0.05in} y \neq a. 
   \end{cases}\\
   \end{aligned}
\end{equation}
\begin{thm}
The mechanism provided in \eqref{central_mec} achieves perfect privacy. 
\end{thm}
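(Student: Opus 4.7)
The plan is to mimic the style of the perfect-privacy proof for the local mechanisms given earlier (Theorem \ref{thm:privacy_mechanism_case_1}): apply the law of total probability to expand $P_{Y|\bold{X}_{\mathcal{S}}}(y|\bold{x}_{\mathcal{S}})$ over all values of the aggregated true answer $A$, plug in the piecewise definition of $\mu_3$ from \eqref{central_mec}, and show that every term that depends on $\bold{x}_{\mathcal{S}}$ cancels, leaving a quantity that is a function of $y$ alone. Once this is established, marginalizing trivially gives $P_{Y|\bold{X}_{\mathcal{S}}}(y|\bold{x}_{\mathcal{S}}) = P_Y(y)$, which is the perfect privacy condition \eqref{PP_l2}.

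Concretely, first I would write
\begin{equation*}
P_{Y|\bold{X}_{\mathcal{S}}}(y|\bold{x}_{\mathcal{S}}) = \sum_{a} P_{Y|A,\bold{X}_{\mathcal{S}}}(y|a,\bold{x}_{\mathcal{S}})\, P_{A|\bold{X}_{\mathcal{S}}}(a|\bold{x}_{\mathcal{S}}),
\end{equation*}
and split this sum into the $a=y$ term and the $a\neq y$ terms, using the two branches of \eqref{central_mec} respectively. In the $a=y$ branch, the factor $P_{A|\bold{X}_{\mathcal{S}}}(a|\bold{x}_{\mathcal{S}})$ in the denominator of $R'(a,\bold{x}_{\mathcal{S}}) \triangleq \min_w P_{A|\bold{X}_{\mathcal{S}}}(a|w)/P_{A|\bold{X}_{\mathcal{S}}}(a|\bold{x}_{\mathcal{S}})$ cancels against the outer multiplication by $P_{A|\bold{X}_{\mathcal{S}}}(y|\bold{x}_{\mathcal{S}})$, leaving $(1-P_A(y))\min_w P_{A|\bold{X}_{\mathcal{S}}}(y|w)$ plus $P_A(y)\, P_{A|\bold{X}_{\mathcal{S}}}(y|\bold{x}_{\mathcal{S}})$. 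In the $a\neq y$ branch the same cancellation yields $P_A(y)\bigl[P_{A|\bold{X}_{\mathcal{S}}}(a|\bold{x}_{\mathcal{S}}) - \min_w P_{A|\bold{X}_{\mathcal{S}}}(a|w)\bigr]$ per term, which I would sum against $\sum_{a\neq y}P_{A|\bold{X}_{\mathcal{S}}}(a|\bold{x}_{\mathcal{S}}) = 1 - P_{A|\bold{X}_{\mathcal{S}}}(y|\bold{x}_{\mathcal{S}})$.

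Combining the two branches, the $P_A(y)\,P_{A|\bold{X}_{\mathcal{S}}}(y|\bold{x}_{\mathcal{S}})$ terms add up to $P_A(y)$, and after collecting the remaining pieces one should obtain
\begin{equation*}
P_{Y|\bold{X}_{\mathcal{S}}}(y|\bold{x}_{\mathcal{S}}) = P_A(y) + \min_w P_{A|\bold{X}_{\mathcal{S}}}(y|w) - P_A(y)\sum_{a}\min_w P_{A|\bold{X}_{\mathcal{S}}}(a|w),
\end{equation*}
which is manifestly independent of $\bold{x}_{\mathcal{S}}$. Marginalizing the joint distribution of $(Y,\bold{X}_{\mathcal{S}})$ over $\bold{X}_{\mathcal{S}}$ then identifies this expression with $P_Y(y)$, completing the proof. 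As an auxiliary sanity check, I would also verify that $\mu_3$ is a valid conditional distribution, i.e.\ that the two branches sum to $1$ over $y$, which confirms that $R'(a,\bold{x}_{\mathcal{S}}) \in [0,1]$ (it is by the definition of the minimum) makes the mechanism well posed.

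The main obstacle I anticipate is purely bookkeeping: keeping track of which probabilities depend on $\bold{x}_{\mathcal{S}}$ versus which are minima or unconditional priors, and making sure the $P_A(y)\,P_{A|\bold{X}_{\mathcal{S}}}(y|\bold{x}_{\mathcal{S}})$ and $\sum_{a\neq y}P_A(y)\,P_{A|\bold{X}_{\mathcal{S}}}(a|\bold{x}_{\mathcal{S}})$ contributions combine exactly to $P_A(y)$ so that the only residual $\bold{x}_{\mathcal{S}}$-dependence truly cancels. There is no deep inequality to invoke here — the argument is an algebraic identity driven by the specific design of $\mu_3$, which was constructed precisely so that the $1/P_{A|\bold{X}_{\mathcal{S}}}(a|\bold{x}_{\mathcal{S}})$ factor in $R'$ kills the data-dependent weight $P_{A|\bold{X}_{\mathcal{S}}}(a|\bold{x}_{\mathcal{S}})$ coming from the law of total probability.
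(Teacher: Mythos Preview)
Your proposal is correct and follows essentially the same route as the paper: expand $P_{Y|\bold{X}_{\mathcal{S}}}(y|\bold{x}_{\mathcal{S}})$ via the law of total probability over $A$, split into the $a=y$ and $a\neq y$ branches of \eqref{central_mec}, cancel the $P_{A|\bold{X}_{\mathcal{S}}}(a|\bold{x}_{\mathcal{S}})$ factors, and collect terms to obtain an expression free of $\bold{x}_{\mathcal{S}}$. Your final expression $P_A(y) + \min_w P_{A|\bold{X}_{\mathcal{S}}}(y|w) - P_A(y)\sum_{a}\min_w P_{A|\bold{X}_{\mathcal{S}}}(a|w)$ is in fact the correct simplification (the paper's displayed end result contains a typographical slip), and your added sanity check that $\mu_3$ is a valid conditional pmf is a nice touch not present in the original.
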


Detailed proof is provided in Appendix. More related discussion to perfect privacy can be found in \cite{32120}. It is worth noting that in the central model, the term
\begin{equation}
    R = \frac{\min_wP_{A|\bold{X}_{\mathcal{S}}}(a|w)}{P_{A|\bold{X}_{\mathcal{S}}}(a|\bold{x}_{\mathcal{S}})}
\end{equation}
can be used implicitly to measure the dependence in the dataset. i.e., when the dependence among data is strong, there could be some $w$, such that $\min_wP_{A|\bold{X}_{\mathcal{S}}}(a|w) = 0$, and thus $R = 0$. On the other hand, when data are independent of each other $\min_wP_{A|\bold{X}_{\mathcal{S}}}(a|w)= P_{A|\bold{X}_{\mathcal{S}}}(a|\bold{x}_{\mathcal{S}}) = P_{A}(a)$, and hence $R = 1$. Therefore, $R\in[0,1]$ implicitly measures the dependence among the dataset. Consider the following three different cases regarding the mechanism in \eqref{central_mec}:
\begin{itemize}
    \item When $R=1$, data is independent of each other, the mechanism releases $y=a$ with a probability of $1$. i.e., it will always release the true aggregate.
    \item When $R=0$, data is highly dependent on each other, the mechanism releases $y=a$ with a probability of $P_A(a)$, and the probability of releasing any other $y\neq{a}$ is also $P_A(a)$, which means, the mechanism release an answer by sampling from the distribution of $A$.
    \item When $R\in(0,1)$, the mechanism releases $y=a$ with probability that is larger than the prior of $P_A(y)$. Note that the larger the probability is, the higher utility the mechanism achieves, and this accuracy depends on the data dependence.
\end{itemize}
\begin{thm}
The mechanism in \eqref{central_mec} incurs the following expected error.
\begin{equation}
    \sum_{a=0}^N\sum_{y\neq{a}}|y-a|P_A(y)\left[P_A(a)-\min_{w}P_{A|X_{\mathcal{S}}}(a|w)\right].
    \end{equation}
\end{thm}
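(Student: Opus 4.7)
The plan is to compute $E[|A-Y|]$ directly by expanding the joint distribution of $(A,Y)$ via conditioning on the sensitive matrix $\bold{X}_{\mathcal{S}}$, and then plugging in the mechanism \eqref{central_mec} for the $y\neq a$ branch. Writing
\begin{equation}
E[|A-Y|] = \sum_{a=0}^{N}\sum_{y=0}^{N} |y-a|\, P_{A,Y}(a,y),\nonumber
\end{equation}
the $y=a$ terms vanish thanks to $|y-a|=0$, so only the off-diagonal contributions matter. This immediately focuses attention on the second case of \eqref{central_mec}.

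Next, I would decompose the joint distribution by marginalizing over $\bold{X}_{\mathcal{S}}$:
\begin{equation}
P_{A,Y}(a,y) = \sum_{\bold{x}_{\mathcal{S}}} P_{Y|A,\bold{X}_{\mathcal{S}}}(y|a,\bold{x}_{\mathcal{S}})\, P_{A|\bold{X}_{\mathcal{S}}}(a|\bold{x}_{\mathcal{S}})\, P_{\bold{X}_{\mathcal{S}}}(\bold{x}_{\mathcal{S}}).\nonumber
\end{equation}
For $y \neq a$, substituting the $y \neq a$ branch of \eqref{central_mec} produces the factor $P_A(y)\bigl[1-\tfrac{\min_w P_{A|\bold{X}_{\mathcal{S}}}(a|w)}{P_{A|\bold{X}_{\mathcal{S}}}(a|\bold{x}_{\mathcal{S}})}\bigr]$. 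The critical algebraic observation, which drives the whole proof, is that multiplying this factor by $P_{A|\bold{X}_{\mathcal{S}}}(a|\bold{x}_{\mathcal{S}})$ cancels the denominator and yields the clean difference $P_{A|\bold{X}_{\mathcal{S}}}(a|\bold{x}_{\mathcal{S}}) - \min_w P_{A|\bold{X}_{\mathcal{S}}}(a|w)$, which is linear in the conditional probability.

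Once linearity is secured, the sum over $\bold{x}_{\mathcal{S}}$ splits into two pieces. The first, $\sum_{\bold{x}_{\mathcal{S}}} P_{A|\bold{X}_{\mathcal{S}}}(a|\bold{x}_{\mathcal{S}}) P_{\bold{X}_{\mathcal{S}}}(\bold{x}_{\mathcal{S}})$, collapses to $P_A(a)$ by the law of total probability; the second, $\sum_{\bold{x}_{\mathcal{S}}} \min_w P_{A|\bold{X}_{\mathcal{S}}}(a|w) P_{\bold{X}_{\mathcal{S}}}(\bold{x}_{\mathcal{S}})$, equals $\min_w P_{A|\bold{X}_{\mathcal{S}}}(a|w)$ since the summand is constant in $\bold{x}_{\mathcal{S}}$. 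Together these give $P_{A,Y}(a,y) = P_A(y)\bigl[P_A(a)-\min_w P_{A|\bold{X}_{\mathcal{S}}}(a|w)\bigr]$ for $y\neq a$, and inserting this back yields precisely the claimed expression.

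There is no real conceptual obstacle here; this is a bookkeeping argument. The only place one must be careful is in verifying that the $y\neq a$ probabilities delivered by \eqref{central_mec} integrate correctly with the $y=a$ branch, so that the marginalization step above is genuinely consistent with a valid conditional distribution (this follows from the total probability check already implicit in the privacy theorem proved earlier) and in tracking the division by $P_{A|\bold{X}_{\mathcal{S}}}(a|\bold{x}_{\mathcal{S}})$, which is well-defined on the support of $A$ given $\bold{X}_{\mathcal{S}}$ so that the cancellation step is rigorous.
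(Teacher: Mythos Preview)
Your proposal is correct and follows essentially the same approach as the paper's proof: expand $E[|A-Y|]$ by marginalizing over $\bold{X}_{\mathcal{S}}$, substitute the $y\neq a$ branch of \eqref{central_mec}, cancel the denominator against $P_{A|\bold{X}_{\mathcal{S}}}(a|\bold{x}_{\mathcal{S}})$, and then sum over $\bold{x}_{\mathcal{S}}$ using the law of total probability for the first term and constancy for the second. The paper writes the triple sum over $(y,a,\bold{x}_{\mathcal{S}})$ from the outset rather than first isolating $P_{A,Y}(a,y)$, but the computation is identical.
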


Detailed proof is provided in the Appendix.

\textbf{Illustrative Example} Consider a dataset holding $3$ users' genomic data with each data uniformly distributed. Users' data are independent of each other. $\mathcal{L}=\{2\}$, $\mathcal{S}=\{1\}$ and $v=T$. Next, we show how the $\mu_{\text{central}}$ works.

$P_{X_{\mathcal{L}}}(v_{\mathcal{L}})   =P_{X_2}(T)=1/4$. Then $P_A(a)$ satisfies binomial distribution where
\begin{equation}
    P_A(a)=\left(\begin{array}{cc}   
    3\\ 
    a
  \end{array}\right)(1/4)^a(3/4)^{(n-a)}.
\end{equation}
The correlation term can be expressed as:
\begin{equation}\label{eq13}
    \begin{aligned}
    &P_{A|\bold{X}_{\mathcal{S}}}(a|\bold{x}_{\mathcal{S}})\\
    =&P_{A|\{\bold{X}^{(k)}_{1}\}_{k=1}^3}(a|\{\bold{x}^{(k)}_{1}\}_{k=1}^3)\\
    =&\sum_{\{x^{(k)}_{2}\}_{k=1}^3}P_{A|\{X^{k}_2\}_{k=1}^3}(a|\{x^{(k)}_{2}\}_{k=1}^3)\\
    &\cdot P_{\{X^{k}_2\}_{k=1}^3|\{X^{(k)}_1\}_{k=1}^3}(\{x^{k}_2\}_{k=1}^3|\{x^{k}_1\}_{k=1}^3)\\
    =&\sum_{\{x^{(k)}_{2}\}_{k=1}^3}P_{A|\{X^{k}_2\}_{k=1}^3}(a|\{x^{(k)}_{2}\}_{k=1}^3)\prod_{k=1}^3P_{X^{k}_2|X^{k}_1}(x^{k}_2|x^{k}_1),\\
    \end{aligned}
\end{equation}
where 
\begin{equation}
\begin{aligned}
&P_{A|\{X^{(k)}_2\}_{k=1}^3}(a|\{x^{(k)}_{2}\}_{k=1}^3)=\\
   &\begin{cases}1, ~~~~~~~~~~~~~~~~ \text{if} \hspace{0.05in} \sum_{k=1}^3\mathbbm{1}_{\{x^{(k)}_2=T\}}=a,\\
   0,  ~~~~~~~~~~~~~~~~ \text{otherwise},\\
   \end{cases}\\
\end{aligned}
\end{equation}
Then \eqref{eq13} can be writen as:
\begin{equation}
\sum_{\{x^{(k)}_{2}\}_{k=1}^3\in\mathcal{B}}\prod_{k=1}^3P_{X^{(k)}_2|X^{(k)}_1}(x^{(k)}_2|x^{(k)}_1),
\end{equation}
where $\mathcal{B}=\{\{x^{(k)}_2\}_{k=1}^3:~\sum_{k=1}^3\mathbbm{1}_{\{x^{(k)}_2=T\}}=a\}$.
Under the i.i.d model, the EAE of $\mu_{\text{central}}$ is shown in the following corollary.
\begin{cor}
Under the setting where each user's genomic data sequence is i.i.d, and the distribution of each genomic data is uniform, Let $\lambda=\left(\frac{1}{C_x}\right)^{|\mathcal{L}|}$, then, the EAE for the central mechanism can be expressed as: When $\mathcal{E}\le{0.5}$, $\text{EAE}=0$, when $\mathcal{E}>{0.5}$, 
    $\text{EAE}=2\{\sum_{a=0}^NaP_A(a)F_A(a)-N\lambda^2\}$.
\end{cor}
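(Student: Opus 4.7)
The plan is to specialize the expected-error expression
\begin{equation}
\text{EAE} = \sum_{a=0}^N \sum_{y \neq a} |y-a|\, P_A(y)\bigl[P_A(a) - \min_w P_{A|\bold{X}_{\mathcal{S}}}(a|w)\bigr] \nonumber
\end{equation}
established in the preceding Theorem to the i.i.d.-uniform model, and then case-split on $\mathcal{E}$. The prior $P_A$ is immediate: each user's query indicator $A^{(k)} = \mathbbm{1}\{\bold{X}^{(k)}_{\mathcal{L}}=v_{\mathcal{L}}\}$ is $\mathrm{Bernoulli}(\lambda)$ with $\lambda=(1/|\mathcal{X}|)^{|\mathcal{L}|}$, so the independent sum $A=\sum_k A^{(k)}$ is $\mathrm{Binomial}(N,\lambda)$. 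Because entries are uniform, $\mathcal{E}=1-(1/|\mathcal{X}|)^{|\mathcal{L}\cap\mathcal{S}|}$, and with $|\mathcal{X}|\ge 2$ the branch $\mathcal{E}\le 1/2$ is equivalent to $\mathcal{L}\cap\mathcal{S}=\emptyset$.

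For the easy case $\mathcal{E}\le 1/2$, disjointness together with the independence of coordinates across locations makes $A$ independent of $\bold{X}_{\mathcal{S}}$; hence $\min_w P_{A|\bold{X}_{\mathcal{S}}}(a|w) = P_A(a)$, the bracket vanishes pointwise, and $\text{EAE}=0$.

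For $\mathcal{E}>1/2$, I would first exploit the per-user factorization
\[
A^{(k)} = \mathbbm{1}\{\bold{X}^{(k)}_{\mathcal{L}\cap\mathcal{S}}=v_{\mathcal{L}\cap\mathcal{S}}\}\cdot\mathbbm{1}\{\bold{X}^{(k)}_{\bar{\mathcal{L}}}=v_{\bar{\mathcal{L}}}\},
\]
in which only the first indicator is a function of $\bold{X}^{(k)}_{\mathcal{S}}$ and the second is independent of $\bold{X}^{(k)}_{\mathcal{S}}$. Writing $m(w)$ for the number of users whose sensitive entries match $v_{\mathcal{L}\cap\mathcal{S}}$, the conditional law of $A$ given $\bold{X}_{\mathcal{S}}=w$ is $\mathrm{Binomial}(m(w),p)$ with $p=(1/|\mathcal{X}|)^{|\bar{\mathcal{L}}|}$, and the minimization over $w$ reduces to choosing the extreme value of $m(w)$. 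Substituting the resulting closed form for $\min_w P_{A|\bold{X}_{\mathcal{S}}}(a|w)$ back into the EAE expression, I would separate the Gini-type double sum $\sum_{a,y}|y-a|P_A(y)P_A(a)$ from the $\min_w$ correction. The double sum is $E[|A-A'|]$ for two i.i.d.\ $\mathrm{Binomial}(N,\lambda)$ variables, which I would reshape by the binomial shift identity $yP_A(y)=N\lambda\,\widetilde{P}(y-1)$ (with $\widetilde P$ the $\mathrm{Bin}(N-1,\lambda)$ pmf) and by exchanging the roles of $y$ and $a$ to expose the cdf $F_A$. Collecting the pieces should yield the claimed $2\{\sum_a a P_A(a) F_A(a)-N\lambda^2\}$.

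The main obstacle is the algebraic bookkeeping in Case 2, since $|y-a|$ couples the two summation indices. I would handle this by the telescoping representation $|y-a|=\sum_{j\ge 1}\mathbbm{1}\{j\in(\min(y,a),\max(y,a)]\}$ (equivalently, summation by parts against the pmf), which decouples the sums in $y$ and $a$ and turns each into a cdf-weighted expression. The delicate step will then be showing that the $\min_w$ correction, after being weighted by $\sum_y|y-a|P_A(y)$ and summed over $a$, cancels exactly against the $E[A]$-type contribution produced by the telescoping, leaving only the clean $-N\lambda^2$ offset and the $\sum_a a P_A(a)F_A(a)$ term.
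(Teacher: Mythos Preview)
The paper states this corollary without proof, so there is nothing to compare your argument against line by line. Your overall plan—specialize the Theorem~5 expression
\[
\text{EAE}=\sum_{a}\sum_{y\neq a}|y-a|\,P_A(y)\bigl[P_A(a)-\min_w P_{A|\bold X_{\mathcal S}}(a\mid w)\bigr]
\]
to the i.i.d.\ uniform model and split on $\mathcal{E}$—is exactly the intended route, and your treatment of the case $\mathcal{E}\le 1/2$ is correct: independence of coordinates makes $A$ independent of $\bold X_{\mathcal S}$, the bracket vanishes, and the EAE is zero.

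The gap is in Case~2. Your structural reduction is right: conditioning on $\bold X_{\mathcal S}=w$ gives $A\sim\mathrm{Bin}(m(w),p)$ with $p=(1/|\mathcal X|)^{|\bar{\mathcal L}|}$, and minimizing over $w$ is minimizing over $m\in\{0,\dots,N\}$. But this immediately gives $\min_w P_{A|\bold X_{\mathcal S}}(a\mid w)=0$ for every $a\ge 1$ (take $m=0$), and $(1-p)^N$ for $a=0$; in particular, when $\mathcal L\subseteq\mathcal S$ (so $p=1$) the bracket is simply $P_A(a)$ and the EAE reduces to $E|A-A'|$ for independent $A,A'\sim\mathrm{Bin}(N,\lambda)$. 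The ``delicate cancellation'' you hope for—that the $\min_w$ correction will combine with the telescoped Gini sum to leave exactly $2\{\sum_a aP_A(a)F_A(a)-N\lambda^2\}$—does not in fact occur. A concrete check with $N=2$, $\lambda=1/4$, $\mathcal L\subseteq\mathcal S$ gives $E|A-A'|=39/64$, while the stated closed form evaluates to $45/64$. So the step you flagged as ``delicate'' is not merely bookkeeping: the target identity is not a consequence of the ingredients you have assembled, and the corollary as printed appears to be misstated (the paper offers no derivation to resolve this). Your proof proposal is sound up to the point where you compute $\min_w$; from there you should report the EAE you actually obtain rather than forcing it into the claimed form.
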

\section{Numerical Evaluation}

We next numerically evaluate the proposed mechanisms for two cases: for the first case, we consider a first-order Markov property in each of the genomic sequences; for the second case, we generate synthetic data with a hidden Markov model (described later in sub-section V-E).

We next summarize the Markov setting: It is assumed that each user's genomic data has a first-order Markov property\cite{ye2020mechanisms}, i.e.,  $P_{X_{j+1}|X_{j},\bold{X}^{j-1}_{1}}(x_{j+1}|x_j,\bold{x}^{j-1}_{1})=P_{X_{j+1}|X_{j}}(x_{j+1}|x_{j})$. Denote $\mathcal{T}$ as the transition matrix from time $j$ to $j-1$, for all $j\in[1,|X|]$, and can be specified as:
\begin{equation}
    \begin{aligned}
        &\text{Pr}(X_{j+1}=x|X_{j}=x)=\phi\\
        &\text{Pr}(X_{j+1}=x'|X_{j}=x)=(1-\phi)/3\\
    \end{aligned}
\end{equation}
where $x,x'\in\{A,T,G,C\}$ and $x\neq x'$.
For the following evaluations, we randomly generate the user's data according to the prior and the transition matrix, and we run Monte-Carlo simulations to get an average error. We consider $K=1000$ users in the system. Each user possesses a genomic sequence with a length of $10$.
\begin{figure*}[htp]
\centering  
\subfigure[$\mathcal{S}\cap\mathcal{L}=\emptyset$, $P_{X_1}$ is uniform]{ \includegraphics[width=0.23\textwidth]{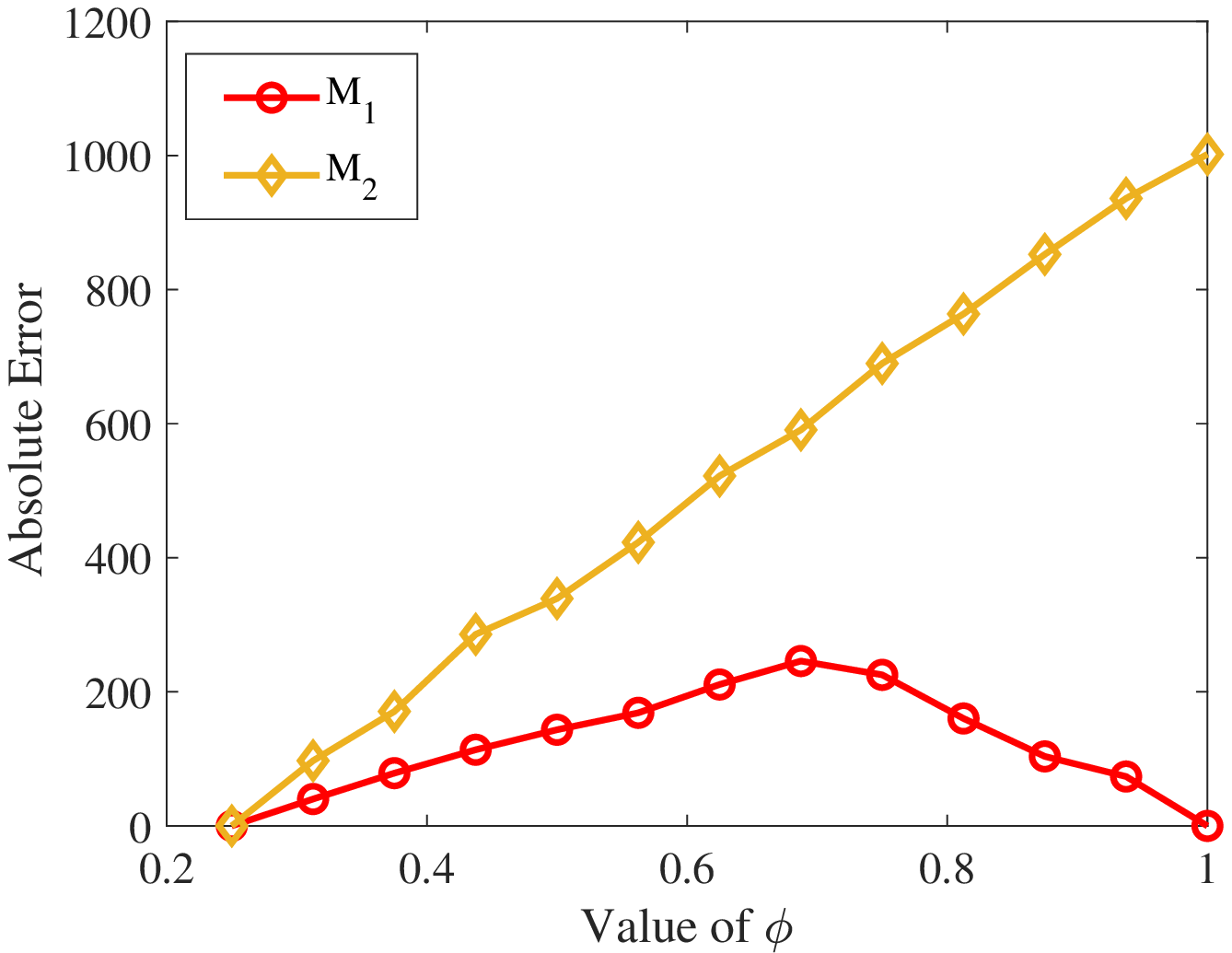} 
\label{case 1} }
\subfigure[When $\mathcal{S}\cap\mathcal{L}\neq\emptyset$, each $X_1$ is uniformly generated]
{ \includegraphics[width=0.23\textwidth]{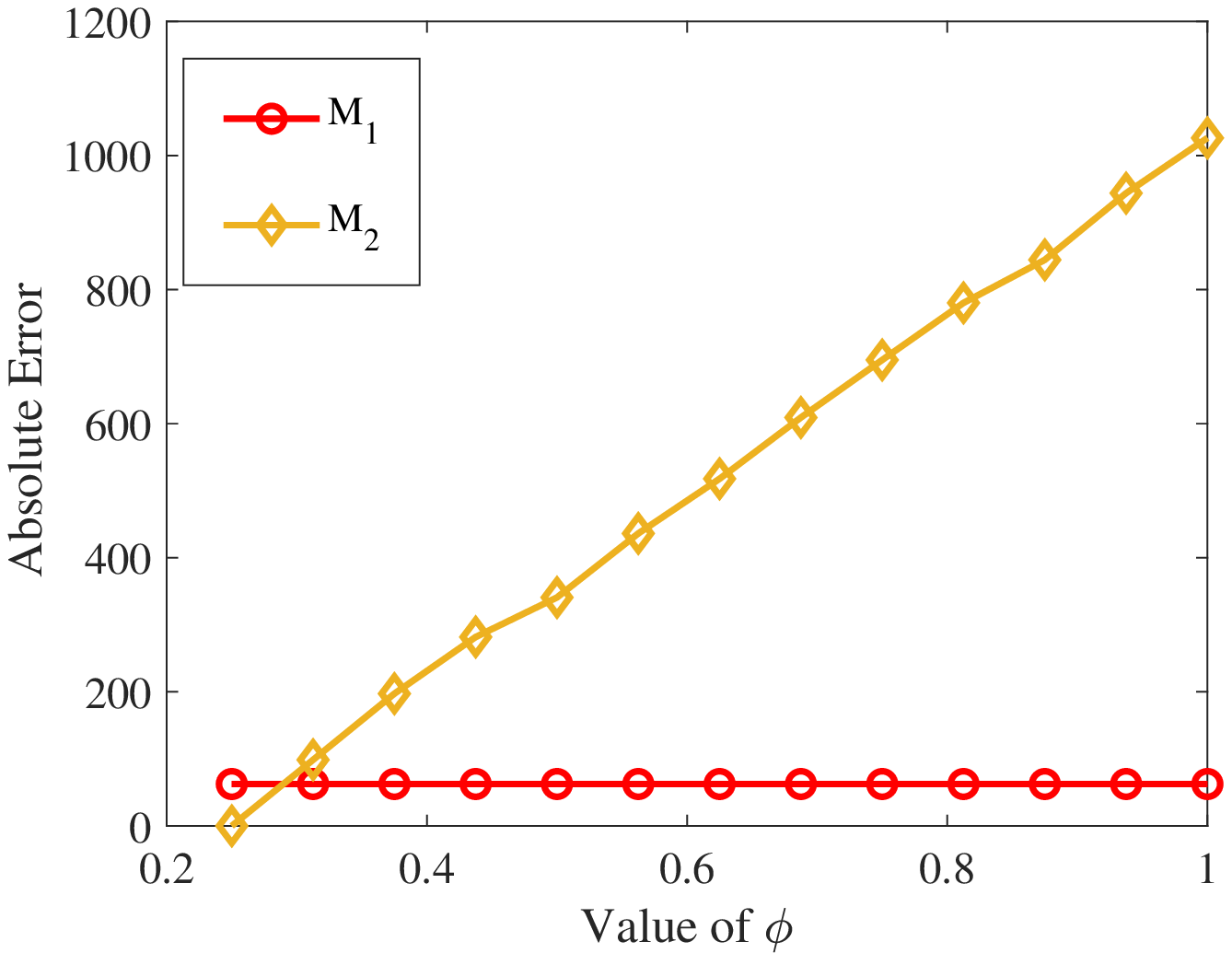}
\label{case 2} } 
\label{numerical1} 
\subfigure[$\mathcal{S}\cap\mathcal{L}=\emptyset$, $P_{X_1}(A)=0.8$] 
{ \includegraphics[width=0.23\textwidth]{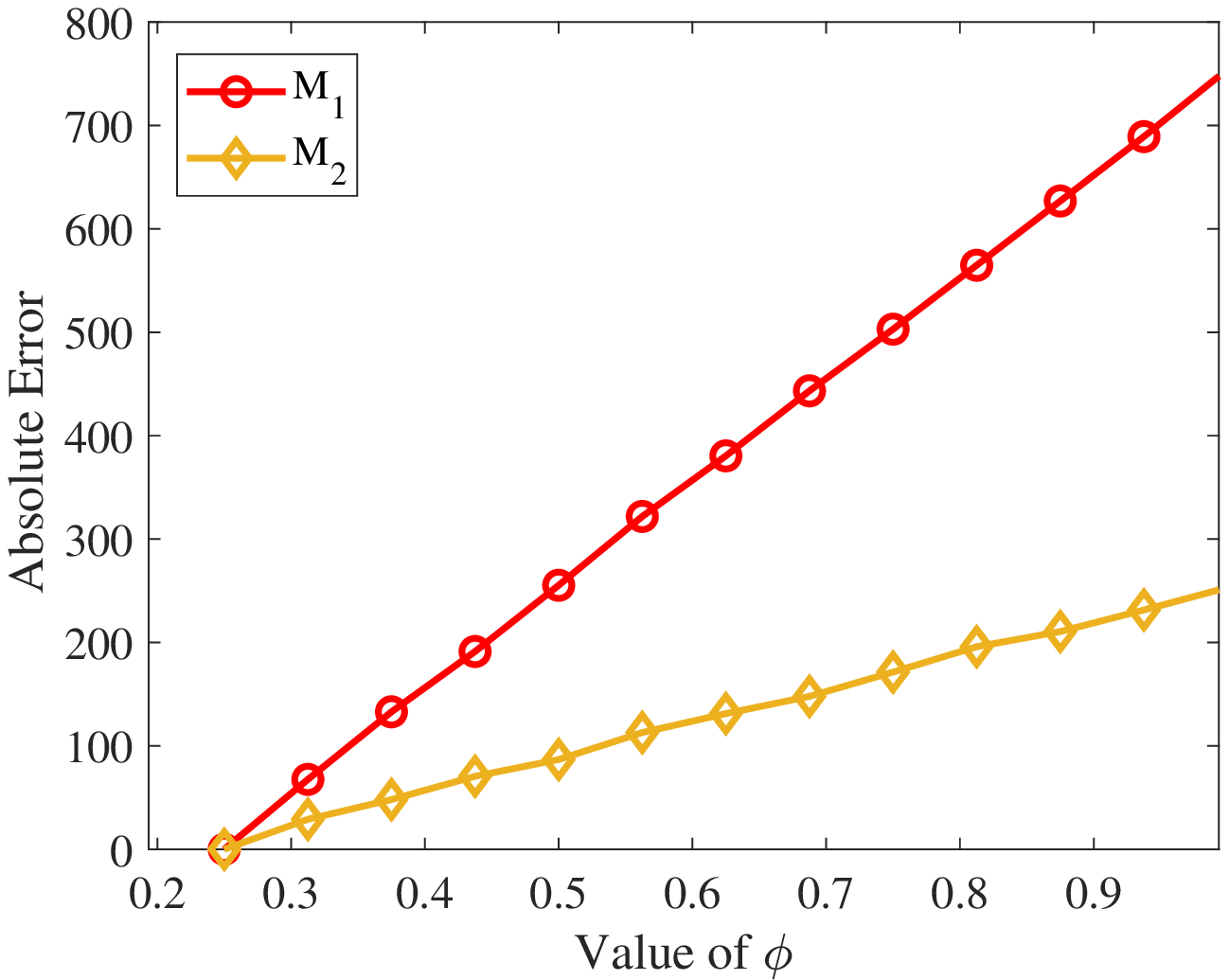} 
\label{case 3} }
\subfigure[$\mathcal{S}\cap\mathcal{L}\neq\emptyset$, $P_{X_1}(A)=0.8$]
{ \includegraphics[width=0.23\textwidth]{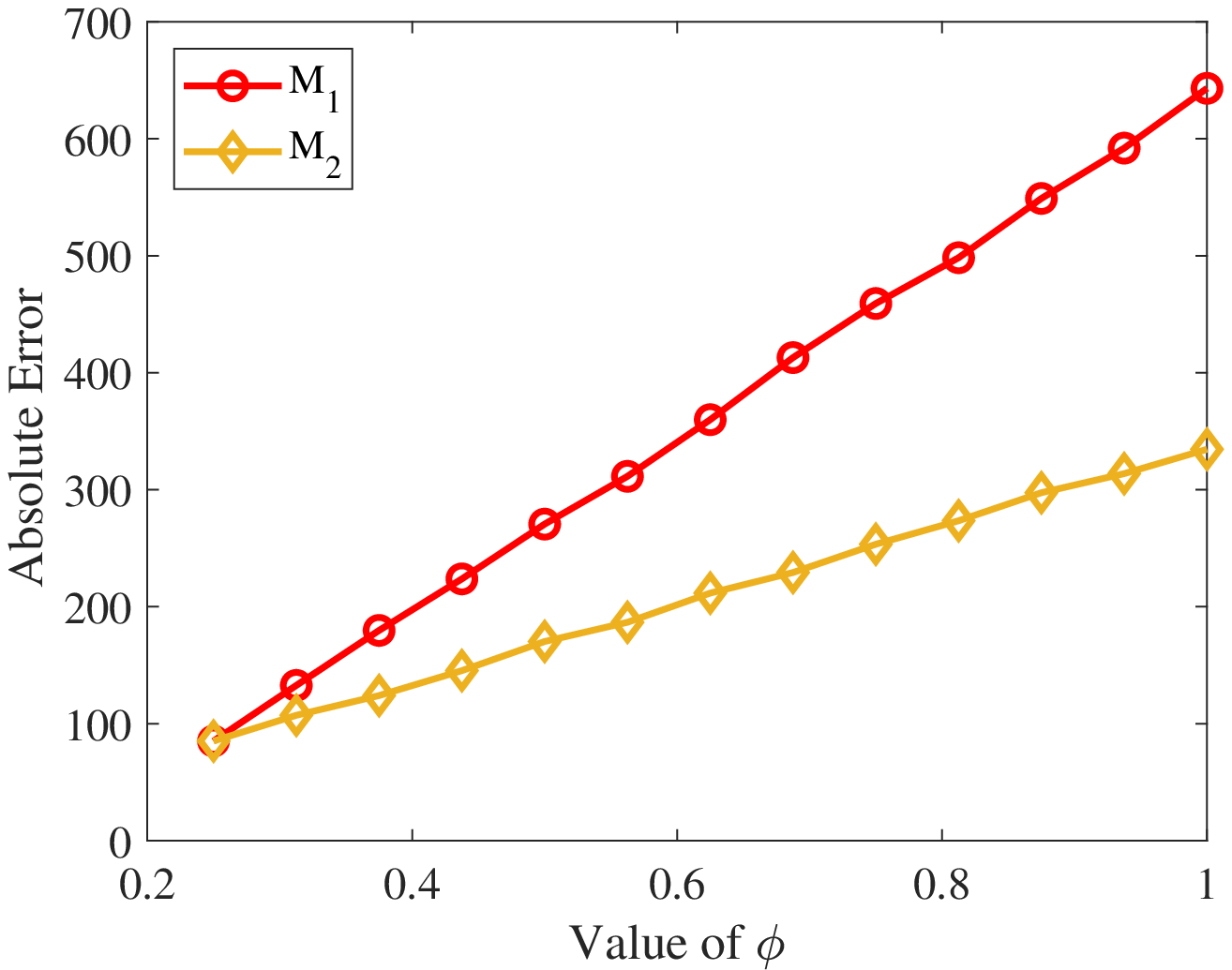}
\label{case 4} } 
\caption{Numerical results of the Markov model for different cases regarding the intersected length of $\mathcal{L}\cap\mathcal{S}$ and the distribution of $X_1$ for the local case.} 
\label{numerical2} 
\vspace{-10pt}
\end{figure*}

\subsection{Comparison of $LB(P_e)$ with $P_e$ from the Local Mechanism}

We next compare the $P_e$ resulting from the local mechanism with the lower bound in Theorem \ref{thm:lowerbound}. We consider three different cases: (i) when $\mathcal{S}\cap\mathcal{L}=\emptyset$, (ii) when $\mathcal{S}\cap\mathcal{L}=4$ (partially overlapped) and (iii) when $\mathcal{S}\cap\mathcal{L}=\mathcal{S}$ (fully overlapped). Under each case, we consider three types of correlation: low correlation $\phi=0.27$ (slightly skewed than uniform), moderate correlation $\phi=0.5$, and high correlation $\phi=0.95$ (slightly weaker than fully dependent). The comparisons are shown in Fig. \ref{lower}, where we plotted the lower bound of $P_e$ from Theorem \ref{thm:lowerbound}, $P_e$ calculated from Theorem \ref{thm:Pe} as well as the empirical $P_e$ averaged from $N$ samples, and we vary $N$ from $10$ to $1000$ to illustrate the convergence. Observe that the gap between $P_e$ from Theorem \ref{thm:Pe} and the lower bound of $P_e$ is different under different scenarios. Generally, extremely high or low correlation leads to smaller gaps. It is worth noting that the lower bound derived in this paper is based on Fano's inequality. There could be some $P_e$ that are not achievable, thus the gap between the $P_e$ from Theorem \ref{thm:Pe} and the optimal one (assume there is a mechanism that incurs a minimum $P_e$ and satisfies perfect privacy) can be even smaller.

\subsection{Comparison between $M_1$ and $M_2$.}
 Next, we examine the impact from (a) different prior distributions; (b) whether $\mathcal{S}\cap\mathcal{L}$ is an empty set or not. We fix $\mathcal{S}=\{3,4\}$ and evaluate with parameters according to the following cases: for (a), we assume $X_1$ is uniformly distributed or $P_{X_1}=0.8$, and then generate the whole sequence based on the value of $X_1$ and $\mathcal{T}$.  (b), we consider either ${\mathcal{L}}=\{4,5\}$ or ${\mathcal{L}}=\{5,6\}$. We compare the absolute error resulted from each mechanism under each case.

The results are shown in Fig. \ref{case 1}. Observe from Fig. \ref{case 1}, all mechanisms achieve zero-EAE when $\phi=0.25$. When $\phi\neq{0.25}$, $\mathcal{M}_1$ outperforms $\mathcal{M}_2$, the reason is that, to guarantee perfect privacy, $\mathcal{M}_1$ tends to release most of the local answers as $0$ while $\mathcal{M}_2$ releases most answers as $1$. Since $P_{X^{(k)}_{\mathcal{L}}}(v_{\mathcal{L}})$ happens with probability less than $1/2$, releasing more $0$s is more accurate than releasing more $1$s. Also, observe that the EAE of $\mathcal{M}_1$ first increases then decreases to $0$ with $\phi$. Intuitively, the value of $R(\bold{X},\mathcal{Q})$ decreases with $\phi$, and larger $R(\bold{X},\mathcal{Q})$ implies larger probability to directly release the answer. That is why EAE is smaller when $\phi$ is small. For large $\phi$, the dependence of data increases, and each user's genomic sequence can hardly match $v_{\mathcal{L}}$. As a result, the real local answer for each user is $0$. Thus, $\mathcal{M}_1$ which releases more $0$s achieves $0$-EAE when $\phi=1$. For case 2, from Fig. \ref{case 2}, the EAE for $\mathcal{M}_1$ is a constant, this is because when data is uniformly distributed, $\mathcal{E}=\frac{1}{4}$, which is smaller than $0.5$. As a result, $\mathcal{M}_1$ releases $0$ all the time, hence incurring a constant error for different $\phi$. 
It is worth noting that each case in Fig. \ref{case 2} provides lower $P_e$ than those in Fig. \ref{case 1}. The reason is that the prior of case 2 is more skewed and requires less perturbation.

Then, we examine how data prior affects the performance of $\mathcal{M}_1$ and $\mathcal{M}_2$. We let $P_{X_1}(A)=0.8$, and let $v_{\mathcal{L}}=\{A,A\}$, which means, each local answer $A^{(k)}$ is more likely to be $1$ than $0$, and as $\phi$ increases the probability of $P_{A^{(k)}}(1)$ increases. The comparison of the two mechanisms' performance is shown in Fig. \ref{case 3} and Fig. \ref{case 4}. We can observe that under case 1 and case 2, both mechanisms' EAEs increase as $\phi$ increases since larger $\phi$ implies smaller $R(\bold{X},\mathcal{Q})$. On the other hand, $\mathcal{M}_2$ performs better than $\mathcal{M}_1$, because each true local answer is more likely to be $1$, and $\mathcal{M}_2$ tends to release $1$ while $\mathcal{M}_1$ tends to release $0$ under high data correlations. It is worth noting that when $\phi=0.25$, the data is i.i.d. For case 1, each mechanism releases the true answer and achieves $0$-EAE; for case 2, $\mathcal{M}_1$ and $\mathcal{M}_2$ release $1$ when $X_{\bar{\mathcal{L}}}=v_{\bar{\mathcal{L}}}$, since $\mathcal{E}\le{0.5}$. As a result, an error is incurred when $X_{\mathcal{L}\cap\mathcal{S}}\neq{v_{\mathcal{L}\cap\mathcal{S}}}$ (each local $A^{(k)}=1$ when $X_{\bar{\mathcal{L}}}={v_{\bar{\mathcal{L}}}}$ and $X_{\mathcal{L}\cap\mathcal{S}}={v_{\mathcal{L}\cap\mathcal{S}}}$).

\subsection{Comparison between Centralized and Local Mechanism}
We next compare the EAEs incurred by the local and centralized mechanisms, consider $K$ independent users who hold genomic data, denote $\bold{X}^{(k)}$ as the $i$-th user's genomic data sequence with a length of $N$, i.e., $\bold{X}^{(k)}=\{X_1^{(k)},X_2^{(k)},...,X_N^{(k)}\}$. It is assumed that the prior of each $X_1^{(k)}$ is uniformly distributed, and the correlation of between the data from one to another follows Markov property, which can be summarized as follows: $P_{X_{k+1}^{(k)}|X_{k}^{(k)}}(\delta|\gamma)=b$, and $P_{X_{k+1}^{(k)}|X_{k}^{(k)}}(\gamma|\gamma)=a$, where $k$ denotes the index of different genomic data, $a\ge{b}$ and $\delta$, $\gamma$ are possible realizations from the support of the genomic data: $\{A,T,G,C\}$. Therefore: $a+3b=1$. Under the above setting, we first assume $\mathcal{S}=1,2$ and $\mathcal{L}=3,4$, $v=\{A,T\}$. That is, $A^{(k)}=1$ iff $X^{(k)}_{3,4}=\{A,T\}$.  We next show how different mechanisms perform under this setting.

\begin{figure}[t]
\centering 
\subfigure[] 
{ \includegraphics[width=0.22\textwidth]{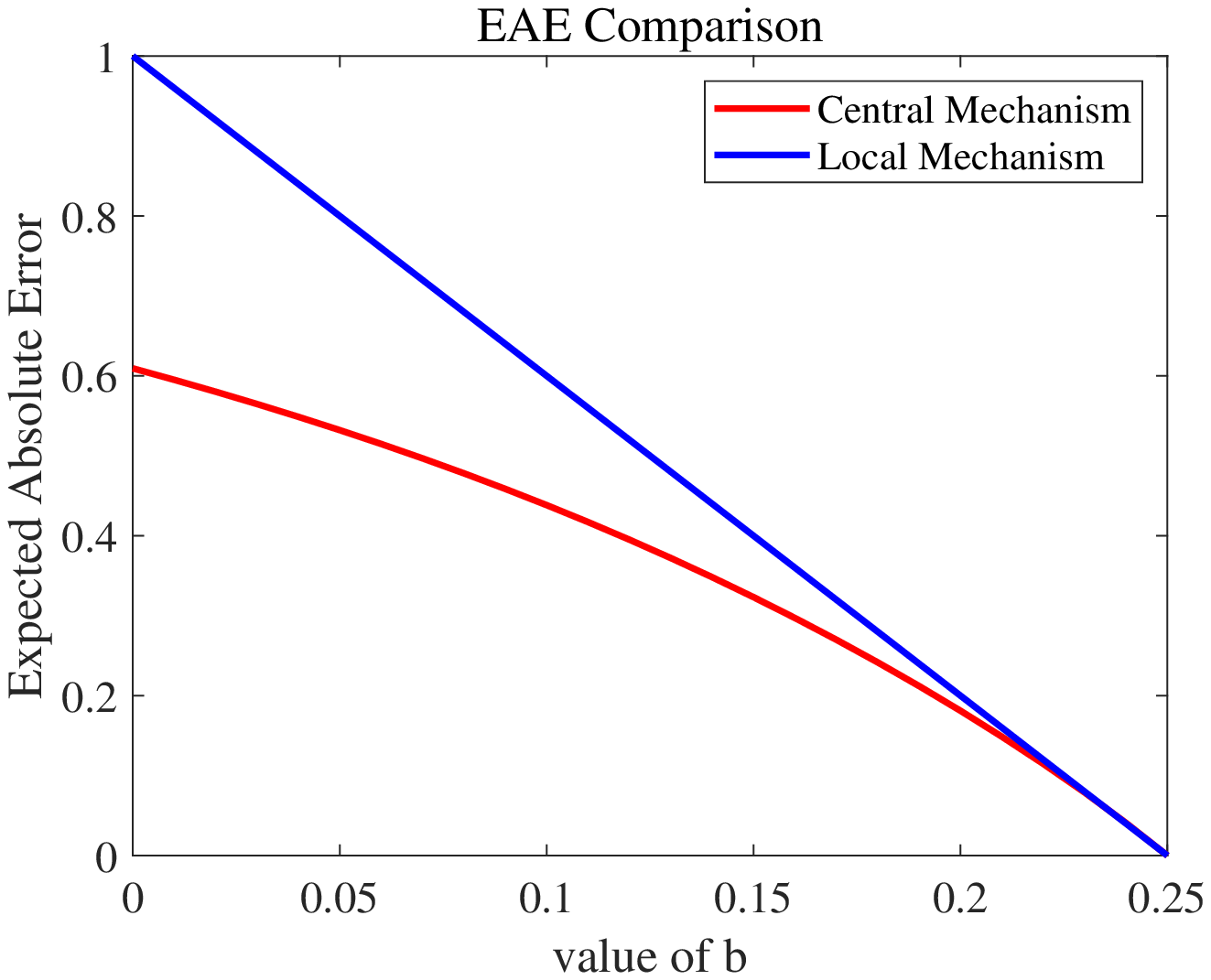}
\label{fig:my_label} }
\subfigure[] 
{ \includegraphics[width=0.22\textwidth]{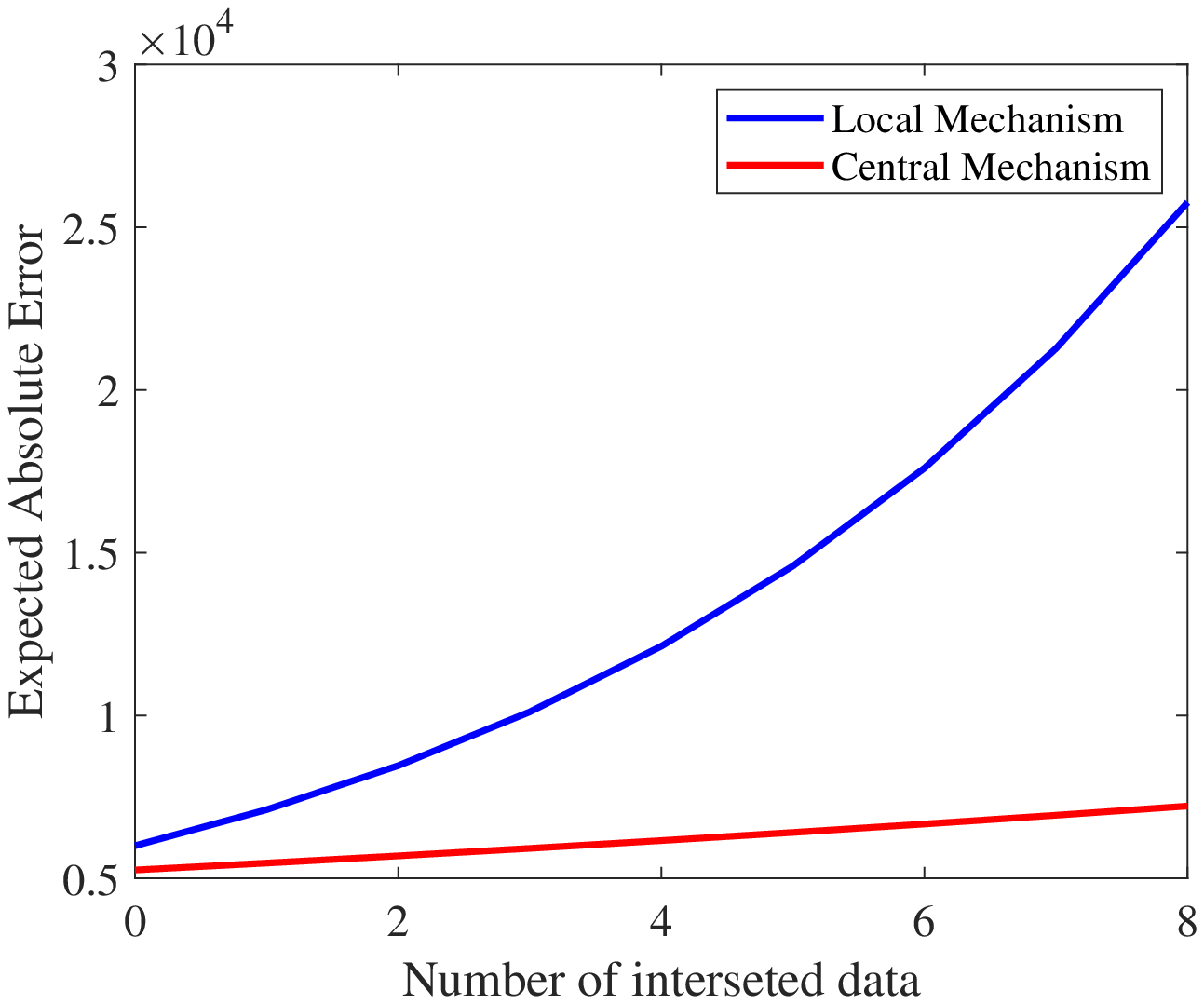}
\label{com2} }
\caption{ EAE comparison between the central and local mechanisms (a) with different correlation parameters (b) with different length of intersected data}
\label{inst} 
\vspace{-10pt}
\end{figure}

We vary the value of $b$ from $0$ (strong dependence) to $0.25$ (independent), and observe the EAEs resulted from different mechanisms. Observe that when $b=0.25$, which means data within the genomic sequence are independent, both mechanisms achieve zero expected absolute error; when $b=0$, referring to the strongest dependence scenario, the central mechanism always results in smaller EAE than the local mechanism.

We then fix the correlation to be $b=0.5$, and then vary the number of intersected data length from $0$ to $8$, i.e., $\mathcal{L}=\{3,4,5,6,7,8,9,10\}$, and increase the length of $\mathcal{S}$, from $\{1,2\}$ to $\{1,2,3,4,5,6,7,8,9,10\}$. We then compare the resulted expected absolute error, and the result is shown in Fig. \ref{inst}. Observe that, with the increase of the number of intersected data lengths, both mechanisms result in a larger expected error. However, the error of the local model increases much faster than the central model.

\subsection{Comparison with Differential Privacy-based Mechanisms}

In this part, we compare the performance of the proposed mechanism with (Local) Differential Privacy\cite{Extreme_ldp}. Since our mechanisms guarantee perfect privacy, equivalently, under the privacy notion of DP, $\epsilon=0$. Then, in the following, we show that under different $P_e$ or EAE, the minimum value of $\epsilon$ (a smaller $\epsilon$ indicates a stronger privacy guarantee) provided by different mechanisms. Under the local setting, we use a binary randomized response perturbation mechanism that satisfies LDP. In \cite{Extreme_ldp}, an optimal mechanism is derived under LDP constraints:  $\text{Pr}(Y=A)=\frac{e^{\epsilon}}{1+e^{\epsilon}}$, and $\text{Pr}(Y\neq A)=\frac{1}{e^{\epsilon}+1}$, wherein the binary case, the latter stands for the $P_e$. On the other hand, in \cite{Tianhao}, the minimum MSE resulted by the LDP-based mechanism for aggregated count query is given by:
$N\frac{e^{\epsilon}(e^{\epsilon}+1)}{(e^{\epsilon}-1)^2}$.
We next present the comparison under two cases: the comparison of the $P_e$ and the comparison of the MSE. The results are shown in Fig. \ref{com_DP}. Observe that under a larger $P_e$ or MSE, Differential Privacy can provide a better privacy guarantee with a small $\epsilon$. However, our proposed mechanisms always provide perfect privacy.

\begin{figure}[t]
\centering 
\subfigure[$\epsilon$ $v.s.$ $P_e$ (per-user case)] 
{ \includegraphics[width=0.22\textwidth]{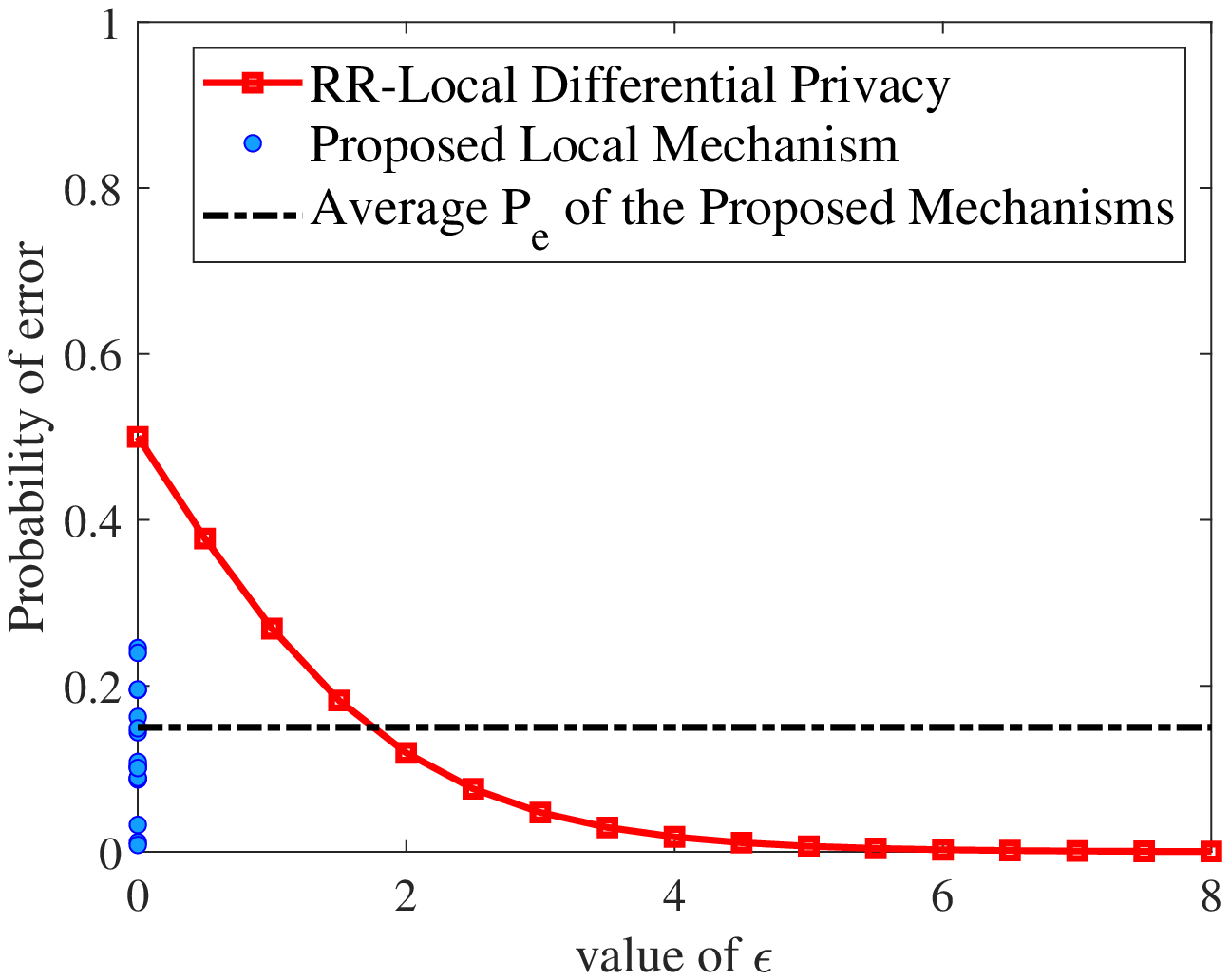} 
\label{LDP} }
\subfigure[$\epsilon$ $v.s.$ MSE (aggregated case)]
{ \includegraphics[width=0.22\textwidth]{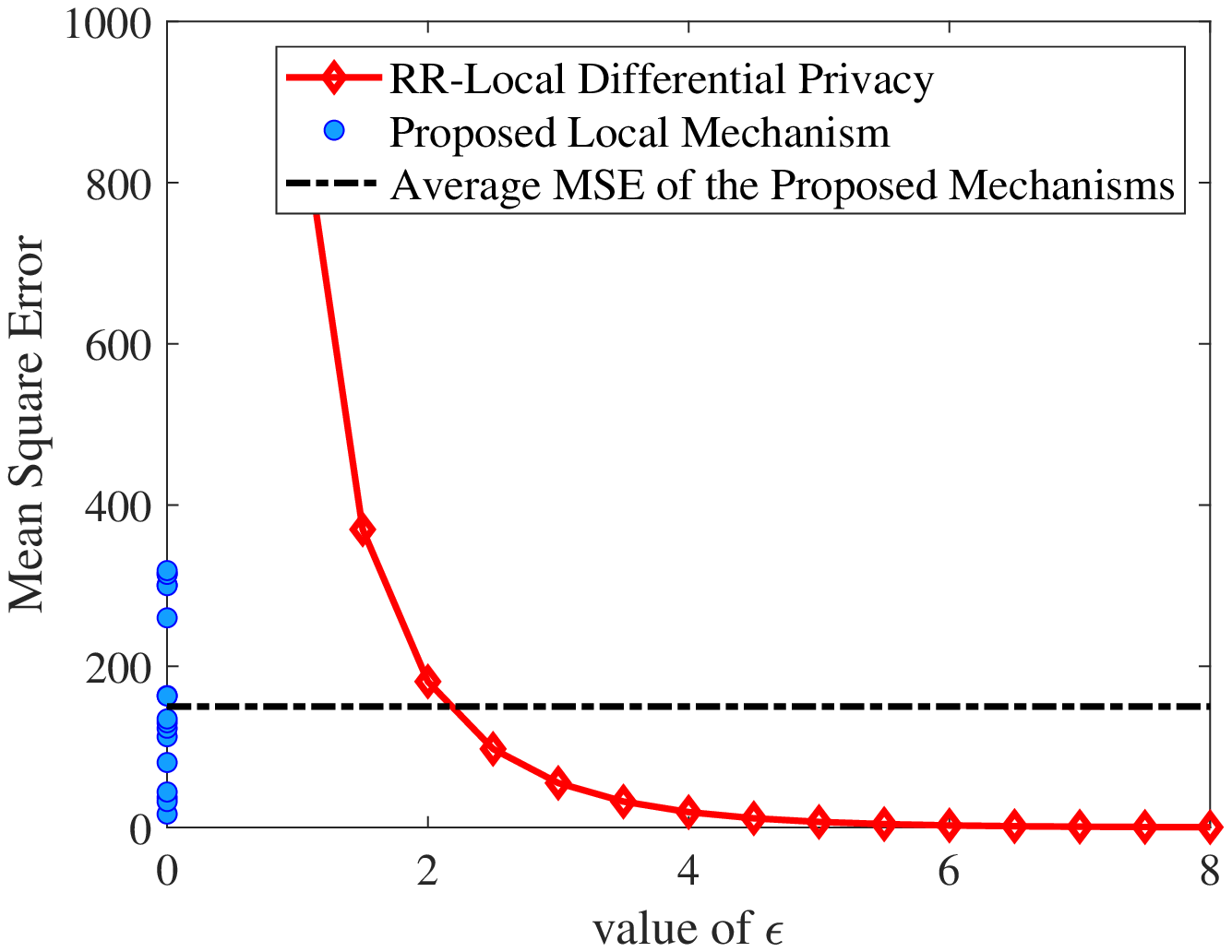}
\label{DP} } 
\caption{Comparison between the privacy guarantee by $\mathcal{M}_1$ or $\mathcal{M}_2$ and DP.} 
\label{com_DP} 
\vspace{-10pt}
\end{figure}

\begin{figure}[t]
    \centering
    {\includegraphics[width=1\columnwidth]{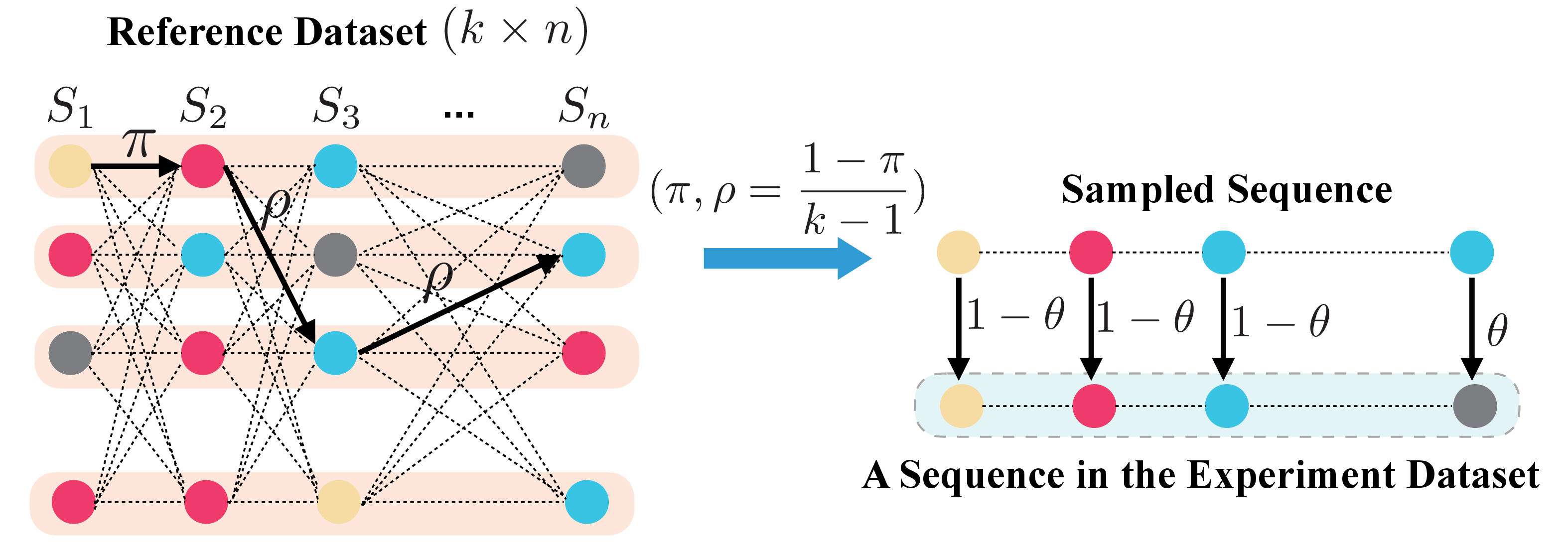}
    \caption{Generating experiment dataset from reference dataset with Hidden Markov Model (hmm)}
    \vspace{-10pt}
    \label{fig:hmm}}
\end{figure}

\begin{figure*}[htp]
\centering  
\subfigure[AAE comparison when $\theta = 0.01$]{ \includegraphics[width=0.23\textwidth]{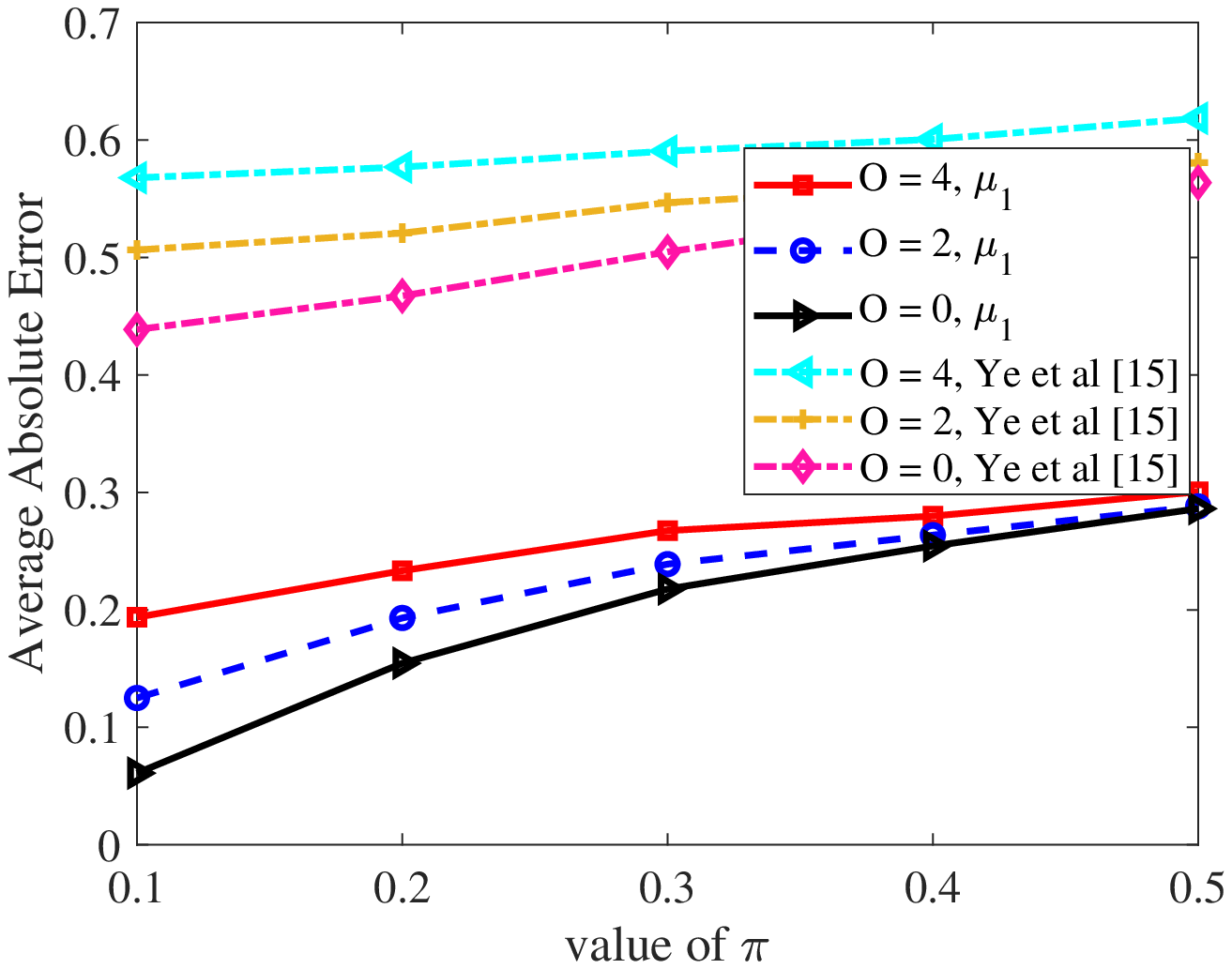} 
\label{hmm 1} }
\subfigure[AAE comparison when $\theta = 0.05$]
{ \includegraphics[width=0.23\textwidth]{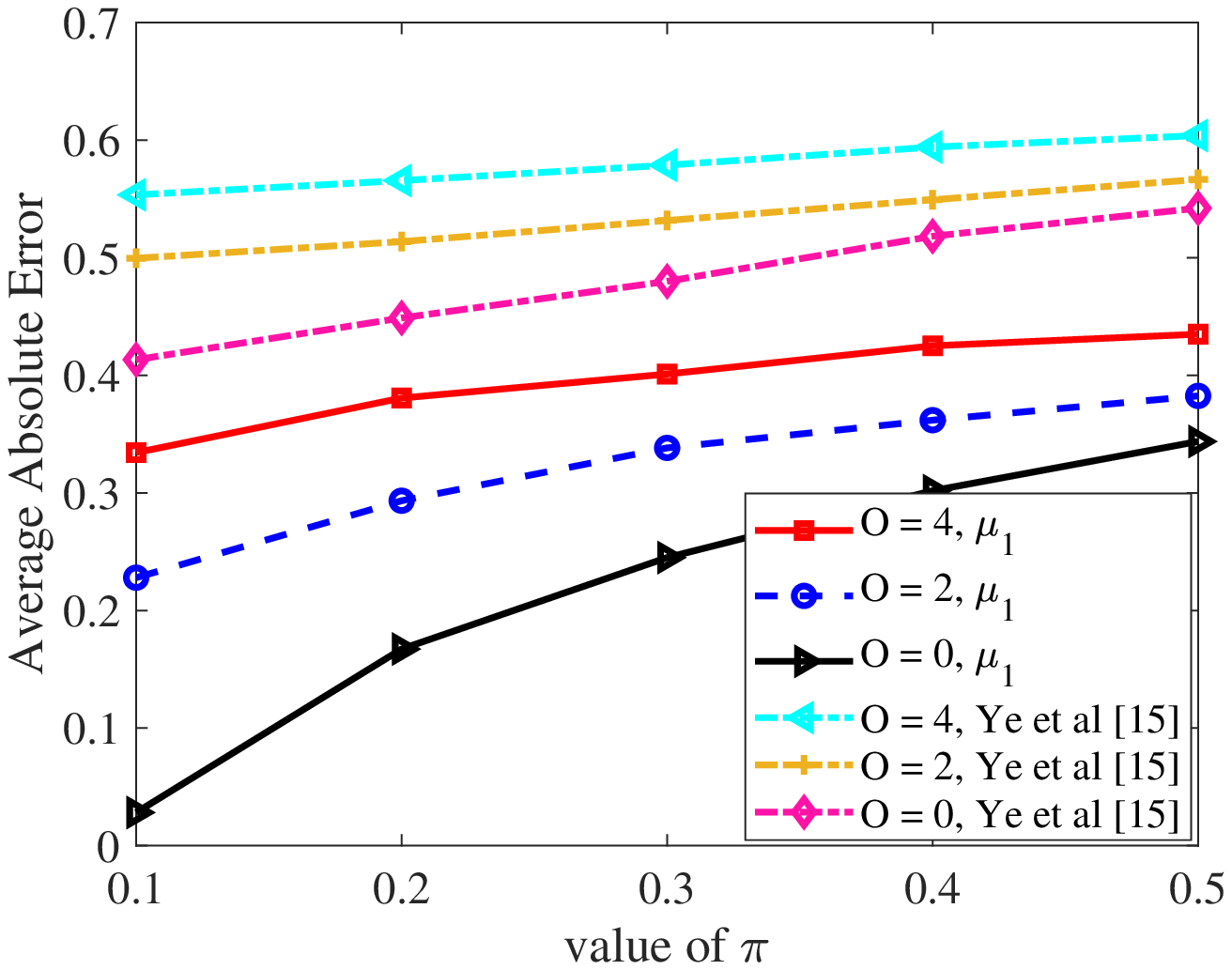}
\label{hmm 2} } 
\subfigure[EAE comparison when $\theta = 0.01$] 
{ \includegraphics[width=0.23\textwidth]{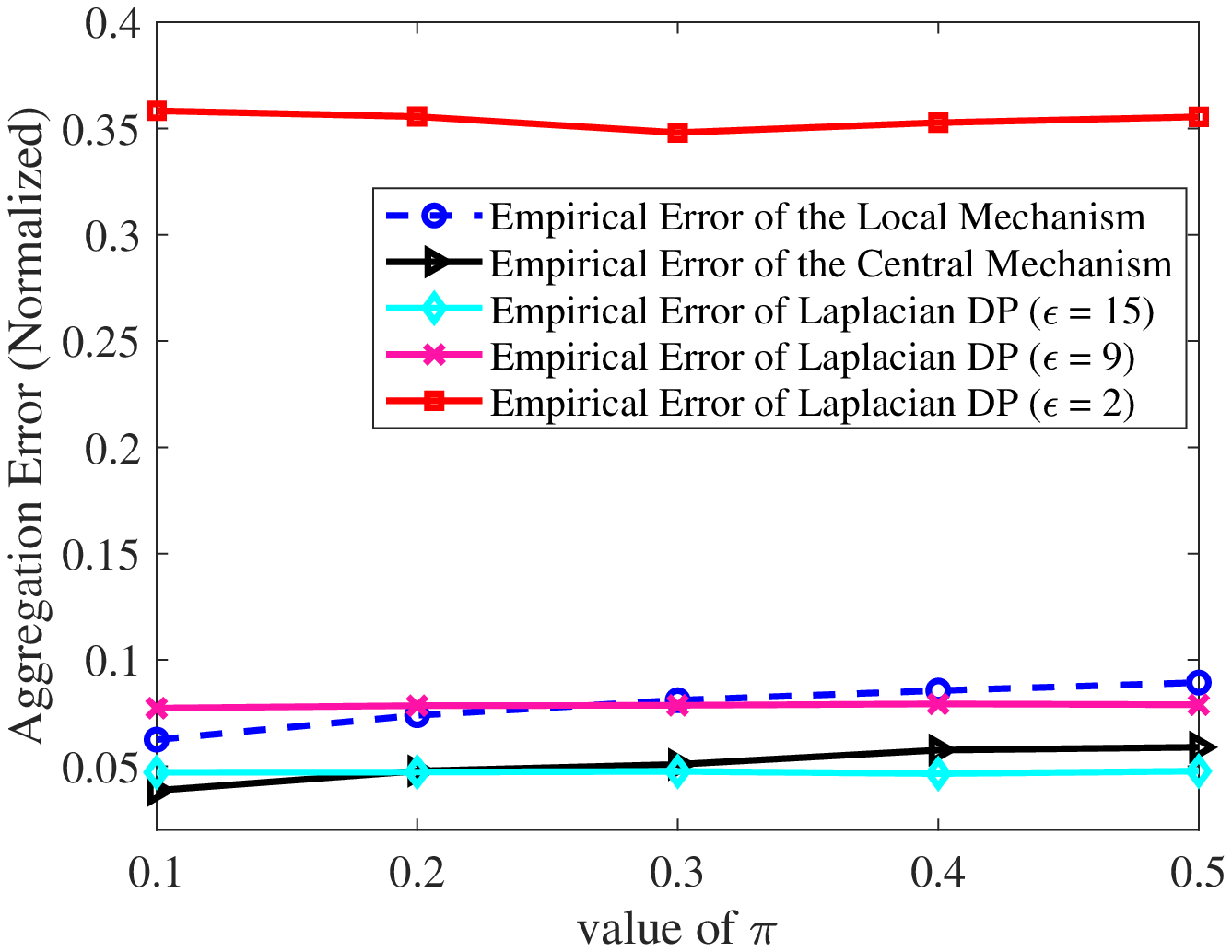} 
\label{hmm 3} }
\subfigure[EAE comparison when $\theta = 0.05$]
{ \includegraphics[width=0.23\textwidth]{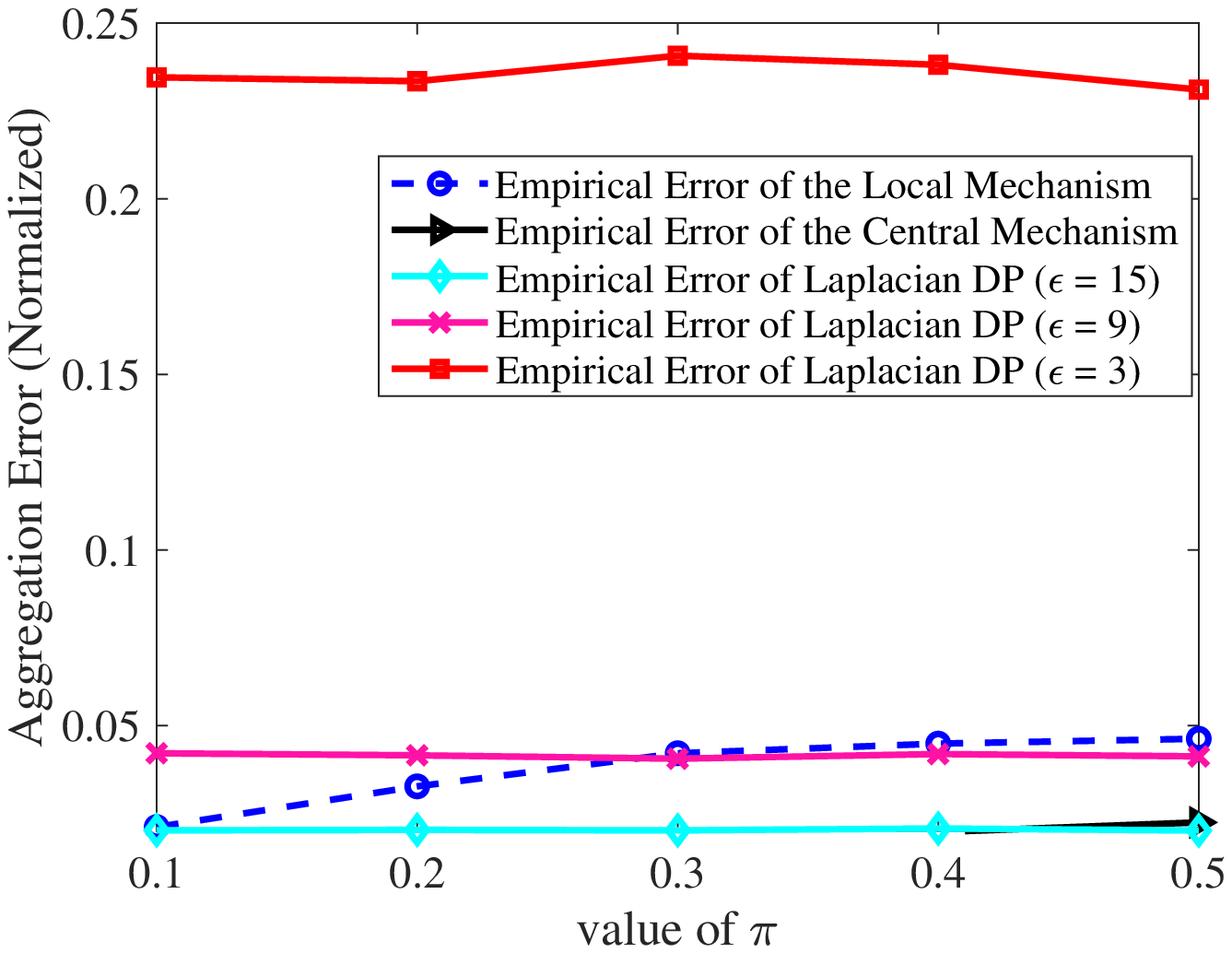}
\label{hmm 4} } 
\caption{Numerical results of the Hidden Markov Model generated from a synthetic reference dataset, (a), (b) for individual average absolute error (AAE) under different cases of the overlap $\mathcal{L}\cap\mathcal{S}$ (denoted as $O$), We sample from the prior for the genotypes at the hidden locations for Ye's mechanism in \cite{ye2020mechanisms}; (c), (d) for aggregate error, which compares empirical error in aggregation when deploying local and central mechanisms.} 
\label{numerical2} 
\vspace{-10pt}
\end{figure*}

\begin{figure*}[t]
\centering  
\subfigure[AAE comparison when $\theta = 0.01$]{ \includegraphics[width=0.23\textwidth]{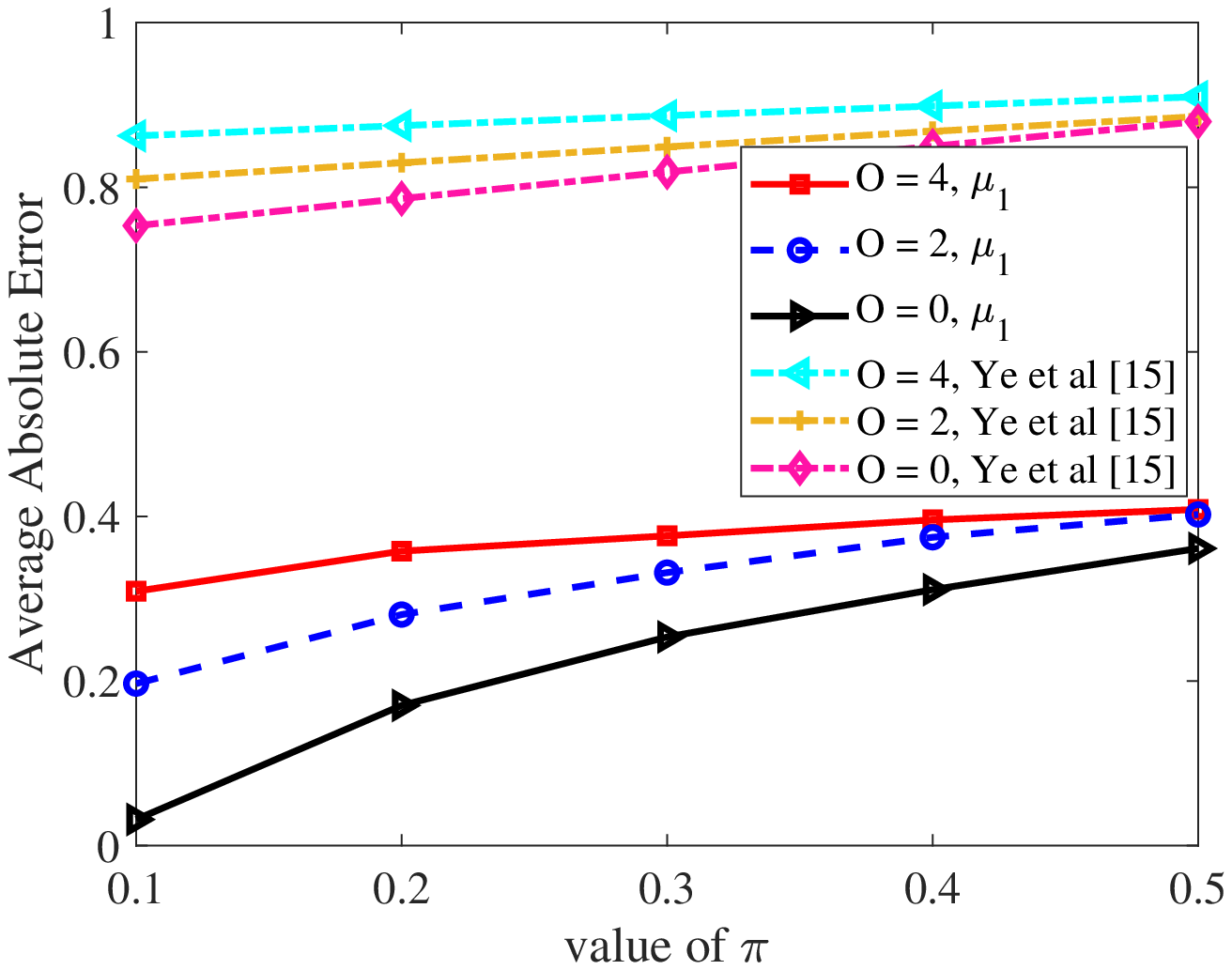} 
\label{real 1} }
\subfigure[AAE comparison when $\theta = 0.05$]
{ \includegraphics[width=0.23\textwidth]{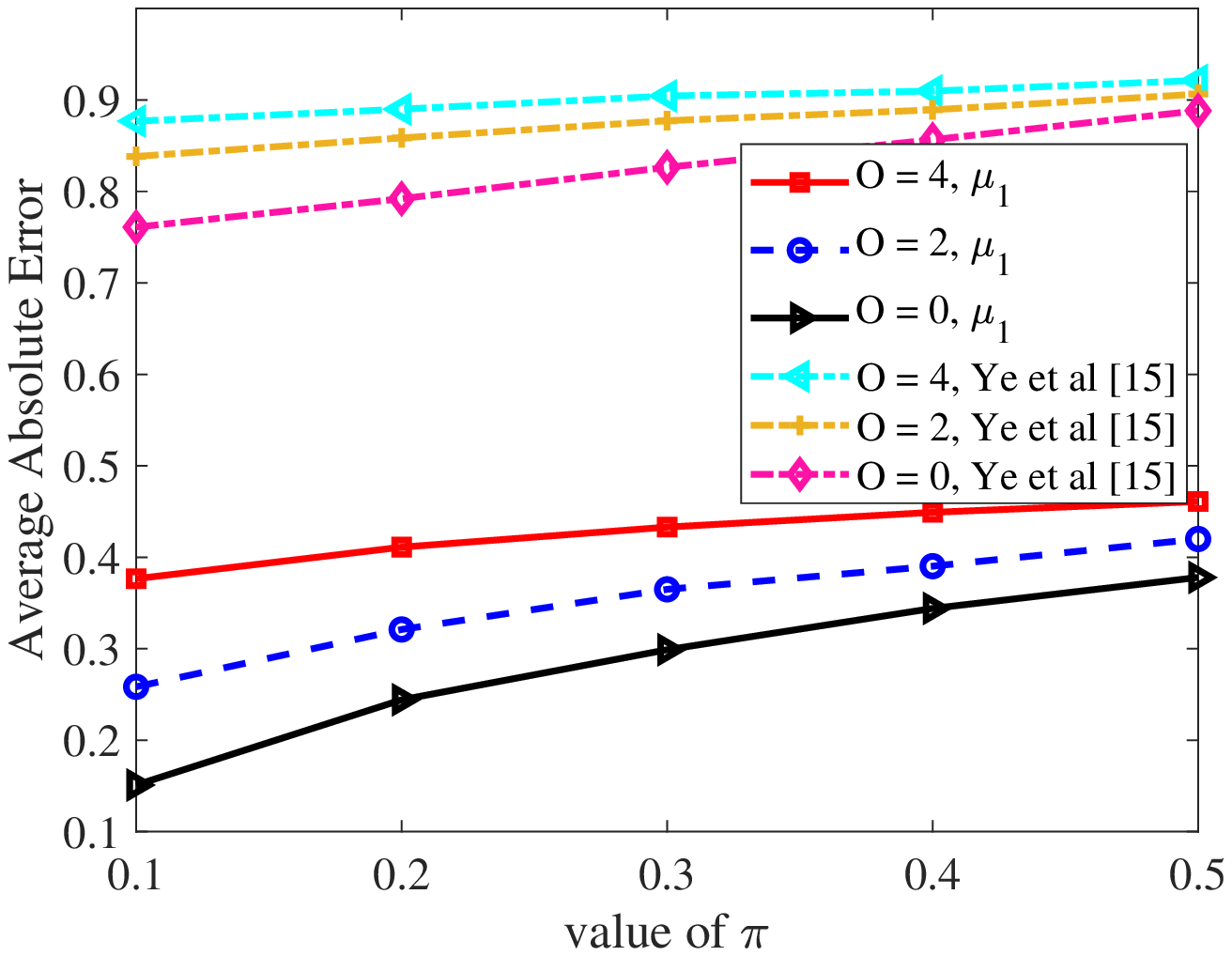}
\label{real 2} } 
\subfigure[EAE comparison when $\theta = 0.01$] 
{ \includegraphics[width=0.23\textwidth]{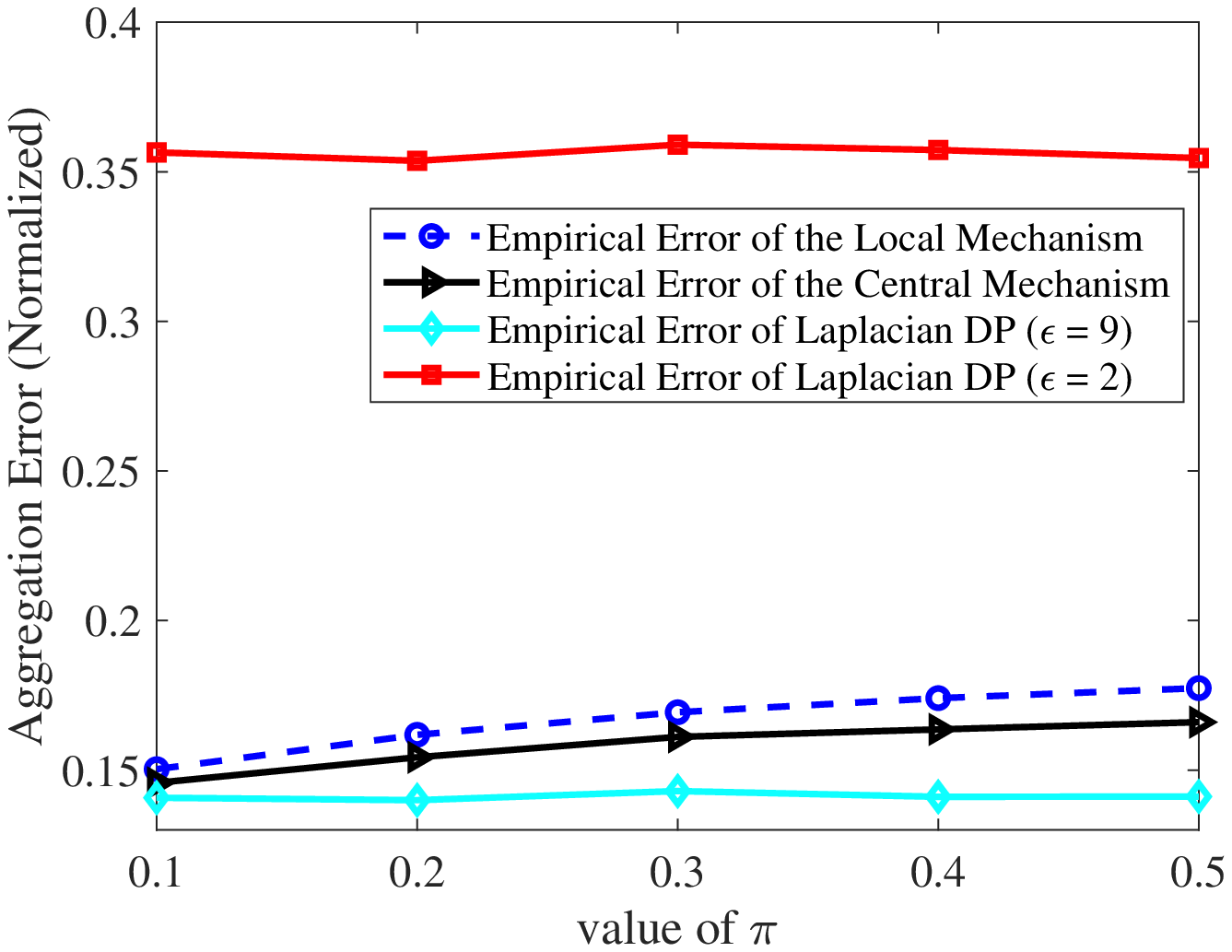} 
\label{real 3} }
\subfigure[EAE comparison when $\theta = 0.05$]
{ \includegraphics[width=0.23\textwidth]{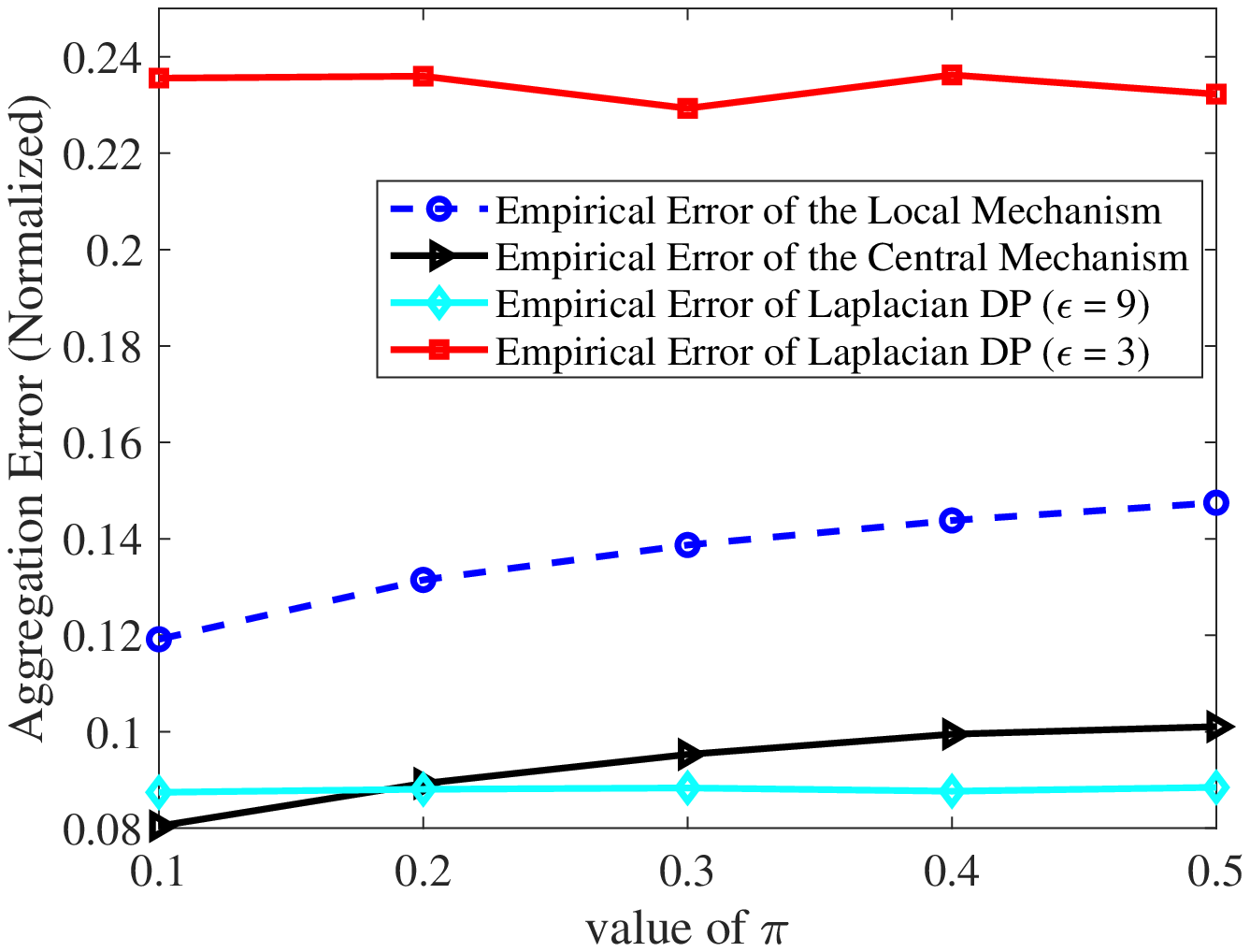}
\label{real 4} } 
\caption{Numerical results of the Hidden Markov Model generated from real-world genomic data, (a), (b) for individual average absolute error (AAE) under different cases of $\mathcal{L}\cap\mathcal{S}$(denoted as $O$), We sample from the uniform distribution for the genotypes at the hidden locations for Ye's mechanism in \cite{ye2020mechanisms}; (c), (d) for aggregate error, which compares empirical error in aggregation when deploying local and central mechanisms.} 
\label{numerical2} 
\vspace{-10pt}
\end{figure*}

\subsection{Evaluation with Hidden Markov Model}

Since the real-world genomic dataset contains private information and is infeasible to access. We next evaluate with the Hidden Markov Model (HMM), which is widely adopted in genetics for a wide range of tasks that
require a probabilistic model of the genome \cite{ye2020mechanisms, Ligenetics, Song_yun}. In the HMM model, the experimental data (set) is generated from a reference dataset with certain parameters $(\pi, \theta)$. In this experiment, we consider the reference dataset contains  { $K = 100$ users genomic sequence, each with a length of $N = 20$}. We treat each genotype at a certain location as a state. When generating the experiment data sequence, we randomly pick one data sequence from the reference dataset to begin with. For each of the following genotypes, we make it identical to the state of the same sequence with a probability of $\pi$, and switch to another state from another sequence with a probability of $(1-\pi)/(N-1)$. It is worth noting that the switching probability of $\pi$ can be used to measure the dependence in the genomic sequence. A larger value of $\pi$ indicates that genomic sequences in the experimental dataset would preserve similar patterns to the reference dataset. Thus the dependence is strong. On the other hand, for a small $\pi$, each genotype in the experiment dataset contains large randomness, and hence the dependence is weak. Besides the switching probability, we add randomness to the experimental dataset with an error probability $\theta$. That said, each time HMM model copies a genotype from the reference dataset, there is a probability of $\theta$ to substitute by randomly sampling one from $\{A,T,G,C\}$. Finally, we compare it with the mechanism proposed in \cite{ye2020mechanisms}, which is the most relative mechanism in the literature. Note that the mechanism in \cite{ye2020mechanisms} hides genotypes at certain locations, which makes the query aggregation not directly applicable. To this end, we consider two types of post-processing strategies: one is to sample genotypes at hidden locations from a uniform distribution, and the other is to sample from the prior distribution calculated from the frequency. Figure \ref{fig:hmm} depicts the process that an HMM model generates an experimental dataset from the reference dataset.

To generate the experimental dataset, we consider two sets of reference dataset, the first reference dataset is sampled from a uniform distribution. The second reference dataset is a sample from Sample GenBank Record \cite{sample}. We generate $K = 1000$ users' genomic sequence as the experimental data with $\pi$ and $\theta$. In the following experiments, we vary the value of $\pi$ from $0$ to $0.5$, and fix the value of $\theta$ to be $0.01$ and $0.05$, respectively. Then, we calculate the frequency of the appearance of different combinations of genotypes as the prior distribution and the conditional probabilities. Specifically, to calculate the distribution of $A$ in the central setting, we find the ratio of $\bar{A}/K$, where $\bar{A}$ denotes the true aggregate value, this value converges to $P_{A^{(k)}}(1)$ for each individual (for large $K$). Under the independent user assumption (each user in the experimental dataset is generated by HMM independently), $A$ is binomially distributed. Hence, its prior distribution can be calculated accordingly by each $P_{A^{(k)}}$. From there, we implement our local and central mechanisms accordingly for $T=1000$ times and calculate the average values.

We first compare the local models with different length of $|\mathcal{L}\cap{\mathcal{S}}|$, the results are shown in Fig.\ref{hmm 1} and Fig.\ref{hmm 2} respectively (Fig.\ref{real 1} and Fig.\ref{real 2} for real-world sample). Observer that, as $\pi$ increases, the $P_e$ of each case increases accordingly. That is because the dependence increases with $\pi$, and $P_e$ increases with the strength of the dependence. Another observation is that $P_e$ also increases with the length of intersected locations. The reason is the dependence between the queried genotypes and the sensitive genotypes increases with the length of intersected locations. It is worth noting that for larger $\theta$, the randomness in the dataset increases, which results in smaller dependence among the genomic sequence, and hence incurs a smaller $P_e$, which explains why sub-case $(b)$ always incurs smaller errors than subcase $(a)$.

We then compare the local model and the central model according to different $\theta$ and $\pi$. In addition, we compared our models with those based on DP with Laplace mechanisms. For each model, we modify the value of the privacy budget $\epsilon$, and compute the empirical absolute error by implementing $\epsilon$-DP. Then we find those mechanisms that incur similar aggregation errors to our mechanisms. The corresponding values of $\epsilon$ and the comparison results are shown in Fig. \ref{hmm 3} and Fig. \ref{hmm 4}, respectively (Fig.\ref{real 3} and Fig.\ref{real 4} for real-world sample). Observe that the central model always provides smaller empirical errors compared to the local models. On the other hand, to achieve comparable performance to our mechanism, the DP-based mechanism causes too much privacy leakage ($\epsilon$ is greater than $4$).  {It is worth noting that, the $\epsilon$ values yielding comparable accuracy to our local/central mechanisms were found to be $15$ and $9$ for synthetic and real-world data, respectively. That is to say, DP mechanisms need different amounts of privacy budget to achieve comparable utility (in terms of error probability) to our mechanisms for  synthetic and real-world data respectively. The primary reason for this difference is rooted in the inherent characteristics of real-world data sequences, where each local $A^{(k)}$ is more likely to be $0$. In view of privacy preservation, our mechanisms may introduce some noise by perturbing certain $0$s to $1$s. Consequently, this perturbation leads to an increase in the accumulated error. Another observation is when choosing DP with a reasonable budget for privacy, such as $2$ or $3$, the  accuracy gaps with our mechanisms are significant.}

\section{Conclusion}

 The problem of privacy-preserving counting query-answering mechanisms for genotype aggregation is studied in this paper. We propose information theoretical privacy-preserving mechanisms based on both the local and the central settings. The proposed mechanisms guarantee perfect privacy for genome data at sensitive locations. We then derive a lower bound of $P_e$ for an arbitrary mechanism under perfect privacy and show the optimality of our mechanisms under some special cases. Finally, we simulate with both synthetic data and real-world data sample combination with Hidden Markov Model to show the performance of our proposed mechanisms under different scenarios. The results also contain a comparison with Differential Privacy-based mechanisms, which shows that to provide comparable query accuracy, DP-based mechanisms incur much larger privacy leakages.

\appendices

\section{Proof of Theorem \ref{thm:Pe} } \label{appendix:proof_theorem2}

In this Section, we provide some more intuition behind the derivation of the mechanisms and then derive the error probability for the mechanisms. Since the proof for Theorem 2 is for local mechanism only, we remove the superscript $(k)$ for simplicity.

\subsection{Deriving the mechanisms and Error Probability}\label{appendix:derivations_mechanisms}

We first expand the term $P_{Y|\bold{X}_{\mathcal{S}}}(y|x_{\mathcal{S}})$ as follows
\begin{align}
    P_{Y|\bold{X}_{\mathcal{S}}}(y | x_{\mathcal{S}}) = \sum_{x_{\bar{\mathcal{L}}}} P_{Y|\bold{X}_{\bar{\mathcal{L}}}, \bold{X}_{\mathcal{S}}}(y | x_{\bar{\mathcal{L}}}, x_{\mathcal{S}}) P_{\bold{X}_{\bar{\mathcal{L}}}|\bold{X}_{\mathcal{S}}}(x_{\bar{\mathcal{L}}}|x_{\mathcal{S}}). \label{eqn:expanedTerm_xl}
\end{align}
In order to satisfy the per-user perfect privacy condition, we require that $P_{Y|\bold{X}_{\bar{\mathcal{L}}},\bold{X}_{\mathcal{S}}}(y | x_{\bar{\mathcal{L}}}, x_{\mathcal{S}}) \text{Pr}(x_{\bar{\mathcal{L}}}|x_{\mathcal{S}}) = f(y, x_{\bar{\mathcal{L}}}), \forall x_{\bar{\mathcal{L}}}, y$, i.e., each term in \eqref{eqn:expanedTerm_xl} does not depend on $x_{\mathcal{S}}$. To this end, we have
\begin{align}
    P_{Y|\bold{X}_{\bar{\mathcal{L}}}, \bold{X}_{\mathcal{S}}}(y | x_{\bar{\mathcal{L}}}, x_{\mathcal{S}}) = \frac{f(y, x_{\bar{\mathcal{L}}})}{P_{\bold{X}_{\bar{\mathcal{L}}}|\bold{X}_{\mathcal{S}}}(x_{\bar{\mathcal{L}}} | x_{\mathcal{S}})}, y \in \{0, 1\}. \label{eqn:functional_form}
\end{align}
From the above equation, we make $  P_{Y|\bold{X}_{\bar{\mathcal{L}}}, \bold{X}_{\mathcal{S}}}(y | x_{\bar{\mathcal{L}}}, x_{\mathcal{S}}) $ a valid probability mass function, by picking $f(y, x_{\bar{\mathcal{L}}} )$ as follows
\begin{align}
    &0 \leq f(y, x_{\bar{\mathcal{L}}} )\leq P_{\bold{X}_{\bar{\mathcal{L}}}|X_{\mathcal{S}}}(x_{\bar{\mathcal{L}}} | x_{\mathcal{S}}), \forall x_{\bar{\mathcal{L}}}, x_{\mathcal{S}}, \nonumber \\
    &\Rightarrow f(y, x_{\bar{\mathcal{L}}}) \leq \min_{w} P_{\bold{X}_{\bar{\mathcal{L}}}|\bold{X}_{\mathcal{S}}}(x_{\bar{\mathcal{L}}} | x_{\mathcal{S}} = w). \label{eqn:upperbound_mapping}
\end{align}

\subsection{Error Probability for Mechanism $\mathcal{M}_{1}$:}
Recall that our goal is to minimize the probability of error per user. We first expand the first term  as follows:
\begin{align}
    & P_{Y, \bold{X}_{\mathcal{L}}}(0, v_{\mathcal{L}} )  = \sum_{x_{\bar{\mathcal{S}}}} P_{Y| \bold{X}_{\mathcal{L}}, \bold{X}_{\bar{\mathcal{S}}}}(0| v_{\mathcal{L}}, x_{\bar{\mathcal{S}}} ) P_{X_{\mathcal{L}}, X_{\bar{\mathcal{S}}}}(v_{\mathcal{L}}, x_{\bar{\mathcal{S}}}) \nonumber \\
    &  = \sum_{x_{\bar{\mathcal{S}}}} P_{Y| X_{\mathcal{L}}, X_{\bar{\mathcal{S}}}}(0|  v_{\mathcal{L}}, x_{\bar{\mathcal{S}}} ) P_{X_{\mathcal{L}},X_{\bar{\mathcal{S}}}}(v_{\mathcal{L}}, x_{\bar{\mathcal{S}}}) \nonumber \\
    & =  \sum_{x_{\bar{\mathcal{S}}}} (1-P_{Y| \bold{X}_{\mathcal{L}}, \bold{X}_{\bar{\mathcal{S}}}}(1|  x_{\mathcal{L}}, x_{\bar{\mathcal{S}}} ) ) P_{\bold{X}_{\mathcal{L}},\bold{X}_{\bar{\mathcal{S}}}}(v_{\mathcal{L}}, x_{\bar{\mathcal{S}}}) \nonumber \\
      & =  P_{\bold{X}_{\mathcal{L}}}(v_{\mathcal{L}}) - \sum_{x_{\bar{\mathcal{S}}}} P_{Y| \bold{X}_{\mathcal{L}}, \bold{X}_{\bar{\mathcal{S}}}}(1|  v_{\mathcal{L}}, x_{\bar{\mathcal{S}}} )  P_{\bold{X}_{\mathcal{L}},\bold{X}_{\bar{\mathcal{S}}}}(v_{\mathcal{L}}, x_{\bar{\mathcal{S}}}) \nonumber \\
       & \overset{(a)} =  P_{\bold{X}_{\mathcal{L}}}(v_{\mathcal{L}}) - \sum_{x_{\bar{\mathcal{S}}}} P_{Y| \bold{X}_{\bar{\mathcal{L}}}, \bold{X}_{\mathcal\mathcal\mathcal\mathcal\mathcal\mathcal\mathcal{S}}}(1|  v_{\bar{\mathcal{L}}}, x_{\mathcal\mathcal\mathcal\mathcal\mathcal\mathcal\mathcal{S}} )  P_{\bold{X}_{\bar{\mathcal{L}}},\bold{X}_{\mathcal\mathcal\mathcal\mathcal\mathcal\mathcal\mathcal{S}}}(v_{\bar{\mathcal{L}}}, x_{\mathcal\mathcal\mathcal\mathcal\mathcal\mathcal\mathcal{S}}) \nonumber \\
         & \overset{(b)} =  P_{\bold{X}_{\mathcal{L}}}(v_{\mathcal{L}}) - \sum_{x_{\bar{\mathcal{S}}}} f(1, v_{\bar{\mathcal{L}}}) P_{\bold{X}_{\mathcal{S}}}(x_{\mathcal{S}}). \nonumber \\
      & = P_{\bold{X}_{\mathcal{L}}}(v_{\mathcal{L}}) -   f(1, v_{\bar{\mathcal{L}}})  P_{\bold{X}_{\mathcal{L} \cap \mathcal{S}}}(v_{\mathcal{L} \cap \mathcal{S}} ) 
    \label{eqn:error_term1}
\end{align}
where step (a) follows from the fact that $\mathcal{L} \cap \bar{\mathcal{S}} = \bar{\mathcal{L}} \cap \mathcal{S}$, while step (b) follows from \eqref{eqn:functional_form}. 

Similarly, we have the following set of steps for the second term  as follows:
\begin{align}
    &\sum_{x_{\mathcal{L}} \neq v_{\mathcal{L}}}  P_{Y, X_{\mathcal{L}}}(1,  x_{\mathcal{L}} ) \nonumber\\
    & = \sum_{x_{\mathcal{L}} \neq  v_{\mathcal{L}}} \sum_{x_{\bar{\mathcal{S}}}} P_{Y| X_{\mathcal{L}}, X_{\bar{\mathcal{S}}}}(1|  x_{\mathcal{L}}, x_{\bar{\mathcal{S}}} ) P_{X_{\mathcal{L}},X_{\bar{\mathcal{S}}}}(x_{\mathcal{L}}, x_{\bar{\mathcal{S}}})  \nonumber \\  
    &  = \sum_{x_{\mathcal{L}} \neq v_{\mathcal{L}}} \sum_{x_{\bar{\mathcal{S}}}} P_{Y| X_{\bar{\mathcal{L}}}, X_{\mathcal{S}}}(1|  x_{\bar{\mathcal{L}}}, x_{\mathcal{S}} ) P_{X_{\bar{\mathcal{L}}}, X_{\mathcal\mathcal\mathcal\mathcal\mathcal\mathcal\mathcal{S}}}(x_{\bar{\mathcal{L}}}, x_{\mathcal{S}}) \nonumber \\
        & \overset{(b)} = \sum_{x_{\mathcal{L}} \neq  v_{\mathcal{L}}} f(1, x_{\bar{\mathcal{L}}}) \sum_{x_{\bar{\mathcal{S}}}}  P_{\bold{X}_{\mathcal\mathcal\mathcal\mathcal\mathcal\mathcal\mathcal{S}}}(x_{\mathcal\mathcal\mathcal\mathcal\mathcal\mathcal\mathcal{S}}), \nonumber \\ 
        & =   \sum_{x_{\mathcal{L}} \neq v_{\mathcal{L}}:  x_{\bar{\mathcal{L}}} = v_{\bar{\mathcal{L}}} }   f(1, v_{\bar{\mathcal{L}}}) \sum_{x_{\bar{\mathcal{S}}}}  P_{\bold{X}_{\mathcal\mathcal\mathcal\mathcal\mathcal\mathcal\mathcal{S}}}(x_{\mathcal{S}}) \nonumber \\
        & \hspace{0.2in} +   \sum_{x_{\mathcal{L}} \neq v: x_{\bar{\mathcal{L}}} \neq v_{\bar{\mathcal{L}}} } f(1, x_{\bar{\mathcal{L}}})  \sum_{x_{\bar{\mathcal{S}}}}  P_{\bold{X}_{\mathcal\mathcal\mathcal\mathcal\mathcal\mathcal\mathcal{S}}}(x_{\mathcal\mathcal\mathcal\mathcal\mathcal\mathcal\mathcal{S}}) \nonumber \\ 
        & \overset{(c)} = f(1, v_{\bar{\mathcal{L}}}) \operatorname{Pr}(\bold{X}_{\mathcal{L} \cap \mathcal{S}} \neq v_{\mathcal{L} \cap \mathcal{S}}) + \sum_{x_{\bar{\mathcal{L}}} \neq v_{\bar{\mathcal{L}}}} f(1, x_{\bar{\mathcal{L}}}),
        \label{eqn:error_term2}
\end{align}
where step (a) follows from \eqref{eqn:functional_form}, while in step (b), the function $f(0, v_{\bar{\mathcal{L}}}) $ does not depend on $x_{{\mathcal{S}}}$. Step (c) follows from the following: 
\begin{align}
     & \sum_{x_{\mathcal{L}} \neq v_{\mathcal{L}}: x_{\bar{\mathcal{L}}} \neq v_{\bar{\mathcal{L}}} } f(1, x_{\bar{\mathcal{L}}})  \sum_{x_{\bar{\mathcal{S}}}}  P_{\bold{X}_{\mathcal\mathcal\mathcal\mathcal\mathcal\mathcal\mathcal{S}}}(x_{\mathcal\mathcal\mathcal\mathcal\mathcal\mathcal\mathcal{S}})  \nonumber \\ 
     & = \sum_{x_{\mathcal{L}} \neq v_{\mathcal{L}}: x_{\bar{\mathcal{L}}} \neq v_{\bar{\mathcal{L}}} } f(1, x_{\bar{\mathcal{L}}})  \sum_{x_{\bar{\mathcal{S}}}}  P_{\bold{X}_{\bar{\mathcal{S}}}, \bold{X}_{\mathcal{L} \cap \mathcal{S}}}(x_{\bar{\mathcal{S}}}, x_{\mathcal{L} \cap \mathcal{S}}) \nonumber \\ 
          & = \sum_{x_{\mathcal{L}} \neq v_{\mathcal{L}}: x_{\bar{\mathcal{L}}} \neq v_{\bar{\mathcal{L}}} } f(1, x_{\bar{\mathcal{L}}})    P_{\bold{X}_{\mathcal{L} \cap \mathcal{S}}}( x_{\mathcal{L} \cap \mathcal{S}}) \nonumber \\
     & = \sum_{ x_{\bar{\mathcal{L}}} \neq v_{\bar{\mathcal{L}}} } f(1, x_{\bar{\mathcal{L}}})  \sum_{x_{\mathcal{L} \cap \mathcal{S}}}  P_{ \bold{X}_{\mathcal{L} \cap \mathcal{S}}}( x_{\mathcal{L} \cap \mathcal{S}}) = \sum_{ x_{\bar{\mathcal{L}}} \neq v_{\bar{\mathcal{L}}} } f(1, x_{\bar{\mathcal{L}}}).
\end{align}

\subsection{Error Probability for Mechanism $\mathcal{M}_{2}$:}

We write the error probabilities in terms of $f(0, x_{\bar{\mathcal{L}}})$ and following similar steps as in $\mathcal{M}_1$. For the first term, we have the following set of steps:
\begin{align}
    P_{Y, X_{\mathcal{L}}}(0, v_{\mathcal{L}}) & = \sum_{x_{\bar{\mathcal{S}}}} P_{Y| X_{\mathcal{L}}, X_{\bar{\mathcal{S}}}}(0| v_{\mathcal{L}}, x_{\bar{\mathcal{S}}} ) P_{X_{\mathcal{L}}, X_{\bar{\mathcal{S}}}}(v_{\mathcal{L}}, x_{\bar{\mathcal{S}}}) \nonumber \\ 
    & = \sum_{x_{\bar{\mathcal{S}}}} P_{Y| X_{\bar{\mathcal{L}}}, X_{\mathcal\mathcal\mathcal\mathcal\mathcal\mathcal\mathcal{S}}}(0| v_{\bar{\mathcal{L}}}, x_{\mathcal\mathcal\mathcal\mathcal\mathcal\mathcal\mathcal{S}} ) P_{X_{\bar{\mathcal{L}}}, X_{\mathcal\mathcal\mathcal\mathcal\mathcal\mathcal\mathcal{S}}}(x_{\bar{\mathcal{L}}}, x_{\mathcal\mathcal\mathcal\mathcal\mathcal\mathcal\mathcal{S}}) \nonumber \\ & = \sum_{x_{\bar{\mathcal{S}}}} f(0, v_{\bar{\mathcal{L}}}) P_{X_{\mathcal{S}}}(x_{\mathcal{S}})   =  f(0, v_{\bar{\mathcal{L}}}) P_{ X_{\mathcal{L} \cap \mathcal{S}}}(v_{\mathcal{L} \cap \mathcal{S}}). \nonumber 
\end{align}

For the second term, we have the following set of steps: 
\begin{align}
    & \sum_{x_{\mathcal{L}} \neq v_{\mathcal{L}}}  P_{Y, \bold{X}_{\mathcal{L}}}(1,  x_{\mathcal{L}} ) \nonumber \\
    & = \sum_{x_{\mathcal{L}} \neq v_{\mathcal{L}}}  \sum_{x_{\bar{\mathcal{S}}}}  P_{Y|\bold{X}_{\mathcal{L}}, \bold{X}_{\bar{\mathcal{S}}}}(1|  x_{\mathcal{L}}, x_{\bar{\mathcal{S}}} ) P_{\bold{X}_{\mathcal{L}}, X_{\bar{\mathcal{S}}}}(x_{\mathcal{L}}, x_{\bar{\mathcal{S}}}) \nonumber\\ 
     & = \sum_{x_{\mathcal{L}} \neq v_{\mathcal{L}}}  \sum_{x_{\bar{\mathcal{S}}}}  (1-P_{Y| \bold{X}_{\mathcal{L}}, \bold{X}_{\bar{\mathcal{S}}}}(0|  x_{\mathcal{L}}, x_{\bar{\mathcal{S}}} )) P_{\bold{X}_{\mathcal{L}}, \bold{X}_{\bar{\mathcal{S}}}}(x_{\mathcal{L}}, x_{\bar{\mathcal{S}}}) \nonumber\\
              & = 1- P_{\bold{X}_{\mathcal{L}}}( v_{\mathcal{L}})  - \sum_{x_{\mathcal{L}} \neq v_{\mathcal{L}}}  \sum_{x_{\bar{\mathcal{S}}}}  f(0, x_{\bar{\mathcal{L}}}) P_{\bold{X}_{\mathcal{S}}}(x_{\mathcal{S}})\nonumber\\ 
                &  = 1- P_{\bold{X}_{\mathcal{L}}}(v_{\mathcal{L}}) \nonumber \\  
                & \hspace{0.2in}- f(0, v_{\bar{\mathcal{L}}}) \operatorname{Pr}(\bold{X}_{\mathcal{L} \cap \mathcal{S}} \neq v_{\mathcal{L} \cap \mathcal{S}}) - \sum_{x_{\bar{\mathcal{L}}} \neq v_{\bar{\mathcal{L}}}} f(0, x_{\bar{\mathcal{L}}}). \nonumber 
\end{align}

We optimize the per-user error probability for the two release mechanisms for fixed $q$, $s$ and $v_{\mathcal{L}}$ as follows. 

For $\mathcal{M}_1$: from \eqref{eqn:error_term1} and \eqref{eqn:error_term2}, the per-user error probability is 
\begin{align}
    P_{e,1}
    & =P_{\bold{X}_{\mathcal{L}}}(v_{\mathcal{L}}) + \sum_{x_{\bar{\mathcal{L}}} \neq v_{\bar{\mathcal{L}}}} f(1, x_{\bar{\mathcal{L}}})  \nonumber\\
    & \hspace{0.2in} + f(1, v_{\bar{\mathcal{L}}}) (2 \operatorname{Pr}(\bold{X}_{\mathcal{L} \cap \mathcal{S}} \neq v_{\mathcal{L} \cap \mathcal{S}}) - 1 ).
\end{align}
We minimize the probability of error as follows. Here, we have two cases depending on the value of $\operatorname{Pr}(\bold{X}_{\mathcal{L} \cap \mathcal{S}} \neq v_{\mathcal{L} \cap \mathcal{S}}) $. If  $\operatorname{Pr}(\bold{X}_{\mathcal{L} \cap \mathcal{S}} \neq v_{\mathcal{L} \cap \mathcal{S}}) \geq 1/2$, we pick $f(1, x_{\bar{\mathcal{L}}}) = 0, \forall x_{\bar{\mathcal{L}}}$. When $\operatorname{Pr}(\bold{X}_{\mathcal{L} \cap \mathcal{S}} \neq v_{\mathcal{L} \cap \mathcal{S}}) < 1/2$, we pick $f(1, v_{\bar{\mathcal{L}}}) = \min_{w} P_{\bold{X}_{\bar{\mathcal{L}}}|\bold{X}_{\mathcal{S}}}(x_{\bar{\mathcal{L}}} | x_{\mathcal{S}} = w)$ and    $f(1, x_{\bar{\mathcal{L}}} ) = 0, \forall x_{\bar{\mathcal{L}}} \neq v_{\bar{\mathcal{L}}}$.

For $\mathcal{M}_2$ The per-user error probability for $\mathcal{M}_2$ is
\begin{align}
    P_{e,2} &= 1- P_{\bold{X}_{\mathcal{L}}}(v_{\mathcal{L}}) - \sum_{x_{\bar{\mathcal{L}}} \neq v_{\bar{\mathcal{L}}}} f(0, x_{\bar{\mathcal{L}}}) \nonumber \\
   & \hspace{0.2in} - f(0, v_{\bar{\mathcal{L}}}) ( 2 \operatorname{Pr}(\bold{X}_{\mathcal{L} \cap \mathcal{S}} \neq v_{\mathcal{L} \cap \mathcal{S}}) - 1),
\end{align}
where the error probability is minimized when $f(0, x_{\bar{\mathcal{L}}}) =  \min_{w} P_{\bold{X}_{\bar{\mathcal{L}}}|X_{\mathcal{S}}}(x_{\bar{\mathcal{L}}} | w), \forall x_{\bar{\mathcal{L}}} \neq v_{\bar{\mathcal{L}}} $. For $f(0, v_{\bar{\mathcal{L}}})$, we have two special cases: when $\operatorname{Pr}(X_{\mathcal{L} \cap \mathcal{S}} \neq v_{\mathcal{L} \cap \mathcal{S}})> 1/2$, we pick $f(0, v_{\bar{\mathcal{L}}}) =  \min_{w} P_{\bold{X}_{\bar{\mathcal{L}}}|\bold{X}_{\mathcal{S}}}(v_{\bar{\mathcal{L}}} | w) $, otherwise we set $f(0, v_{\bar{\mathcal{L}}}) = 0$.

This concludes the proof for Theorem 2.

\section{Proof of Theorem \ref{thm:lowerbound}}
\begin{proof}
According to Fano's inequality\cite{doi:https://doi.org/10.1002/047174882X.ch17}, the lower bound of the error probability can be expressed as:
\begin{equation}\label{eq1}
    h(A|Y)\le{h(P_e)+P(e)\log(|A|-1)}.
\end{equation}
As $|A|=2$, Eq. \eqref{eq1} can be expressed as:
\begin{equation}\label{eq2}
\begin{aligned}
      h(P_e)\ge{h(A|Y)}=h(A)-I(A;Y).
\end{aligned}
\end{equation}
In Eq.\eqref{eq2}, $h(A)$ is a constant given the distribution of $A$ (which is a deterministic function of $\bold{X}_{\mathcal{L}}$). Next, we focus on $I(A;Y)$. To derive a lower bound of $P_e$, we are looking at the upper bound of $I(A;Y)$. 
Note that random variable $A$ can be represented as:
\begin{equation}
\begin{aligned}
    A=\mathbbm{1}_{\{\bold{X}_{\mathcal{L}}=v_{\mathcal{L}}\}}
    =\mathbbm{1}_{\{\bold{X}_{\bar{\mathcal{L}}}=v_{\bar{\mathcal{L}}}\}}\times \mathbbm{1}_{\{\bold{X}_{\mathcal{L}\cap\mathcal{S}}=v_{\mathcal{L}\cap\mathcal{S}}\}}.
\end{aligned}
\end{equation}
Denote $A_{\bar{\mathcal{L}}}=\mathbbm{1}_{\{\bold{X}_{\bar{\mathcal{L}}}=v_{\bar{\mathcal{L}}}\}}$ and $A_{{\mathcal{L}\cap\mathcal{S}}}=\mathbbm{1}_{\{\bold{X}_{\mathcal{L}\cap\mathcal{S}}=v_{\mathcal{L}\cap\mathcal{S}}\}}$. Therefore, using this notation, we can write 
 $A=A_{\bar{\mathcal{L}}}\times A_{{\mathcal{L}\cap\mathcal{S}}}$.
Then the upper bound of $I(A;Y)$ can be derived as:

\begin{align}\label{eqx}
    I(A;Y)=&I(A;Y|\bold{X}_{\mathcal{S}})\\
    \le&{I(A;\bold{X}|\bold{X}_{{\mathcal{S}}})}\\
    =&h(A|\bold{X}_{{\mathcal{S}}})-h(A|\bold{X},\bold{X}_{{\mathcal{S}}})\nonumber\\
    =&h(A|\bold{X}_{{\mathcal{S}}})\nonumber,
\end{align}
where (22) follows from the privacy constraint in (1), (23) follows by data processing. From another perspective:
\begin{align}\label{eqy}
    I(A;Y)=&I(A_{\bar{\mathcal{L}}}A_{\mathcal{L}\cap\mathcal{S}};Y)\nonumber\\
    \le&I(A_{\bar{\mathcal{L}}},A_{\mathcal{L}\cap\mathcal{S}};Y)\nonumber\\
    =&I(A_{{\mathcal{L}\cap\mathcal{S}}};Y)+I(A_{\bar{\mathcal{L}}};Y|A_{{\mathcal{L}\cap\mathcal{S}}})\nonumber\\
    =&I(A_{\bar{\mathcal{L}}};Y|A_{{\mathcal{L}\cap\mathcal{S}}})\\
    \le&{I(A_{\bar{\mathcal{L}}};X|A_{{\mathcal{L}\cap\mathcal{S}}})}\\
    =&h(A_{\bar{\mathcal{L}}}|A_{{\mathcal{L}\cap\mathcal{S}}})-h(A_{\bar{\mathcal{L}}}|X,A_{{\mathcal{L}\cap\mathcal{S}}})\nonumber\\
    =&h(A_{\bar{\mathcal{L}}}|A_{{\mathcal{L}\cap\mathcal{S}}}),\nonumber
\end{align}
where (24) follows from the privacy constraint in (1), (25) follows from data processing. From \eqref{eqx} and \eqref{eqy}, we obtain an upper bound on $I(Y;A)$ as follows:
$I(A;Y) \le \min (h(A_{\bar{\mathcal{L}}}|A_{{\mathcal{L}\cap\mathcal{S}}}),h(A|X_{{\mathcal{S}}}))$
Then, substituting the above bound in \eqref{eq2}, we arrive at the lower bound on error probability stated in Theorem 3.  
\begin{equation}
    h(P_e)\ge{h(A)-\min\{h(A_{\bar{\mathcal{L}}}|A_{{\mathcal{L}\cap\mathcal{S}}}),h(A|X_{{\mathcal{S}}})\}}.
\end{equation}
\end{proof}
\section{Proof of Claims made in Remark 2}

1) when $\mathcal{L}\cap{\mathcal{S}}=\emptyset$, from Theorem \ref{thm:lowerbound}, we have:
\begin{equation}\label{eq8}
    P_e\ge h^{-1}\big(I(A;\bold{X}_{\mathcal{S}})\big).
\end{equation}

On the other hand, according to Theorem \ref{thm:Pe}, when $\mathcal{L}\cap{\mathcal{S}}=\emptyset$, $\mathcal{E}=0$, and 
\begin{equation}\label{eq9}
\begin{aligned}
    P_{e,1}=&P_{\bold{X}_{\mathcal{L}}}(v_{\mathcal{L}})  - \min_{w} P_{\bold{X}_{{\mathcal{L}}}|\bold{X}_{\mathcal{S}}}(v_{{\mathcal{L}}}|  w),\\
    P_{e,2} =&1 - P_{\bold{X}_{\mathcal{L}}}(v_{\mathcal{L}}) -  \sum_{x_{{\mathcal{L}}} \neq v_{{\mathcal{L}}} } \min_{w} P_{\bold{X}_{{\mathcal{L}}}|\bold{X}_{\mathcal{S}}}(x_{{\mathcal{L}}}| w).
\end{aligned}
\end{equation}
We next consider two sub-cases regarding the relationship between $A$ and $X_{\mathcal{S}}$:

a). When $I(A;\bold{X}_{\mathcal{S}})=h(A)$, from \eqref{eq8}:
\begin{equation}
    \begin{aligned}
        P_e\ge &h^{-1}(h(A)-H(A|\bold{X}_{\mathcal{S}}))
        =h^{-1}(h(A))\\
        =&\min\{P_A(0),P_A(1)\}.
    \end{aligned}
\end{equation}
Since $h(A|X_{\mathcal{S}})=0$, $\min_{w} P_{\bold{X}_{{\mathcal{L}}}|\bold{X}_{\mathcal{S}}}(x_{{\mathcal{L}}}|  w)=0$ for all $x_{\mathcal{L}}$. From \eqref{eq9}, we have $P_{e,1}=P_A(1)$, $P_{e,2}=P_A(0)$, which implies that the minimal $P_e$ resulted by $\mathcal{M}_1$ and $\mathcal{M}_2$ is $\min\{P_A(0),P_A(1)\}$.

b) When $I(A;\bold{X}_{\mathcal{S}})=0$, from \eqref{eq8}, the lower bound implies $P_e\ge{0}$.
From \eqref{eq9}, we have $P_{e,1}=P_{e,2}=0$, which implies that the minimal $P_e$ resulted by $\mathcal{M}_1$ and $\mathcal{M}_2$ is $0$.

The above discussion shows that under the two extreme cases, the proposed mechanism achieves $P_e$ matches the lower bound.


2) When $\mathcal{L}\in\mathcal{S}$, $H(A|\bold{X}_{{\mathcal{S}}})=H(A_{\bar{\mathcal{L}}}|A_{{\mathcal{L}\cap\mathcal{S}}})=0$. The result follows sub-case a), which matches the lower bound from Theorem \ref{thm:lowerbound}. The intuition is when all queried locations are sensitive, both mechanisms sample random answers according to the data prior to keep the queried sequence perfectly private.

\section{Proof of proposition 1}
We next show the lower bound of the $EAE$ under the proposed mechanisms, the goal is to show that when users are i.i.d, the bounds are tight (i.e. the inequality can be replaced by equality).

In the following, we let $Y$ denote the aggregated results from each local answer $Y^{(k)}$: $Y=\sum_{k=1}^KY^{(k)}$, and $A$ denote the real answer from each user: $A=\sum_{k=1}^KA^{(k)}$.
By Jensen's Inequality:
\begin{small}
\begin{equation}
\begin{aligned}
    &E|Y-A|
    \ge{|E(Y-A)|}\\
    =&\left|E\sum_{k=1}^K(Y^{(k)}-A^{(k)})\right|\\
    =&\left|\sum_{k=1}^KE(Y^{(k)}-A^{(k)})\right|\\
    =&\left|\sum_{k=1}^K\left[P_{Y^{(k)},A^{(k)}}(1,0)-P_{Y^{(k)},A^{(k)}}(0,1)\right]\right|\\
    =&\left|\sum_{k=1}^KP_{_{Y^{(k)}|A^{(k)}}}(1|0)P_{A^{k}}(0)-\sum_{k=1}^KP_{Y^{(k)}|A^{(k)}}(0|1)P_{A^{(k)}}(1)\right|\\
    =&\big|\sum_{k,x_{\bar{\mathcal{S}}}}P(Y^{(k)}=1|\bold{X}_{\mathcal{L}}\neq{v_{\mathcal{L}}},\bold{X}_{\bar{\mathcal{S}}}=x_{\bar{\mathcal{S}}})\\
    &P(\bold{X}_{\mathcal{S}}=x_{\mathcal{S}}|\bold{X}_{\mathcal{L}}\neq{v_{\mathcal{L}}})P(\bold{X}_{\mathcal{L}}\neq{v_{\mathcal{L}}})\\
    -&\sum_{k,x_{\bar{\mathcal{S}}}}P(Y^{(k)}=0|\bold{X}_{\mathcal{L}}={V_{\mathcal{L}}},\bold{X}_{\bar{\mathcal{S}}}=x_{\bar{\mathcal{S}}})\\
    &P(\bold{X}_{\mathcal{S}}=x_{\mathcal{S}}|\bold{X}_{\mathcal{L}}={v_{\mathcal{L}}})P(\bold{X}_{\mathcal{L}}={v_{\mathcal{L}}})\big|,
\end{aligned}
\end{equation}
\end{small}
where $P(Y^{(k)}=0|\bold{X}_{\mathcal{L}}={v_{\mathcal{L}}},\bold{X}_{\bar{\mathcal{S}}}=x_{\bar{\mathcal{S}}})$ and $P(Y^{(k)}=1|\bold{X}_{\mathcal{L}}\neq{v_{\mathcal{L}}},\bold{X}_{\bar{\mathcal{S}}}=x_{\bar{\mathcal{S}}})$ are parameters of the mechanisms. If users are i.i.d, substituting $\mu_1$ we have: 
\begin{equation}
    |E(Y-A)|=K\left|\left[P_{\bold{X}_{\mathcal{L}}}(v_{\mathcal{L}})-\min_{w}P_{\bold{X}_{\bar{\mathcal{L}}}|\bold{X}_{\mathcal{S}}}(v_{\bar{\mathcal{L}}|w})P_{\bold{X}_{\mathcal{L}\cap{\mathcal{S}}}}(v_{\mathcal{L}\cap{\mathcal{S}}})\right]\right|.
\end{equation}
Similarly, substituting $\mu_2$, we have: 
\begin{equation}
    |E(Y-A)|=K\left|\left[1-P_{\bold{X}_{\mathcal{L}}}(v_{\mathcal{L}})-\sum_{u_{\bar{\mathcal{L}}}\neq{v_{\bar{\mathcal{L}}}}}\min_{w}P_{\bold{X}_{\bar{\mathcal{L}}}|\bold{X}_{\mathcal{S}}}(u_{\bar{\mathcal{L}}|w})\right]\right|.
\end{equation}
As such, when users are i.i.d (same distribution leads to the same mechanism, either $\mathcal{M}_1$ or $\mathcal{M}_2$), the lower bound and the upper bound of the proposed mechanisms match each other, which means under the i.i.d assumption, the bounds we derived are tight.

\section{Proof of Theorem 4}
\begin{proof}
\begin{equation}
\begin{aligned}
    &P_{Y|\bold{X}_{\mathcal{S}}}(y|\bold{x}_{\mathcal{S}})\\ = &\sum_{a}P_{Y|A, \bold{X}_{\mathcal{S}}}(y|a,\bold{x}_{\mathcal{S}})P_{A|\bold{X}_{\mathcal{S}}}(a|\bold{x}_{\mathcal{S}})\\
    =&\sum_{a\neq{y}}P_{Y|A, \bold{X}_{\mathcal{S}}}(y|a,\bold{x}_{\mathcal{S}})P_{A|\bold{X}_{\mathcal{S}}}(a|\bold{x}_{\mathcal{S}})\\ + &P_{Y|A, \bold{X}_{\mathcal{S}}}(y|y,\bold{x}_{\mathcal{S}})P_{A|\bold{X}_{\mathcal{S}}}(y|\bold{x}_{\mathcal{S}}),\\
    =&\sum_{a\neq{y}}P_A(y)\left[1-\frac{\min_wP_{A|\bold{X}_{\mathcal{S}}}(a|w)}{P_{A|\bold{X}_{\mathcal{S}}}(a|\bold{x}_{\mathcal{S}})}\right]P_{A|\bold{X}_{\mathcal{S}}}(a|\bold{x}_{\mathcal{S}})\\ + &\left[P_A(y)+(1-P_A(y))\frac{\min_wP_{A|\bold{X}_{\mathcal{S}}}(a|w)}{P_{A|\bold{X}_{\mathcal{S}}}(a|\bold{x}_{\mathcal{S}})}\right]P_{A|\bold{X}_{\mathcal{S}}}(y|\bold{x}_{\mathcal{S}})\\
    =&P_A(y)\sum_{a\neq{y}}\left[P_{A|\bold{X}_{\mathcal{S}}}(a|\bold{x}_{\mathcal{S}})-{\min_wP_{A|\bold{X}_{\mathcal{S}}}(a|w)}\right]\\ + &P_A(y)P_{A|\bold{X}_{\mathcal{S}}}(y|\bold{x}_{\mathcal{S}})+(1-P_A(y)){\min_wP_{A|\bold{X}_{\mathcal{S}}}(a|w)}\\
    \end{aligned}
\end{equation}
which can be further expressed as:
\begin{equation*}
\begin{aligned}
    1+P_A(y)-P_A(y)\sum_{a}\min_wP_{A|\bold{X}_{\mathcal{S}}}(a|w).\\
\end{aligned}
\end{equation*}
As a result, $P_{Y|\bold{X}_{\mathcal{S}}}(y|\bold{x}_{\mathcal{S}})$ is independent of $\bold{X}_{\mathcal{S}}$.
\end{proof}

\section{Proof of Theorem 5}
\begin{proof}
\begin{equation*}
    \begin{aligned}
        & E[|Y-A|]
        =\sum_{y}\sum_{a\neq{y}}\sum_{\bold{x}_{\mathcal{S}}}|y-a|P_{Y,A,\bold{X}_{\mathcal{S}}}(y,a,\bold{x}_{\mathcal{S}})\\
        =&\sum_{y}\sum_{a\neq{y}}\sum_{\bold{x}_{\mathcal{S}}}|y-a|P_{Y|A,\bold{X}_{\mathcal{S}}}(y|a,\bold{x}_{\mathcal{S}})P_{A|\bold{X}_{\mathcal{S}}}(a|\bold{x}_{\mathcal{S}})P_{\bold{X}_{\mathcal{S}}}(\bold{x}_{\mathcal{S}})\\
        =&\sum_{y}\sum_{a\neq{y}}\sum_{\bold{x}_{\mathcal{S}}}|y-a|P_{\bold{X}_{\mathcal{S}}}(\bold{x}_{\mathcal{S}})\\
        &\cdot P_A(y)\left[1-\frac{\min_wP_{A|\bold{X}_{\mathcal{S}}}(a|w)}{P_{A|\bold{X}_{\mathcal{S}}}(a|\bold{x}_{\mathcal{S}})}\right]P_{A|\bold{X}_{\mathcal{S}}}(a|\bold{x}_{\mathcal{S}})\\
        =&\sum_{y}\sum_{a\neq{y}}\sum_{\bold{x}_{\mathcal{S}}}|y-a|\\
        &\cdot P_A(y)\left[P_{A,\bold{X}_{\mathcal{S}}}(a,\bold{x}_{\mathcal{S}})-{\min_wP_{A|\bold{X}_{\mathcal{S}}}(a|w)P_{\bold{X}_{\mathcal{S}}}(\bold{x}_{\mathcal{S}})}\right]\\
        =&\sum_{a=0}^N\sum_{y\neq{a}}|y-a|P_A(y)\left[P_A(a)-\min_{w}P_{A|X_{\mathcal{S}}}(a|w)\right].
    \end{aligned}
\end{equation*}
\end{proof}
 \bibliographystyle{IEEEtran}
\bibliography{ref}

\begin{thebibliography}{10}
\providecommand{\url}[1]{#1}
\csname url@samestyle\endcsname
\providecommand{\newblock}{\relax}
\providecommand{\bibinfo}[2]{#2}
\providecommand{\BIBentrySTDinterwordspacing}{\spaceskip=0pt\relax}
\providecommand{\BIBentryALTinterwordstretchfactor}{4}
\providecommand{\BIBentryALTinterwordspacing}{\spaceskip=\fontdimen2\font plus
\BIBentryALTinterwordstretchfactor\fontdimen3\font minus
  \fontdimen4\font\relax}
\providecommand{\BIBforeignlanguage}[2]{{%
\expandafter\ifx\csname l@#1\endcsname\relax
\typeout{** WARNING: IEEEtran.bst: No hyphenation pattern has been}%
\typeout{** loaded for the language `#1'. Using the pattern for}%
\typeout{** the default language instead.}%
\else
\language=\csname l@#1\endcsname
\fi
#2}}
\providecommand{\BIBdecl}{\relax}
\BIBdecl

\bibitem{ohno2017finding}
L.~Ohno-Machado, S.-A. Sansone, G.~Alter, I.~Fore, J.~Grethe, H.~Xu,
  A.~Gonzalez-Beltran, P.~Rocca-Serra, A.~E. Gururaj, E.~Bell \emph{et~al.},
  ``Finding useful data across multiple biomedical data repositories using
  datamed,'' \emph{Nature genetics}, vol.~49, no.~6, pp. 816--819, 2017.

\bibitem{AKGUN2015103}
\BIBentryALTinterwordspacing
M.~Akgün, A.~O. Bayrak, B.~Ozer, and M.~Şamil Sağıroğlu, ``Privacy
  preserving processing of genomic data: A survey,'' \emph{Journal of
  Biomedical Informatics}, vol.~56, pp. 103--111, 2015. [Online]. Available:
  \url{https://www.sciencedirect.com/science/article/pii/S1532046415001100}
\BIBentrySTDinterwordspacing

\bibitem{motahari2013information}
A.~S. Motahari, G.~Bresler, and N.~David, ``Information theory of dna shotgun
  sequencing,'' \emph{IEEE Transactions on Information Theory}, vol.~59,
  no.~10, pp. 6273--6289, 2013.

\bibitem{shomorony2016information}
I.~Shomorony, S.~H. Kim, T.~A. Courtade, and D.~N. Tse, ``Information-optimal
  genome assembly via sparse read-overlap graphs,'' \emph{Bioinformatics},
  vol.~32, no.~17, pp. i494--i502, 2016.

\bibitem{si2017information}
H.~Si, H.~Vikalo, and S.~Vishwanath, ``Information-theoretic analysis of
  haplotype assembly,'' \emph{IEEE Transactions on Information Theory},
  vol.~63, no.~6, pp. 3468--3479, 2017.

\bibitem{cho2018secure}
H.~Cho, D.~J. Wu, and B.~Berger, ``Secure genome-wide association analysis
  using multiparty computation,'' \emph{Nature biotechnology}, vol.~36, no.~6,
  pp. 547--551, 2018.

\bibitem{easton2008genome}
D.~F. Easton and R.~A. Eeles, ``Genome-wide association studies in cancer,''
  \emph{Human molecular genetics}, vol.~17, no.~R2, pp. R109--R115, 2008.

\bibitem{hirschhorn2005genome}
J.~N. Hirschhorn and M.~J. Daly, ``Genome-wide association studies for common
  diseases and complex traits,'' \emph{Nature reviews genetics}, vol.~6, no.~2,
  pp. 95--108, 2005.

\bibitem{cao2014gwas}
C.~Cao and J.~Moult, ``Gwas and drug targets,'' \emph{BMC genomics}, vol.~15,
  no.~S4, p.~S5, 2014.

\bibitem{lowe2009stride}
H.~J. Lowe, T.~A. Ferris, P.~M. Hernandez, and S.~C. Weber, ``Stride--an
  integrated standards-based translational research informatics platform,'' in
  \emph{AMIA Annual Symposium Proceedings}, vol. 2009.\hskip 1em plus 0.5em
  minus 0.4em\relax American Medical Informatics Association, 2009, p. 391.

\bibitem{murphy2011strategies}
S.~N. Murphy, V.~Gainer, M.~Mendis, S.~Churchill, and I.~Kohane, ``Strategies
  for maintaining patient privacy in i2b2,'' \emph{Journal of the American
  Medical Informatics Association}, vol.~18, no. Supplement\_1, pp. 103--108,
  2011.

\bibitem{homer2008resolving}
N.~Homer, S.~Szelinger, M.~Redman, D.~Duggan, W.~Tembe, J.~Muehling, J.~V.
  Pearson, D.~A. Stephan, S.~F. Nelson, and D.~W. Craig, ``Resolving
  individuals contributing trace amounts of dna to highly complex mixtures
  using high-density snp genotyping microarrays,'' \emph{PLoS Genet}, vol.~4,
  no.~8, p. e1000167, 2008.

\bibitem{shringarpure2015privacy}
S.~S. Shringarpure and C.~D. Bustamante, ``Privacy risks from genomic
  data-sharing beacons,'' \emph{The American Journal of Human Genetics},
  vol.~97, no.~5, pp. 631--646, 2015.

\bibitem{deznabi2017inference}
I.~Deznabi, M.~Mobayen, N.~Jafari, O.~Tastan, and E.~Ayday, ``An inference
  attack on genomic data using kinship, complex correlations, and phenotype
  information,'' \emph{IEEE/ACM transactions on computational biology and
  bioinformatics}, vol.~15, no.~4, pp. 1333--1343, 2017.

\bibitem{ye2020mechanisms}
F.~Ye, H.~Cho, and S.~El~Rouayheb, ``Mechanisms for hiding sensitive genotypes
  with information-theoretic privacy,'' in \emph{IEEE International Symposium
  on Information Theory (ISIT)}, 2020, pp. 902--907.

\bibitem{raisaro2017addressing}
J.~L. Raisaro, F.~Tramer, Z.~Ji, D.~Bu, Y.~Zhao, K.~Carey, D.~Lloyd, H.~Sofia,
  D.~Baker, P.~Flicek \emph{et~al.}, ``Addressing beacon re-identification
  attacks: quantification and mitigation of privacy risks,'' \emph{Journal of
  the American Medical Informatics Association}, vol.~24, no.~4, pp. 799--805,
  2017.

\bibitem{simmons2016enabling}
S.~Simmons, C.~Sahinalp, and B.~Berger, ``Enabling privacy-preserving {GWAS}s
  in heterogeneous human populations,'' \emph{Cell systems}, vol.~3, no.~1, pp.
  54--61, 2016.

\bibitem{cho2020privacy}
H.~Cho, S.~Simmons, R.~Kim, and B.~Berger, ``Privacy-preserving biomedical
  database queries with optimal privacy-utility trade-offs,'' \emph{Cell
  Systems}, 2020.

\bibitem{li2003modeling}
N.~Li and M.~Stephens, ``Modeling linkage disequilibrium and identifying
  recombination hotspots using single-nucleotide polymorphism data,''
  \emph{Genetics}, vol. 165, no.~4, pp. 2213--2233, 2003.

\bibitem{jiang2021context}
B.~Jiang, M.~Seif, R.~Tandon, and M.~Li, ``Context-aware local information
  privacy,'' \emph{IEEE Transactions on Information Forensics and Security},
  2021.

\bibitem{venter2001sequence}
J.~C. Venter, M.~D. Adams, E.~W. Myers, P.~W. Li, R.~J. Mural, G.~G. Sutton,
  H.~O. Smith, M.~Yandell, C.~A. Evans, R.~A. Holt \emph{et~al.}, ``The
  sequence of the human genome,'' \emph{science}, vol. 291, no. 5507, pp.
  1304--1351, 2001.

\bibitem{bioinformati}
\BIBentryALTinterwordspacing
N.~von Thenen, E.~Ayday, and A.~E. Cicek, ``{Re-identification of individuals
  in genomic data-sharing beacons via allele inference},''
  \emph{Bioinformatics}, vol.~35, no.~3, pp. 365--371, 07 2018. [Online].
  Available: \url{https://doi.org/10.1093/bioinformatics/bty643}
\BIBentrySTDinterwordspacing

\bibitem{SIMMONS201654}
\BIBentryALTinterwordspacing
S.~Simmons, C.~Sahinalp, and B.~Berger, ``Enabling privacy-preserving gwass in
  heterogeneous human populations,'' \emph{Cell Systems}, vol.~3, no.~1, pp.
  54--61, 2016. [Online]. Available:
  \url{https://www.sciencedirect.com/science/article/pii/S2405471216301211}
\BIBentrySTDinterwordspacing

\bibitem{liu2016dependence}
C.~Liu, S.~Chakraborty, and P.~Mittal, ``Dependence makes you vulnberable:
  Differential privacy under dependent tuples.'' in \emph{NDSS}, vol.~16, 2016,
  pp. 21--24.

\bibitem{7918623}
F.~d.~P. Calmon, A.~Makhdoumi, M.~Médard, M.~Varia, M.~Christiansen, and K.~R.
  Duffy, ``Principal inertia components and applications,'' \emph{IEEE
  Transactions on Information Theory}, vol.~63, no.~8, pp. 5011--5038, 2017.

\bibitem{8262832}
H.~Wang and F.~P. Calmon, ``An estimation-theoretic view of privacy,'' in
  \emph{2017 55th Annual Allerton Conference on Communication, Control, and
  Computing (Allerton)}, 2017, pp. 886--893.

\bibitem{8438536}
S.~Asoodeh, M.~Diaz, F.~Alajaji, and T.~Linder, ``Estimation efficiency under
  privacy constraints,'' \emph{IEEE Transactions on Information Theory},
  vol.~65, no.~3, pp. 1512--1534, 2019.

\bibitem{32120}
T.~Berger and R.~Yeung, ``Multiterminal source encoding with encoder
  breakdown,'' \emph{IEEE Transactions on Information Theory}, vol.~35, no.~2,
  pp. 237--244, 1989.

\bibitem{Extreme_ldp}
P.~Kairouz, S.~Oh, and P.~Viswanath, ``Extremal mechanisms for local
  differential privacy,'' in \emph{Advances in Neural Information Processing
  Systems 27}.\hskip 1em plus 0.5em minus 0.4em\relax Curran Associates, Inc.,
  2014, pp. 2879--2887.

\bibitem{Tianhao}
T.~Wang, J.~Blocki, N.~Li, and S.~Jha, ``Locally differentially private
  protocols for frequency estimation,'' in \emph{26th {USENIX} Security 17},
  2017, pp. 729--745.

\bibitem{Ligenetics}
\BIBentryALTinterwordspacing
N.~Li and M.~Stephens, ``{Modeling Linkage Disequilibrium and Identifying
  Recombination Hotspots Using Single-Nucleotide Polymorphism Data},''
  \emph{Genetics}, vol. 165, no.~4, pp. 2213--2233, 12 2003. [Online].
  Available: \url{https://doi.org/10.1093/genetics/165.4.2213}
\BIBentrySTDinterwordspacing

\bibitem{Song_yun}
\BIBentryALTinterwordspacing
Y.~S. Song, ``{Na Li and Matthew Stephens on Modeling Linkage
  Disequilibrium},'' \emph{Genetics}, vol. 203, no.~3, pp. 1005--1006, 07 2016.
  [Online]. Available: \url{https://doi.org/10.1534/genetics.116.191817}
\BIBentrySTDinterwordspacing

\bibitem{sample}
``Sample genbank record,''
  \url{https://www.ncbi.nlm.nih.gov/genbank/samplerecord/#MoleculeTypeB},
  accessed: 2010-09-30.

\bibitem{doi:https://doi.org/10.1002/047174882X.ch17}
J.~Cover and J.~Thomas, \emph{Elements of Information Theory}, 2005, ch.~17,
  pp. 657--687.

\end{thebibliography}

\end{document}